\documentclass[11pt]{article}
\usepackage[a4paper, margin=1in]{geometry}
\usepackage{authblk}
\usepackage[UKenglish]{babel}
\usepackage{amsmath}
\usepackage{amsfonts}
\usepackage{amsthm}
\usepackage{amssymb}
\usepackage{mathtools}
\mathtoolsset{centercolon}
\usepackage{thmtools, thm-restate}
\usepackage{stmaryrd}
\usepackage{lineno}
\usepackage{caption}

\usepackage{url}    
\usepackage{hyperref}
\usepackage{orcidlink}

\usepackage{tikz}
\usetikzlibrary{decorations.pathreplacing}

\usepackage[backend=biber, style=numeric, sorting=nyt]{biblatex}
\usepackage[capitalise,noabbrev]{cleveref}

\DeclareMathOperator{\supp}{sup}

\DeclareMathOperator{\MOD}{MOD}
\DeclareMathOperator{\AND}{AND}

\DeclareMathOperator{\maxSup}{maxSup}

\newcommand{\Aut}{\mathbf{Aut}}
\newcommand{\Sym}{\mathbf{Sym}}
\newcommand{\Stab}{\mathbf{Stab}}
\newcommand{\StabP}{\mathbf{Stab}^{\bullet}}
\newcommand{\Orb}{\mathbf{Orb}}

\newcommand{\bbN}{\mathbb{N}}
\newcommand{\bbZ}{\mathbb{Z}}

\newcommand{\Bb}{{\mathcal B}}

\newcommand{\Oo}{{\mathcal O}}

\newcommand{\Tt}{{\mathcal T}}

\newcommand{\Xx}{{\mathcal X}}

\usepackage{csquotes}

\usepackage{comment}

\usepackage{todonotes}

\usepackage{enumitem}

\theoremstyle{definition}
\newtheorem{definition}{Definition}[section]

\theoremstyle{plain}
\newtheorem{lemma}[definition]{Lemma}
\newtheorem{corollary}[definition]{Corollary}
\newtheorem{theorem}[definition]{Theorem}

\Crefname{fact}{Fact}{Facts}

\theoremstyle{remark}
\newtheorem{claim}{Claim}[lemma]

\Crefname{claim}{Claim}{Claims}
\newenvironment{claimproof}[1][Proof of Claim]{\begin{proof}[#1] }{ \end{proof}}
\setlist[enumerate, 1]{font=\upshape, noitemsep, nolistsep}
\setlist[enumerate, 2]{font=\upshape, noitemsep, nolistsep}
\setlist[itemize, 1]{noitemsep, nolistsep,font=\upshape}
\setlist[itemize, 2]{noitemsep, nolistsep,font=\upshape}

\theoremstyle{remark}

\Crefname{claim}{Claim}{Claims}

\newcommand{\maxorb}{\operatorname{maxOrb}}
\newcommand{\maxsup}{\operatorname{maxSup}}

\tikzset{
	vertex/.style={draw,circle,fill=gray},
	every node/.style={anchor=center},
	lbl/.style={color=lightgray}
}

\title{Optimal Lower Bounds for Symmetric Modular Circuits}
\author{Benedikt Pago \orcidlink{0000-0001-6377-1230} \thanks{The author was funded by UK Research and Innovation (UKRI) under the UK government’s Horizon Europe funding guarantee: grant number EP/X028259/1.}}
\affil{Department of Computer Science and Technology, University of Cambridge, UK}

\renewcommand{\phi}{\varphi}
\renewcommand{\epsilon}{\varepsilon}

\DeclareMathOperator{\CC}{CC}
\DeclareMathOperator{\ACC}{ACC}
\DeclareMathOperator{\AC}{AC}
\DeclareMathOperator{\MAJ}{MAJ}
\DeclareMathOperator{\OR}{OR}

\begin{document}

\maketitle

\begin{abstract}
	A notorious open question in circuit complexity is whether Boolean operations of arbitrary arity can efficiently be expressed using modular counting gates only. 
	H{\aa}stad's celebrated switching lemma yields exponential lower bounds for the dual problem -- realising modular arithmetic with Boolean gates -- but, a similar lower bound for modular circuits computing the Boolean AND function has remained elusive for almost 30 years. 
	
	We solve this problem for the restricted model of symmetric circuits: We consider $\MOD_m$-circuits of arbitrary depth, and for an arbitrary modulus $m \in \bbN$, and obtain subexponential lower bounds for computing the $n$-ary Boolean AND function, under the assumption that the circuits are syntactically symmetric under all permutations of their $n$ input gates. This lower bound is matched precisely by a construction due to (Idziak, Kawa\l ek, Krzaczkowski, LICS'22), leading to the surprising conclusion that the optimal symmetric circuit size is already achieved with depth $2$. 
	
	Motivated by another construction from (LICS'22), which achieves smaller size at the cost of greater depth, we also prove tight size lower bounds for circuits with a more liberal notion of symmetry characterised by a nested block structure on the input variables.
\end{abstract}

\section{Introduction}
There are many long-standing open questions in circuit complexity that are surprisingly simple in their formulation, yet no solution to them has been found in decades.
Among them there is the following: Can the $n$-ary Boolean $\AND_n$-function be represented with a polynomial-size circuit of depth two using only $\MOD_6$-gates?
A $\MOD_6$-gate is a Boolean gate which takes an arbitrary number of inputs and returns $1$ if and only if their sum modulo $6$ belongs to an accepting set $S \subseteq \bbZ_6$ (where $S$ can vary for different gates). 

More generally, a $\CC_h[m]$-circuit is a depth-$h$ Boolean circuit consisting only of $\MOD_m$-gates. 
The question is: For fixed positive integers $h$ and $m$, what is the asymptotic size of the smallest possible $\CC_h[m]$-circuit that computes $\AND_n$? Is it polynomial in $n$?
In common complexity-theoretic terms, this question is phrased as ``$\CC^0 = \ACC^0$?''.
The class $\CC^0$ consists of all constant depth circuits that only use modular counting gates, while in $\ACC^0$, the circuits may additionally contain Boolean disjunction, conjunction and negation gates.
It is stunning that very little progress has been made despite the fact that this problem was first raised almost 30 years ago \cite{BarringtonST90}. 

As of today, only slightly superlinear lower bounds are known, and only for the number of wires, not gates \cite{chat-lowerbounds}, both for general $\CC_h[m]$-circuits as well as for the most restricted open case $\CC_2[6]$.
The lack of strong lower bounds is even more surprising when one compares it with the dual question: Can modulo counting be performed  efficiently by constant depth circuits using only the Boolean operations $\land, \lor, \neg$? 
This question was famously answered in the negative by H{\aa}stad \cite{hastadphd} already in the 1980s. Indeed, his switching lemma yields an asymptotically optimal exponential lower bound against constant depth Boolean circuits computing the parity function. 
More generally, it has been shown that $\MOD_q$ is not in $\AC^0[p]$ (the extension of $\AC^0$ with $\MOD_p$ gates), whenever $p \neq q$ are distinct primes \cite{Smo87}.
To sum up: We have known for a long time that Boolean operations cannot simulate modulo counting, but it is notoriously hard to settle whether modulo counting can simulate Boolean operations.

Common belief is in favour of a negative answer: Barrington, Straubing and Thérien first conjectured an exponential lower bound in \cite{BarringtonST90}, and since the work by Barrington, Beigel and Rudich in 1994 \cite{BarringtonBR94}, a $2^{\Omega(n^\epsilon)}$ size lower bound for $\CC_h[m]$-circuits computing $\AND_n$ is considered likely (where $0 < \epsilon < 1$). Some refer to this conjecture as the \emph{Exponential Size Hypothesis} (ESH).
For the very restricted setting of two layers with two different prime moduli, that is, 
$\MOD_{q} \circ \MOD_{p}$-circuits where $p \neq q$, an even stronger lower bound has been established unconditionally: 
Such circuits require size $2^{\Omega(n)}$ to compute $\AND_n$ \cite{GrolmuszT00, Grolmusz01, straubing2006note}. 

In this paper we study a circuit restriction of a different nature: Since the function $\AND_n$ is symmetric under all permutations of its inputs, it admits a \emph{symmetric circuit} representation. Such symmetric constructions are typically natural and intuitive, and it is also reasonable to assume that they are not too far from optimal.
Formally, we say that a circuit $C$ is \emph{fully symmetric} (or $\Sym_n$-symmetric) if for every $\pi \in \Sym_n$, there exists an automorphism of $C$ that permutes its input gates $x_1, \dots, x_n$ according to $\pi$.
We completely determine the $\CC_h[m]$-circuit complexity of $\AND_n$, for any depth $h \geq 2$, and any modulus $m$ with at least two prime divisors \footnote{ The case where $m$ is a prime power can be ignored. It is known that then, $\CC_h[m]$-circuits cannot compute $\AND_n$ for arbitrarily large $n$, see \cite[Proposition 2.1]{IdziakKK22LICS}} , as far as fully symmetric circuits are concerned.

\begin{theorem}
	\label{thm:main1}
	Fix an integer $m \geq 6$ with at least $r \geq 2$ distinct prime divisors.
	For every family of $\Sym_n$-symmetric $\MOD_m$-circuits $(C_n)_{n \in \bbN}$ computing the Boolean function $\AND_n$, the circuit size is at least $|C_n| \geq 2^{\Omega(n^{1/r} \cdot \log n)}$. There exists a family of $\CC_2[m]$-circuits that achieves this bound.
\end{theorem}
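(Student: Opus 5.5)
Two things must be shown: the lower bound $|C_n| \geq 2^{\Omega(n^{1/r}\log n)}$ for every $\Sym_n$-symmetric $\MOD_m$-circuit family, and a matching $\CC_2[m]$ construction. For the upper bound we take the construction of Idziak, Kawa\l ek and Krzaczkowski \cite{IdziakKK22LICS} and check that it can be made symmetric. Write $m = p_1^{\alpha_1}\cdots p_r^{\alpha_r}$. The top gate is a $\MOD_m$-gate whose inputs are bottom gates indexed by all $d$-subsets $T \subseteq [n]$ with $d \approx n^{1/r}$. Each bottom gate computes a mod-$p_i$ test of the restricted sum $\sum_{j\in T} x_j$, so that the total count realises a product of symmetric polynomials of degree about $n^{1/r}$ over the various primes. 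Taking all $d$-subsets and using the same accepting sets on each orbit makes the circuit $\Sym_n$-symmetric automatically. Its size is $\binom{n}{n^{1/r}} = 2^{\Theta(n^{1/r}\log n)}$. The remaining task is to verify that the count, read modulo $m$, vanishes exactly off the all-ones input. This follows from Lucas' theorem applied to $\binom{|x|}{d}$ modulo powers of each $p_i$, combined via the CRT.

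For the lower bound, the plan is a support/orbit argument in the style of symmetric-circuit lower bounds (Anderson--Dawar). Consider each gate $g$ and its stabiliser in $\Sym_n$. If the circuit has size $s$, every orbit has size at most $s$. Hence every gate has a \emph{support} $\supp(g) \subseteq [n]$ of size $k = O(\log s/\log n)$ such that $g$ is fixed by all permutations fixing $\supp(g)$ pointwise; this holds as long as $s < 2^{n^{1-\epsilon}}$, which we may assume. Assume for contradiction that $s = 2^{o(n^{1/r}\log n)}$, so that $k = o(n^{1/r})$. Now define a projection. Restrict to inputs in which variables outside a small set $S$ are all equal, or more generally inputs that are constant on large blocks. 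By induction on depth, each gate's output is then a function of its support variables together with the block counts. A child orbit under $\Stab(\supp(g))$ feeds into $g$ with a multiplicity that is a binomial-coefficient-like polynomial in the counts. The key step is to show by induction that every gate computes a function of the form $x \mapsto [P(x) \in S]$, where $P$, taken modulo prime powers, is expressed through binomials $\binom{|x|}{j}$ with $j \leq k$. Equivalently, the function of $|x|$ computed by $C_n$ is periodic modulo $\prod_i p_i^{e_i}$ with each $p_i^{e_i} = O(k)$.

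From this we derive the contradiction. $\AND_n$, viewed as a function of the Hamming weight $|x|$ on a suitable symmetric subfamily, is nonzero only at $w=n$. However, any function determined by the residues of $\binom{w}{j}$ for $j\le k$ modulo $p_i^{\alpha_i}$ is periodic in $w$ with period $\prod_i p_i^{\lceil\log_{p_i}k\rceil+c} = O(k^r)$. If $k^r = o(n)$, then $n$ and $n-\text{period}$ give equal outputs, which is a contradiction. This forces $k = \Omega(n^{1/r})$, and hence $\log s \geq \Omega(n^{1/r}\log n)$.

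I expect the main obstacle to be the inductive step through arbitrary depth. Compositions of $\MOD_m$-gates across layers can mix moduli, and the needed invariant is that only the support size, not the depth, governs the period. The plan is to carry the invariant \enquote{every gate is a function of $(x|_{\supp(g)},\ |x| \bmod N_k)$} with $N_k = \prod_i p_i^{\lceil\log_{p_i}(k+1)\rceil+\alpha_i}$. We would show it is preserved because the number of child gates in each $\Stab(\supp(g))$-orbit that see a given configuration is a binomial coefficient in the counts of at most $k$ chosen positions, and Lucas/Kummer gives the required periodicity modulo $p_i^{\alpha_i}$.
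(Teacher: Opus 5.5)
Your lower-bound plan is the paper's argument: small orbits force small supports, and a depth-independent induction shows that, for each assignment to $\supp(g)$, every gate is periodic in the number of ones outside $\supp(g)$, because the children in a $\StabP(\supp(g))$-orbit showing a fixed pattern on their supports number $\lambda\binom{w}{a}\binom{n-|\supp(g)|-w}{b}$ and binomial coefficients are periodic modulo $m$, so $\AND_n$ is excluded by aperiodicity (your $N_k$ is only a factor $\prod_i p_i$ larger than the paper's $q(m,k)$, and the restriction to inputs constant on blocks is unnecessary in the fully symmetric case). For the upper bound the paper likewise just invokes \cite[Proposition 3.1]{IdziakKK22LICS}, but your paraphrase of that construction is off: the output gate $\MOD_m^{\{0\}}$ tests the CRT-weighted sum $\sum_j \frac{m}{p_j}\, t_{p_j^{\nu_j}}$ (a sum, not a product), and each $t_{p_j^{\nu_j}}$ is a $\bbZ_{pq}$-expression whose inner tests are on $\bbZ_p$-weighted sums modulo a prime $p\neq p_j$, with the outer count read modulo $q=p_j$ (mod-$p_j$ tests read modulo $p_j$ would only yield functions of period $p_j$).
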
	
The new contribution is the lower bound; the upper bound was presented by Idziak, Kawa\l ek, Krzaczkowski in 2022 \cite[Proposition 3.1]{IdziakKK22LICS}, and independently by Chapman and Williams \cite{ChapmanW22}, based on an idea from \cite{BarringtonBR94}. For completeness, we review the upper bound in Section \ref{sec:upperbound}. 
The most surprising insight from \cref{thm:main1} is the discovery that the natural and elegant depth-$2$ construction from \cite[Proposition 3.1]{IdziakKK22LICS} is in fact optimal under the assumption of $\Sym_n$-symmetry. 
Thus, one can never gain savings in the asymptotic size by increasing circuit depth beyond $2$, except by breaking symmetries. This answers a question implicitly posed by Kawa\l ek and Weiß in \cite{kawalekweiss25}, which is detailed below. 

Interestingly, a more involved construction from \cite{IdziakKK22LICS} demonstrates that (partially) breaking symmetries does indeed allow to build smaller circuits, at the expense of greater depth: For every constant $h > 2$, there exist $\CC_h[m]$-circuits for $\AND_n$ whose size is strictly smaller than $2^{\Theta(n^{1/r} \cdot \log n)}$, and the savings in size increase with $h$. 

Examining this depth-$h$ construction \cite[Proposition 4.3]{IdziakKK22LICS}, one finds that it is not $\Sym_n$-symmetric (which it cannot be by \cref{thm:main1}), but respects a smaller group of symmetries, that we call \emph{nested block symmetry}. 
This notion of symmetry is naturally exhibited by recursive circuit constructions that follow a divide-and-conquer approach; another such example can be found in \cite{ChapmanW22}. 

The symmetry group is best described as the automorphism group of a tree whose leaves correspond to the input variables $x_1, \dots, x_n$ of the circuit. 
We formalise this idea by fixing an $h$-tuple $\boldsymbol{k} = (k_1(n), \dots, k_h(n))$ of functions in $n$ such that $\prod_{i \in [h]} k_i(n) = n$, which defines for each $n \in \bbN$ a tree $\Tt_n^{\boldsymbol{k}}$: This tree has $h$ levels and on the $i$-th level, every node has $k_i(n)$ many children. The leaves of the tree are identified with the variables $x_1, \dots, x_n$ (for this to be possible, we need that the product over the $k_i(n)$ is $n$). The automorphism group $\Aut(\Tt_n^{\boldsymbol{k}})$ is the group of all permutations of the nodes of $\Tt_n^{\boldsymbol{k}}$ that preserve the edges and non-edges of the tree. 
A circuit is $\Aut(\Tt_n^{\boldsymbol{k}})$-symmetric circuit if for every $\pi \in \Aut(\Tt_n^{\boldsymbol{k}})$, there is an automorphism of the circuit that permutes the inputs $x_1, \dots, x_n$ precisely as $\pi$ permutes the leaves of $\Tt_n^{\boldsymbol{k}}$ (see Section \ref{sec:preliminaries} for details).
Note that not for every permutation $\pi \in \Sym_n$, there is a tree automorphism in $\Aut(\Tt_n^{\boldsymbol{k}})$ that acts like $\pi$ on the leaves. Hence, $\Aut(\Tt_n^{\boldsymbol{k}})$-symmetry is a laxer requirement of circuits than $\Sym_n$-symmetry.

Our next result extends \cref{thm:main1} to the nested block symmetric setting and pins down 
the asymptotic circuit size exactly, depending only on the choice of symmetry group $\Aut(\Tt_n^{\boldsymbol{k}})$. 
\cref{thm:main1} is in fact a special case of \cref{thm:main2}, when $\Tt_n^{\boldsymbol k}$ is taken to be the depth-$1$ tree with $n$ leaves. Nevertheless, \cref{thm:main1} is important enough to be stated separately, and its proof is more instructive.
\begin{theorem}
	\label{thm:main2}
	Fix an integer $m \geq 6$ with at least $r \geq 2$ distinct prime divisors.
	Moreover, fix a constant depth $h \in \bbN$, and a tuple $\boldsymbol{k} = (k_1(n), \dots, k_h(n))$ of block sizes such that $\prod_i k_i(n) =n$ for all $n \in \bbN$. Assume that $k_i(n) > 8$ for each $i \in [h]$ and all large enough $n \in \bbN$.
	Let $k_{\text{max}}(n) \coloneqq \max_{i \in [h]} k_i(n)$.\\
	Every $\Aut(\Tt_n^{\boldsymbol k})$-symmetric $\MOD_m$-circuit that computes the Boolean function $\AND_n$ has size at least 
	\[
	2^{\Omega(k_{\text{max}}(n)^{1/r} \cdot \log(k_{\text{max}}(n)))},
	\]
	and there exists a $\CC_{2h}[m]$-circuit that achieves this bound.
\end{theorem}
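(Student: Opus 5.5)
The plan has two parts, and the lower bound is where the work lies. For the upper bound, I would build the circuit recursively along the tree $\Tt_n^{\boldsymbol k}$. At level $i$, every node has $k_i(n)$ children, each already carrying a subcircuit that computes the $\AND$ of its subtree. The $\AND$ of these $k_i(n)$ values is then computed by the symmetric depth-$2$ construction of \cite[Proposition 3.1]{IdziakKK22LICS} (reviewed in Section~\ref{sec:upperbound}). That construction has size $2^{O(k_i^{1/r}\log k_i)}$ and treats its inputs symmetrically.

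To keep the depth at $2h$ rather than letting composition blow up, I would use that the output gate of each depth-$2$ block is a $\MOD_m$-gate whose output is fed, possibly negated, into the next block. Negation is absorbed by changing accepting sets, so stacking $h$ blocks gives depth $2h$. Summed over the tree, the total size is at most $n\cdot\sum_i 2^{O(k_i^{1/r}\log k_i)}$. This is $2^{O(k_{\max}^{1/r}\log k_{\max})}$, because $n=\prod k_i\le k_{\max}^h$ with $h$ constant, so $\log n = O(\log k_{\max})$. Symmetry under $\Aut(\Tt_n^{\boldsymbol k})$ holds because each block is $\Sym_{k_i}$-symmetric and the blocks are indexed by tree nodes.

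For the lower bound, I would reduce to the single-level argument of \cref{thm:main1}. Let $j$ be a level with $k_j=k_{\max}$. The plan is to restrict the circuit so that it computes $\AND_{k_j}$ on variables indexed by the level-$j$ children of one fixed node, with the induced symmetry containing $\Sym_{k_j}$. Concretely:
\begin{enumerate}
\item Fix all variables outside one chosen level-$(j-1)$ subtree to $1$. Inside that subtree, identify all leaves under the same level-$j$ child into a single new variable $y_1,\dots,y_{k_j}$.
\item Check that the identification preserves $\MOD_m$-circuits without increasing size: wires from identical inputs just become parallel wires, i.e.\ multiplicities, and constants shift the accepting sets.
\item The resulting circuit computes $\AND_{k_j}(y)$.
\item The pointwise stabiliser in $\Aut(\Tt_n^{\boldsymbol k})$ of the outside of the subtree contains permutations of the $k_j$ children, extended arbitrarily but consistently below them. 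So the restricted circuit is $\Sym_{k_j}$-symmetric.
\end{enumerate}
Applying \cref{thm:main1} then gives size $2^{\Omega(k_j^{1/r}\log k_j)}$. The same bound transfers to the original circuit, since restriction does not increase size.

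The main obstacle is making this restriction legitimate: \cref{thm:main1}'s proof may rely on the circuit being a genuine circuit over distinct inputs with automorphisms acting on gates, not on a multigraph with constants. I would first try to formulate the lower bound of \cref{thm:main1} for circuits with input multiplicities and constant inputs. Its core is a support/orbit argument bounding $\maxorb$ against $\maxsup$. I would verify that the automorphisms of the original circuit in the stabiliser group still act on the collapsed circuit; they do, since they respect the identification classes.

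If the black-box reduction fails because collapsing merges gates in a way that breaks counting of orbits, the fallback is to rerun the support-based argument directly for $\Aut(\Tt_n^{\boldsymbol k})$. In that version, supports are measured by how many level-$j$ blocks they touch. Each gate with support touching few blocks has an orbit of size $\binom{k_j}{s}$-like. Finally, a gate of small support cannot detect the all-ones versus one-zero distinction in more than $k_j^{1/r}$-many blocks, following the degree argument of the symmetric case.
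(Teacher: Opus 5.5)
Your upper bound is the paper's construction (\cref{thm:upperbound2}): one copy of the depth-$2$ circuit of \cref{thm:upperbound1} per block, stacked along $\Tt_n^{\boldsymbol k}$, with the same size count. No negations are needed, because that circuit's output gate $\MOD_m^{\{0\}}$ already outputs the $\AND$ of its inputs.

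For the lower bound your route is genuinely different from the paper's, and it is correct. The paper does not reduce the lower bound to \cref{thm:main1}. It re-runs the support/periodicity induction inside $\Aut(\Tt_n^{\boldsymbol k})$, using blockwise supports $\supp_B(g)$, $B$-periods (\cref{def:nestedBlockPeriod}), the nested induction \cref{lem:inductionPeriodLengthNested}, and \cref{lem:supportSizes}(2) to turn $\maxsup_B(C_n) \geq (|B|/m)^{1/r}$ into an orbit bound. You instead restrict once, at the level of the circuit, and then quote the fully symmetric result.

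The obstacle you flag is not real:
\begin{itemize}
\item Multi-edges are already part of the paper's model.
\item A constant $1$ entering a $\MOD_m^R$-gate along $w$ wires is absorbed by replacing $R$ with $R-w$.
\item Gates all of whose inputs become constant can be deleted bottom-up by the same shift in their parents, so no unlabelled source gates remain.
\end{itemize}
For the symmetry, take (in the paper's indexing) a level-$j$ node $u$ with $k_j(n)=k_{\text{max}}(n)$ children on level $j-1$; your level labels are shifted by one, harmlessly. Every permutation $\pi$ of these children lifts to some $\sigma \in \Aut(\Tt_n^{\boldsymbol k})$ that is the identity outside the subtree of $u$ and maps $L_0(v)$ onto $L_0(\pi(v))$. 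Its circuit automorphism therefore preserves the deleted constant part, the shifts $w$, and the summed multiplicities into each merged input $y_v$, so it descends to the collapsed circuit.

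This gives a genuine $\Sym_{k_{\text{max}}}$-symmetric $\MOD_m$-circuit for $\AND_{k_{\text{max}}}$ with at most $|V(C_n)|$ gates. After \cref{lem:rigidification}, the explicit bound of \cref{cor:sizeLowerBound} yields $|V(C_n)| \geq \binom{k_{\text{max}}}{(k_{\text{max}}/m)^{1/r}}$, exactly the bound of \cref{cor:sizeLowerBoundNested}. Quote this per-size bound rather than the asymptotic family form of \cref{thm:main1}, since the values $k_{\text{max}}(n)$ need not form a family indexed by $\bbN$. Your fallback plan is thus unnecessary.

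Your route is shorter, and it avoids re-verifying the nested induction, which the paper only sketches. It needs only $k_{\text{max}}(n) > 8$, and it uses only the symmetries permuting the children of one maximal-arity node, so it proves the bound under a weaker symmetry hypothesis. The paper's route gives the finer statement that every block $B$ carries a large blockwise support. It also builds a support toolkit for iterated wreath products, which the author presents as a contribution in itself. Its proof of \cref{lem:supportSizes}(2) performs essentially your collapse $C_B$, but only to transfer the group action, not the computed function.
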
	
The upper bound is achieved by applying the construction from \cref{thm:main1} in a recursive fashion, which requires $2$ circuit layers for each nested block of the symmetry group. An immediate consequence of the theorem is that one cannot gain savings in size by varying the block sizes on different levels of the symmetry group. Thus, the symmetric circuit complexity is just controlled by the choice of $h$ in this setting:

\begin{corollary}
	\label{cor:optimalNestedSize}
	For $h \in \bbN$ fixed, the choice of $\boldsymbol{k}$ which optimizes the size of an $\Aut(\Tt_n^{\boldsymbol k})$-symmetric $\MOD_m$-circuit for $\AND_n$ is $k_1(n) = k_2(n) = \dots = k_h(n) = n^{1/h}$. For this $\boldsymbol{k}$, there exists such a depth-$2h$ circuit of size $2^{\Theta(n^{1/(h \cdot r)} \cdot \log n)}$.
\end{corollary}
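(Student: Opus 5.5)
The corollary is an optimisation over $\boldsymbol k$ using \cref{thm:main2}. \Cref{thm:main2} gives two bounds for every admissible $\boldsymbol k$ with all $k_i(n) > 8$. Any $\Aut(\Tt_n^{\boldsymbol k})$-symmetric $\MOD_m$-circuit for $\AND_n$ has size at least $2^{\Omega(k_{\max}^{1/r}\log k_{\max})}$. There is also a $\CC_{2h}[m]$-circuit of size $2^{O(k_{\max}^{1/r}\log k_{\max})}$. Hence the optimal size, as a function of $\boldsymbol k$, is $2^{\Theta(g(k_{\max}))}$ with $g(x) = x^{1/r}\log x$. Since $g$ is increasing for $x \geq 1$, minimising the circuit size amounts to minimising $k_{\max}(n)$.

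The key step is the constraint $\prod_{i\in[h]} k_i(n) = n$. Because each factor is at most $k_{\max}(n)$, we get $n \leq k_{\max}(n)^h$, so $k_{\max}(n) \geq n^{1/h}$. Equality holds exactly when $k_1(n) = \dots = k_h(n) = n^{1/h}$. So the balanced choice minimises $k_{\max}$ and thus attains the optimal size up to the constants hidden in $\Theta$; the optimality claim is meant in this asymptotic sense. One should note that $n^{1/h}$ must be an integer for the tree to exist. I would restrict to such $n$, or remark that for other $n$ taking block sizes $\lceil n^{1/h}\rceil$ or $\lfloor n^{1/h}\rfloor$ changes $k_{\max}$ only by a constant factor.

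Substituting $k_{\max} = n^{1/h}$ gives the size
\[
2^{\Theta(n^{1/(hr)}\cdot \log n^{1/h})} = 2^{\Theta(n^{1/(h r)}\log n)},
\]
since $h$ is a constant. The upper-bound half of \cref{thm:main2} supplies the depth-$2h$ circuit of this size. The hypothesis $k_i(n) > 8$ holds for large $n$ because $n^{1/h} \to \infty$.

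The only real subtlety is quantifiers. The comparison must be between families $\boldsymbol k$, and the $\Omega$ constant in \cref{thm:main2} must not depend on $\boldsymbol k$ in a way that breaks the comparison. I would state that the constant depends only on $m$ and $h$, which is what the proof of \cref{thm:main2} yields. With that, any $\boldsymbol k$ with $k_{\max}(n) \geq c\,n^{1/h}$ forces size at least $2^{\Omega(n^{1/(hr)}\log n)}$, matching the balanced construction.
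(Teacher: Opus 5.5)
Your proposal is correct and follows the same argument as the paper. The product constraint forces $k_{\text{max}}(n) \geq n^{1/h}$, with equality exactly in the balanced case, and this is plugged into the matching bounds of \cref{thm:main2}. Your extra remarks on monotonicity of $x^{1/r}\log x$, integrality of $n^{1/h}$, and the uniformity of the hidden constants are sound refinements that the paper's one-line proof leaves implicit.
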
	
\begin{proof}
	Since $\prod_{i \in [h]} k_i(n) = n$ for all $n \in \bbN$, the smallest value that $k_{\text{max}}(n)$ can attain is $n^{1/h}$, in case that all $k_i(n)$ are equal.
\end{proof}	
Strikingly, the $\Aut(\Tt_n^{\boldsymbol{k}})$-symmetric construction from \cite[Proposition 4.3]{IdziakKK22LICS} is slightly larger than this. It achieves only size (approximately) $2^{\Theta(n^{1/(h \cdot (r-1))} \cdot \log n)}$ instead of $2^{\Theta(n^{1/(h \cdot r)} \cdot \log n)}$ as in \cref{cor:optimalNestedSize}. 
This is because \cite[Proposition 4.3]{IdziakKK22LICS} aims at an optimized depth: Indeed, the authors manage to compress the circuits down to only $h+1$ layers, which is essentially just one layer per nesting depth of the symmetry group and likely optimal under $\Aut(\Tt_n^{\boldsymbol{k}})$-symmetry. In view of our \cref{cor:optimalNestedSize}, we consider it an important open problem to improve the size of this depth-$(h+1)$ construction so that it matches our lower bound, or to show that this is impossible. 

Our results confirm the $2^{\Omega(n^\epsilon)}$ lower bound conjectured by ESH for fully symmetric and nested block symmetric circuits. Since much of the literature focuses on the $\AND_n$-function, we have chosen to do the same, but our results can be proved for every symmetric function that is \emph{aperiodic} (for a definition, see Section \ref{sec:periodicity}), such as $\OR_n$ or $\MAJ_n$.\\ 

Our work raises the following immediate \textbf{open questions:}\\
\vspace*{-1em}
\begin{enumerate}
	\item Is there a \emph{size-depth tradeoff} for nested block symmetric circuits?\\ The current knowledge suggests this: We have size-optimal constructions that are at least a factor of $2$ away from the optimal depth (\cref{cor:optimalNestedSize}), and we have (likely) depth-optimal circuits of slightly suboptimal size \cite[Proposition 4.3]{IdziakKK22LICS}.
	\item What is the smallest possible symmetry group for which our method works? Do lower bounds for symmetric circuits inform the search for general (non-symmetric) lower bounds? For example, can every non-symmetric $\CC^0$-circuit for $\AND_n$ be symmetrized at only a small cost?
\end{enumerate}


\paragraph*{Related work on modular circuits }
The complexity of low-depth $\CC^0$-circuits has particular relevance because of its various connections to other questions, of which we can only mention a few here.
For example, it is known that strong lower bounds against low-depth $\CC^0$-circuits would imply faster algorithms for solving equations over solvable groups \cite{IdziakKKW22-icalp}, for circuit satisfiability problems over algebras \cite{IdziakKK20, IdziakKK25}, and for constraint satisfaction problems with global modular constraints \cite{brakensiek2022}.
Another surprising application is in coding theory. Techniques from the construction of small $\CC^0$-circuits for Boolean functions have been used to obtain explicit Ramsey-style graphs \cite{Gopalan14, grolmusz2000}. These are crucial for example in the design of particularly good locally decodable error-correcting codes \cite{Efremenko12, zev11} and private information retrieval schemes \cite{zeev15}.

\paragraph*{Related work on symmetric circuits}
The idea to study symmetric circuits to facilitate lower bounds has been employed successfully in many different contexts in recent years.
To name only a few examples, there are symmetry-based lower bounds for constant depth formulas with Boolean and majority gates \cite{HeR23}, lower bounds for uniform symmetric Boolean (threshold) circuit families via a connection to fixed-point logics from finite model theory \cite{anderson_symmetric_2017}, and a recent research strand on symmetry in algebraic complexity theory \cite{DawarW20, dawar2021lower, SymHomPolynomials, STOCpaper}, notably proving the permanent polynomials to be exponentially hard for symmetric circuits.

For our purposes, the most relevant related work is a very recent paper by Kawa\l ek and Weiß \cite{kawalekweiss25}. They seem to be the first to consider symmetry in the context of $\CC^0$, but with a scope limited to $\MOD_q \circ \MOD_p \circ \AND_d$-circuits (where $d$ is some fixed integer), and $\Sym_n$ as the symmetry group. 
Our lower bounds cover that case (when $m$ has $p$ and $q$ as prime divisors) and apply to a much more general circuit model. In particular, Kawa\l ek and Weiß suggest that to obtain smaller symmetric circuits for $\AND_n$, one may have to increase the depth. Our \cref{thm:main1} surprisingly refutes this, and shows that a depth greater than $2$ yields no size improvement, unless also the symmetry constraint is relaxed as in \cref{thm:main2}. 
 
 \paragraph*{Our techniques}
The key challenge we solve is to overcome the obstacles that prevent the technique in
 \cite{kawalekweiss25} from generalising to symmetric circuits of arbitrary depth and with $\MOD_m$-gates for composite numbers $m$. We accomplish this by using a very different technical framework, based on the group-theoretic notion of \emph{supports}. This is the central tool in the aforementioned articles by Dawar and others, and we use it in an inductive approach vaguely similar to the one in \cite{SymHomPolynomials}.
 The lower bound for $\AND_n$ is then based on a periodicity argument as in \cite{kawalekweiss25}:
 Our main technical result, \cref{lem:inductionPeriodLength}, shows that symmetric $\MOD_m$-circuits of small (support) size necessarily compute periodic functions -- but $\AND_n$ is aperiodic.
 
 Another contribution of this work is the extension of the group-theoretic toolkit for dealing with the smaller symmetry groups $\Aut(\Tt_n^{\boldsymbol{k}})$. Thus far, the literature has mostly focused on direct products of symmetric or alternating groups, which are technically easier to handle. 
 
\paragraph*{Acknowledgements}
I am grateful to Piotr Kawa\l ek for posing this question to me, and for his invaluable help in learning and presenting the research context.

\section{Preliminaries}
\label{sec:preliminaries}
We write $[n] = \{1, \dots, n\}$.
For a tuple $\bar{x} = (x_1, \dots, x_n)$ indexed by $[n]$, and a subset $I \subseteq [n]$ of the indices, we use the notation $\bar{x}_I$ to refer to the subtuple $(x_i)_{i \in I}$ of $\bar{x}$ consisting of the entries with indices in $I$. 
In this notation, we also sometimes merge subtuples:
For $I, J \subseteq [n]$, we denote by $\bar{x}_I\bar{x}_J$ the subtuple $\bar{x}_{I \cup J}$ of $\bar{x}$. In all these cases, the ordering of the respective subtuple of $\bar{x}$ is inherited from $\bar{x}$.

The \emph{depth} of a rooted tree or DAG is the maximum number of edges on any path from the root to a leaf.

\subsection{Permutation groups and supports}
\label{sec:permGroups}
We write $\Sym_n$ for the symmetric group acting on the set $[n]$, and if $A$ is a set, then $\Sym(A)$ denotes the symmetric group acting on $A$.
For $S \subseteq [n]$, we write $\Stab(S) \leq \Sym_n$ for the setwise stabiliser subgroup of $S$ in $\Sym_n$, and $\StabP(S) \leq \Sym_n$ for the pointwise stabiliser of $S$. The setwise stabiliser is the subgroup of permutations $\pi$ such that $\pi(S) = S$, whereas the pointwise stabiliser is the subgroup consisting of all $\pi \in \Sym_n$ such that $\pi(s) = s$ for every $s \in S$.
This notion can be restricted to subgroups $\Gamma \leq \Sym_n$ as well: $\Stab_\Gamma(S) \leq \Gamma$ and $\StabP_\Gamma(S) \leq \Gamma$ denote the respective subgroups of $\Gamma$ that fix $S$ setwise or pointwise.

For a group $\Gamma \leq \Sym_n$, and an element $a \in [n]$, the \emph{$\Gamma$-orbit} of $a$ is the set $\Orb_\Gamma(a) \coloneqq \{ \pi(a) \mid \pi \in \Gamma  \}$ of all possible images of $a$.
By the well-known Orbit-Stabiliser Theorem, $|\Orb_\Gamma(a)| = \frac{|\Gamma|}{|\Stab_\Gamma(a)|}$.
The notion of an orbit also applies to subsets of $[n]$, or more generally, other objects, such as gates of a circuit, that $\Gamma$ may be acting on.  

A set $S \subseteq [n]$ is a \emph{support} of a group $\Gamma \leq \Sym_n$ if $\StabP(S) \leq \Gamma$. It is known that every $\Gamma \leq \Sym_n$ that has a support of size $< n/2$ has a \emph{unique minimal} support, denoted $\supp(\Gamma)$ \cite[Lemma 26]{blass1999choiceless}.
This notion will be of central importance in our analysis of symmetric circuits:
In symmetric circuits of bounded size, also the minimal supports of the stabiliser groups of the gates will have bounded size. The function computed by a gate can then be described in terms of this small support.

\paragraph*{Nested symmetric groups}
Besides $\Sym_n$, we deal with nested symmetric groups. These are best described as the automorphism groups of rooted trees. 
Fix a depth $h \in \bbN$ and let $\boldsymbol{k} = (k_1(n), \dots, k_h(n))$, where for each $i \in [h]$, $k_i \colon \bbN \to \bbN$ is a function such that $\prod_{i \in [h]} k_i(n) = n$ for every $n \in \bbN$. The tuple $\boldsymbol{k}$ defines a family of rooted symmetric trees, one for each $n$: The $n$-th tree $\Tt_n^{\boldsymbol{k}}$ has $h$ levels, and $k_i(n)$ is the number of children of every node on level $i$. 
The $n$ leaf nodes are on level $0$, and the root of the tree is the only node on level $h$. The automorphism group of $\Tt_n^{\boldsymbol{k}}$ is denoted $\Aut(\Tt_n^{\boldsymbol{k}})$. It acts on the vertex set $V(\Tt_n^{\boldsymbol{k}})$. Each $\pi \in \Aut(\Tt_n^{\boldsymbol{k}})$ maps every subtree of $\Tt_n^{\boldsymbol{k}}$ to a subtree rooted at the same level, possibly permuting subtrees further down. Because in each level, all nodes have the same number of children, every node can be mapped to every other node on the same level by an automorphism in $\Aut(\Tt_n^{\boldsymbol{k}})$.
Formally, $\Aut(\Tt_n^{\boldsymbol{k}})$ is isomorphic to an iterated wreath product of symmetric groups, defined inductively as follows.
Let $T$ be a subtree of $\Tt_n^{\boldsymbol{k}}$ of depth $1$, having $k_1(n)$ leaves. Then $\Aut(T) \cong \Sym_{k_1(n)}$. Now assume $T$ is a subtree of $\Tt_n^{\boldsymbol{k}}$ of depth $i > 1$. Then the root $v$ of $T$ has $k_i(n)$ children, and $\Aut(T) \cong \Aut(T') \wr \Sym_{k_i(n)}$, where $T'$ is isomorphic to the subtree rooted at any/every child of $v$.

The group $\Aut(\Tt_n^{\boldsymbol{k}})$ embeds into $\Sym_n$ by identifying every $\pi \in \Aut(\Tt_n^{\boldsymbol{k}})$ with the permutation that it induces on the leaves of the tree. Note that every $\sigma \in \Sym_n$ is induced by at most one $\pi \in \Aut(\Tt_n^{\boldsymbol{k}})$. 

For the analysis, it will be helpful to speak of the action of $\Aut(\Tt_n^{\boldsymbol{k}})$ on \emph{blocks}: For a node $v \in V(\Tt_n^{\boldsymbol{k}}), B(v) \subseteq V(\Tt_n^{\boldsymbol{k}})$ denotes the block of $v$, by which we mean the set of all nodes (including $v$) that have the same parent as $v$. Every $\pi \in \Aut(\Tt_n^{\boldsymbol{k}})$ stabilises a block setwise or moves it to another block on the same level. The set of all blocks is denoted
\[
\Bb(\Tt_n^{\boldsymbol{k}}) \coloneqq \{ B(v) \mid v \in V(\Tt_n^{\boldsymbol{k}})  \}.
\]

For $i \in [h] \cup \{0\}$, we denote by $L_i \subseteq V(\Tt^{\boldsymbol{k}}_n)$ the nodes in the $i$-th level of the tree. For a set of nodes $W \subseteq V(\Tt^{\boldsymbol{k}}_n)$, we denote by $L_0(W) \subseteq L_0$ the set of all leaves $w$ that have an ancestor (i.e.\ node on a path from the root to $w$) in $W$.

\paragraph*{Supports for nested symmetric groups}
Any group $\Gamma \leq \Aut(\Tt^{\boldsymbol{k}}_n)$ can be embedded into $\Sym_n$ (via its action on the leaves) and thus admits a notion of support in the sense described previously. 
However, when $\Gamma$ is explicitly presented as a subgroup of $\Aut(\Tt^{\boldsymbol{k}}_n)$, we use a more refined notion, that we call \emph{blockwise support}. It breaks up the support according to the different copies of symmetric groups in $\Aut(\Tt^{\boldsymbol{k}}_n)$. 
For a block $B \in \Bb(\Tt_n^{\boldsymbol{k}})$, let $\Gamma|_B \leq \Sym(B)$ denote the permutation group on $B$ consisting of all $\pi \in \Sym(B)$ such that there exists a $\sigma \in \Gamma$ that fixes $B$ setwise and satisfies $\sigma|_B = \pi$, i.e.\ $\sigma$ permutes the nodes in $B$ like $\pi$ does. 

A set $S \subseteq B$ is a \emph{$B$-support} of $\Gamma \leq \Aut(\Tt_n^{\boldsymbol{k}})$ if $\StabP_{\Sym(B)}(S) \leq \Gamma|_B$. In other words, this means that for every $\pi \in \StabP_{\Sym(B)}(S)$, there exists at least one permutation $\sigma \in \Gamma$ that acts like $\pi$ on $B$.

\subsection{(Symmetric) modular circuits}

A circuit is a DAG, possibly with multiedges, and a single designated root. Its nodes are called gates and its edges are also called wires. Edges are directed from a gate where a value is computed towards the next gate that uses this value as an input. The nodes with no incoming edges are called input gates, and each input gate $g$ is labelled with a variable $\ell(g) \in \{x_1, \dots, x_n\}$, where $n$ is the arity of the function to be computed by the circuit. Each internal gate is labelled with an operation. In this paper, we only consider circuits with modular counting gates: For $m \in \bbN$, and $R \subseteq \bbZ_m$, the operation $\MOD_m^R$ is of arbitrary fan-in $k$, and satisfies
\begin{align*}
	\MOD_m^R(x_1, \dots, x_k) = \begin{cases}
		1 & \text{ if } (\sum_{i \in [k]} x_i \mod m) \in R\\
		0 & \text{ otherwise}
	\end{cases}
\end{align*}
A \emph{$\MOD_m$-circuit} is one where every internal gate is labelled with the operation $\MOD_m^R(x_1, \dots, x_k)$ for an arbitrary $R \subseteq \bbZ_m$. The computation result is the Boolean value that is computed at the root. 

We assume throughout that for every variable $x_i$, there exists exactly one input gate with label $x_i$. This is not a restriction since distinct input gates labelled with the same variable can always be identified.

The \emph{size} of a circuit $C$ is $|C| \coloneqq |V(C)| + |E(C)|$, the total number of gates plus wires, counted with multiplicities.
For a gate $g \in V(C)$, we write $gE(C) \subseteq V(C)$ for the set of its children, which are the gates $h$ whose value is fed into the gate $g$, i.e.\ such that $(h,g) \in E(C)$. For a gate $g$, we also write $g(\bar{x})$ to denote the function from $\{x_1, \dots, x_n\}$ to $\{0,1\}$ that is computed by the subcircuit of $C$ rooted at $g$.

\paragraph*{Symmetric circuits}
Let $n \in \bbN$, and $\Gamma \leq \Sym_n$ be a subgroup of $\Sym_n$.
A $\MOD_m$-circuit $C$ is called \emph{$\Gamma$-symmetric} if its set of input variables is $\{x_1, \dots, x_n\}$ and every $\pi \in \Gamma$ acting on the input gates \emph{extends to an automorphism} of $C$. That is, for every $\pi \in \Gamma$, there exists a $\sigma \in \Sym(V(C))$ such that $\pi(\ell(g)) = \ell(\sigma(g))$ for all inputs gates $g \in V(C)$, and $\sigma$ is an automorphism of $C$. This means that for every internal gate $g$, $\ell(g) = \ell(\sigma(g))$ (i.e., the gates compute the same operation $\MOD_m^R$, for the same $R$), and for any two gates $g,h \in V(C)$, the multiplicity of the directed edge $(g,h)$ is the same as of $(\sigma(g), \sigma(h))$.
We call $C$ \emph{rigid} if for every $\pi \in \Gamma$, the circuit automorphism $\sigma$ extending $\pi$ is unique. This is equivalent to $C$ not having any non-trivial automorphism that fixes every input gate.
Fortunately, w.l.o.g.\ we can always assume our symmetric circuits to be rigid (see e.g.\ \cite[Lemma 4.3]{arxivVersionSymAlgebraicCircuits} or \cref{lem:rigidification} in the appendix).

The advantage of working with rigid $\Gamma$-symmetric circuits is that for every $\pi \in \Gamma$, and $g \in V(C)$, we may write $\pi(g)$ to mean the well-defined gate $\sigma(g)$, for the unique circuit automorphism $\sigma$ that extends $\pi$. In this sense, if $C$ is rigid, then $\Gamma$ has a well-defined (faithful) action on $V(C)$. 

With respect to this action, we can speak about the orbits of gates. 
Their size is an important complexity measure of $\Gamma$-symmetric rigid circuits, called \emph{orbit size}, by which we mean the maximum size of a $\Gamma$-orbit of any gate:
\[
\maxorb_\Gamma(C) \coloneqq \max_{g \in V(C)} | \Orb_\Gamma(g) |.
\]
Note that for any gate $g$, $\Orb_\Gamma(g) \subseteq V(C)$, so to establish lower bounds on the circuit size $|V(C)|$, it is sufficient to prove lower bounds for $\maxorb_\Gamma(C)$.

By induction on the circuit structure, it is not difficult to verify the following fact about the interplay between symmetry and the semantics of the gates (see Appendix \ref{sec:appendixGroups}).
\begin{restatable}{lemma}{symmetricCircuitSemantics}
	\label{lem:symmetricCircuitSemantics}
	Let $C$ be a $\Gamma$-symmetric rigid circuit, for $\Gamma \leq \Sym_n$. Let $\delta \colon \{x_1, \dots, x_n\} \to \{0,1\}$ be an assignment to the variables, let $g \in V(C)$ be a gate, and let $\pi \in \Gamma$. Then 
	\[
	g(\delta(x_1), \dots, \delta(x_n)) = \pi(g)(\delta(\pi^{-1}(x_1)), \dots, \delta(\pi^{-1}(x_n))).
	\]
\end{restatable}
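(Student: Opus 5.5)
We argue by structural induction on the circuit, processing gates in a topological order from the inputs up to the root (equivalently, by induction on the depth of the subcircuit rooted at $g$). Fix $\pi \in \Gamma$ and write $\sigma$ for the unique automorphism of $C$ extending $\pi$, so that $\pi(g) = \sigma(g)$ by rigidity. Fix the assignment $\delta$ and define the permuted assignment $\delta^\pi \coloneqq \delta \circ \pi^{-1}$, i.e.\ $\delta^\pi(x_i) = \delta(\pi^{-1}(x_i))$. The claim then reads $g(\delta) = \pi(g)(\delta^\pi)$ for every gate $g$, and we prove it in this form.

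\emph{Base case: $g$ is an input gate with label $\ell(g) = x_i$.} By the definition of an extending automorphism, $\ell(\sigma(g)) = \pi(x_i)$. Therefore
\[
\pi(g)(\delta^\pi) = \delta^\pi(\pi(x_i)) = \delta(\pi^{-1}(\pi(x_i))) = \delta(x_i) = g(\delta).
\]

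\emph{Inductive step: $g$ is an internal gate labelled $\MOD_m^R$.} Since $\sigma$ is an automorphism, $\sigma(g)$ carries the same label $\MOD_m^R$. Moreover, for every child $h$ of $g$, the wire $(h,g)$ has the same multiplicity as the wire $(\sigma(h),\sigma(g))$. Because $\sigma$ is a bijection on $V(C)$, it restricts to a multiplicity-preserving bijection from the children of $g$ onto the children of $\sigma(g)$. The multiset of inputs to $\sigma(g)$ under $\delta^\pi$ is therefore
\[
\{\!\{ \sigma(h)(\delta^\pi) \}\!\}_{h \in gE(C)},
\]
with each term counted with the multiplicity of the wire $(h,g)$. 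Each child $h$ has smaller depth than $g$, so the induction hypothesis applies and gives $\sigma(h)(\delta^\pi) = h(\delta)$. Consequently the input sums of $g$ under $\delta$ and of $\sigma(g)$ under $\delta^\pi$ coincide, and because both gates compute $\MOD_m^R$ they output the same value.

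The only delicate point is bookkeeping, namely the direction of $\pi$ versus $\pi^{-1}$ and the handling of multiedges. The identity $\pi(g) = \sigma(g)$, which uses rigidity, ensures that $\pi(g)$ is well defined. Rigidity is not otherwise needed in the argument, since any extending automorphism would work.
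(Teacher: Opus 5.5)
Your proof is correct and follows the same route as the paper: structural induction on the gate, where the base case uses that the extending automorphism relabels the input $x_i$ as $x_{\pi(i)}$, and the inductive step applies the hypothesis to the children of $g$, which $\sigma$ maps bijectively onto the children of $\pi(g)$, a gate carrying the same $\MOD_m^R$ label. Your explicit tracking of wire multiplicities, and your remark that rigidity is needed only to make $\pi(g)$ well defined, are slightly more careful than the paper's version, which handles this step by appealing to the order-invariance of the symmetric operation $f$.
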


\paragraph*{Supports in symmetric circuits}
We apply the previously introduced notions of supports to stabiliser groups of gates in symmetric circuits. 
We have already seen that the concept of a support can be defined differently for different permutation groups. Therefore, from here on, we restrict the symmetry group $\Gamma$ of our circuits and always assume that $\Gamma \in \{\Sym_n, \Aut(\Tt_n^{\boldsymbol{k}}) \}$, for some $n$ or some symmetric $n$-leaf tree $\Tt_n^{\boldsymbol{k}}$.

We are interested in the \emph{support of a gate} $g$ in a given $\Gamma$-symmetric rigid circuit $C$. Let $\Stab(g) \leq \Gamma$ be the stabiliser group of the gate, that is, $\Stab(g) \coloneqq \{ \pi \in \Gamma \mid \pi(g) = g\}$.
\begin{definition}[Supports of gates]
	\label{def:support}
	Let $g \in V(C)$ be a gate in a $\Gamma$-symmetric rigid circuit, for $\Gamma \in \{\Sym_n, \Aut(\Tt_n^{\boldsymbol{k}}) \}$.
	\begin{itemize}
		\item If $\Gamma = \Sym_n$, then the support of $g$ is $\supp(g) \coloneqq \supp(\Stab_{\Sym_n}(g)) \subseteq [n]$, i.e., the unique minimal support of $\Stab_{\Sym_n}(g)$ in $\Sym_n$. 
		\item If $\Gamma = \Aut(\Tt_n^{\boldsymbol{k}})$, then we consider the \emph{blockwise support} of $g$: For every $B \in \Bb(\Tt_n^{\boldsymbol{k}})$, we denote by $\supp_B(g) \subseteq B$ the unique minimal $B$-support of the group $\Stab_{\Aut(\Tt_n^{\boldsymbol{k}})}(g)$.
	\end{itemize}	
	In either case, the support of $g$ is undefined if the respective minimal ($B$-)support of $\Stab(g)$ is not uniquely defined.
\end{definition}	
We still need to argue that in the scenarios we study in this paper, the respective minimal supports exist and are unique, so that $\supp(g)$ and $\supp_B(g)$ are indeed always defined when we need them.
\begin{restatable}{lemma}{supportsizes}
\label{lem:supportSizes}	
\begin{enumerate}
\item Let $n > 8$ and $C$ be a $\Sym_n$-symmetric rigid circuit. Let $k$ be such that $1 \leq k \leq \frac{n}{4}$ and $\maxorb_{\Sym_n}(C) \leq \binom{n}{k}$.
Then for every $g \in V(C)$, $\Stab_{\Sym_n}(g)$ has a support of size less than $k$. 
\item Let $C$ be an $\Aut(\Tt_n^{\boldsymbol{k}})$-symmetric rigid circuit, for some $h$-tuple $\boldsymbol{k}$ such that $k_{\text{min}} \coloneqq \min_{i \in [h]} k_i(n)  > 8$.
For every block $B \in \Bb(\Tt_n^{\boldsymbol{k}})$, let $k_B$ be such that $1 \leq k_B \leq \frac{|B|}{4}$ and $\maxorb_{\Stab(B)}(C) \leq \binom{|B|}{k_B}$. Then for every $g \in V(C)$, $\Stab_{\Aut(\Tt_n^{\boldsymbol{k}})}(g)$ has a $B$-support $S \subseteq B$ with $|S| < k_B$.
\end{enumerate}
\end{restatable}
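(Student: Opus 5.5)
For part 1 we use the standard support theorem for subgroups of $\Sym_n$ of small index, in the style of Dawar et al. Fix a gate $g$ and put $G \coloneqq \Stab_{\Sym_n}(g)$. By the Orbit-Stabiliser Theorem, $[\Sym_n : G] = |\Orb_{\Sym_n}(g)| \leq \maxorb_{\Sym_n}(C) \leq \binom{n}{k}$. It therefore suffices to show the following. If $G \leq \Sym_n$ with $n > 8$, $k \leq n/4$ and $[\Sym_n:G] \leq \binom{n}{k}$, then $G$ has a support of size $< k$.

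Our first choice is the Liebeck--Praeger--Saxl / Dixon--Mortimer classification of subgroups of small index in $\Sym_n$. For $n > 8$, every $G$ of index $< \binom{n}{t}$ with $t < n/2$ satisfies $\mathbf{Alt}(\Delta) \leq G$ for some $\Delta$ with $|\Delta| > n-t$. The conclusion we need is then close to this statement, but two points need care.

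\begin{enumerate}
\item We must handle the boundary case of index exactly $\binom{n}{k}$. If the support had size exactly $k$, then $G \leq \Stab(S)$ for some $|S| = k$. In that case the index would be at least $\binom{n}{k}$, with equality only if $G = \Stab(S)$. However, $\Stab(S)$ also has the complement of $S$ as a support, and minimality then forces the support to be the smaller of the two. So we need a version of the argument that yields a support of size $< k$ under index $\leq \binom{n}{k}$, and the exact-equality case must be treated separately.
\item The classification only gives alternating rather than symmetric groups on $\Delta$. We pass from $\mathbf{Alt}(\Delta) \leq G$ to the pointwise stabiliser, $\StabP([n]\setminus\Delta) \leq G$, by an index-counting argument. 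If $G \cap \Sym(\Delta) = \mathbf{Alt}(\Delta)$, the index gains a factor of $2$. This has to be absorbed, and the slack between $\binom{n}{k-1}\cdot$(small factors) and $\binom{n}{k}$ for $k \leq n/4$ should suffice.
\end{enumerate}

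The cleanest route is to cite the Dixon--Mortimer theorem (Thm.\ 5.2B) directly and do this arithmetic. The step I expect to be delicate is exactly this boundary bookkeeping, namely checking that $\binom{n}{k}$ still forces $|[n]\setminus\Delta| < k$.

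For part 2 we reduce to part 1 applied blockwise. Fix a block $B$ and a gate $g$. The setwise stabiliser $\Stab(B) \leq \Aut(\Tt_n^{\boldsymbol{k}})$ acts on $B$, and the restriction map $\rho\colon \Stab(B) \to \Sym(B)$ is surjective, since any permutation of sibling nodes extends to a tree automorphism. Set $H \coloneqq \Stab_{\Stab(B)}(g) = \Stab(g) \cap \Stab(B)$. Its index in $\Stab(B)$ is the $\Stab(B)$-orbit size of $g$, which is at most $\maxorb_{\Stab(B)}(C) \leq \binom{|B|}{k_B}$. Because $\rho$ is surjective, $[\Sym(B):\rho(H)] \leq [\Stab(B):H] \leq \binom{|B|}{k_B}$.

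Moreover, $\rho(H) \leq \Stab(g)|_B$ directly from the definition of $\Gamma|_B$. The group-theoretic statement from part 1 (which only concerns subgroups of a symmetric group, here with $|B| = k_i(n) > 8$) gives a set $S \subseteq B$ with $|S| < k_B$ and $\StabP_{\Sym(B)}(S) \leq \rho(H) \leq \Stab(g)|_B$. That is, $S$ is a $B$-support of $\Stab_{\Aut(\Tt_n^{\boldsymbol{k}})}(g)$.

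Uniqueness of the minimal supports, used in \cref{def:support}, then follows from \cite[Lemma 26]{blass1999choiceless}, since $|S| < k_B \leq |B|/4 < |B|/2$.
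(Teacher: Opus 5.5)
\textbf{There is a genuine gap: your approach drops the circuit structure, and the statement you reduce to is false.} Your plan for part 1 reduces it to a claim about an arbitrary subgroup $G\le\Sym_n$ of index at most $\binom{n}{k}$. Counterexample: $G=\mathbf{Alt}_n$ has index $2\le\binom{n}{k}$. Yet for every $T$ with $|T|\le n-2$, the group $\StabP(T)$ contains a transposition of two points outside $T$. So $\mathbf{Alt}_n$ has no support of size less than $n-1$.

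This is exactly the step in your second item. Dixon--Mortimer only gives $\mathbf{Alt}(\Delta)\le G$, and no index counting can upgrade this to $\Sym(\Delta)\le G$. The obstruction costs a factor $2$ in the index, which contradicts nothing.

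The missing ingredient is the circuit structure. The support theorem of Dawar and Wilsenach, which the paper simply cites for part 1, works gate by gate by induction over the circuit. It uses that the children of $g$ already have small supports, together with rigidity. Roughly, for $a,b\in\Delta$ and a child $h$ of $g$, choose $c,d\in\Delta\setminus(\supp(h)\cup\{a,b\})$. Then $(c\,d)$ fixes $h$, and $(a\,b)(c\,d)\in\mathbf{Alt}(\Delta)\le\Stab(g)$. Hence $(a\,b)$ sends each child of $g$ to a child of $g$ with the same wire multiplicity, and rigidity is then used to get $(a\,b)\in\Stab(g)$.

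Your part 2 has the same defect. Your reduction there is correct: $\rho(H)\le\Stab(g)|_B$ and $[\Sym(B):\rho(H)]\le\binom{|B|}{k_B}$. But the argument uses nothing about $H$ beyond its index. It would therefore equally ``prove'' that $\rho^{-1}(\mathbf{Alt}(B))$, a subgroup of index $2$, has a $B$-support of size $<k_B$, which is false. Since $\rho(H)$ is not the stabiliser of a gate in a rigid $\Sym(B)$-symmetric circuit, even a correct circuit-level part 1 does not apply to it. This is why the paper builds the auxiliary circuit $C_B$ (deleting inputs outside $L_0(B)$, identifying the inputs within each $L_0(v)$, and rigidifying) and applies part 1 to that circuit.

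\textbf{The equality case cannot be ``treated separately''; it is a genuine counterexample to the statement read literally with $\le$.} For every $k$-set $S\subseteq[n]$, take a gate $g_S=\MOD_m^{\{0\}}(x_i : i\in S)$, with all these gates feeding a single output gate. This circuit is rigid and $\Sym_n$-symmetric, with $\maxorb_{\Sym_n}(C)=\binom{n}{k}$. Moreover $\Stab(g_S)=\Stab(S)$, whose supports are exactly the supersets of $S$ or of $[n]\setminus S$. Minimality therefore selects $S$ itself, of size exactly $k$, not something smaller.

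So the lemma has to be read with the strict hypothesis $\maxorb<\binom{n}{k}$, as in the Dixon--Mortimer theorem. That is also all the paper needs (cf.\ \cref{cor:sizeLowerBound}).
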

The first part of the lemma is shown in \cite[Theorem 14]{dawar_symmetric_2024} for symmetric Boolean circuits, and the set-up here is very similar. The second item follows from the first with an additional argument. The proof details are given in Appendix \ref{sec:appendixGroups}.

As already mentioned, by \cite[Lemma 26]{blass1999choiceless}, every subgroup of $\Sym_n$ that has a support of size $< n/2$ also has a unique minimal support.
This holds in particular whenever the conditions of the above lemma are satisfied, both in case (1) and (2). In the set-up in the following sections, this will always be the case because if the conditions of the lemma are not satisfied for a circuit $C$, then $\maxorb(C)$, and hence $|V(C)|$, is anyway at least as large as the circuit size lower bounds we have to prove for \cref{thm:main1} and \cref{thm:main2}.
For a circuit $C$ where the respective supports are defined, we write
\begin{align*}
	\maxSup(C) &\coloneqq \max_{g \in V(C)} |\supp(g)|, \text{ if } C \text{ is } \Sym_n\text{-symmetric} .\\
	\maxSup_{B}(C) &\coloneqq \max_{g \in V(C)} |\supp_{B}(g)|, \text{ if } C \text{ is } \Aut(\Tt_n^{\boldsymbol{k}})\text{-symmetric}
	 \text{ and } B \in \Bb(\Tt_n^{\boldsymbol{k}}).
\end{align*}

Another known fact about supports of gates in symmetric circuits is that they are moved by permutations in the expected way:
\begin{lemma}[{\cite[Lemma 4.2]{arxivVersionSymAlgebraicCircuits}}]
	\label{lem:supportsMoveWithPermutation}
	Let $C$ be a $\Sym_n$-symmetric rigid circuit in which the supports of all gates are defined. Let $g \in V(C)$ be a gate. Then for every $\pi \in \Sym_n$, $\supp(\pi(g)) = \pi(\supp(g))$. 
\end{lemma}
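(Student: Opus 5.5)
We need to show $\supp(\pi(g)) = \pi(\supp(g))$ for every $\pi \in \Sym_n$. The plan is to first relate the two stabiliser groups by conjugation, then show that minimal supports transform equivariantly under conjugation, and finally use uniqueness of minimal supports.

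\emph{Conjugation step.} Since $C$ is rigid, $\Sym_n$ acts on $V(C)$ as a genuine group action, so $\Stab(\pi(g)) = \pi\,\Stab(g)\,\pi^{-1}$. To verify this, take $\sigma \in \Stab(g)$. Then $\pi\sigma\pi^{-1}(\pi(g)) = \pi(\sigma(g)) = \pi(g)$. The reverse inclusion follows symmetrically by applying the same argument with $\pi^{-1}$ and the gate $\pi(g)$. Here we use that the unique circuit automorphism extending a composition is the composition of the extensions; uniqueness is exactly what rigidity provides.

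\emph{Transport of supports.} Next we show that for any group $\Delta \leq \Sym_n$ and any set $S \subseteq [n]$: $S$ is a support of $\Delta$ if and only if $\pi(S)$ is a support of $\pi\Delta\pi^{-1}$. This follows from the identity $\StabP(\pi(S)) = \pi\,\StabP(S)\,\pi^{-1}$. Indeed, $\tau$ fixes $\pi(S)$ pointwise exactly when $\pi^{-1}\tau\pi$ fixes $S$ pointwise. Hence the inclusion $\StabP(S) \leq \Delta$ is equivalent to $\StabP(\pi(S)) \leq \pi\Delta\pi^{-1}$. The map $S \mapsto \pi(S)$ is a cardinality-preserving bijection on subsets, so it sends the set of supports of $\Delta$ bijectively onto the set of supports of $\pi\Delta\pi^{-1}$. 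It therefore preserves the property of being the unique inclusion-minimal support (equivalently, the unique support of minimum size).

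\emph{Conclusion.} By hypothesis, $\supp(g)$ and $\supp(\pi(g))$ are both defined, meaning each stabiliser has a unique minimal support. By the transport step, $\pi(\supp(g))$ is the unique minimal support of $\pi\,\Stab(g)\,\pi^{-1} = \Stab(\pi(g))$, so it equals $\supp(\pi(g))$.

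There is no serious obstacle. The only point needing care is the conjugation identity in the first step, which depends on rigidity making $g \mapsto \pi(g)$ a well-defined action compatible with composition. If only the existence of some automorphism were known, stabilisers would not be well defined as subgroups of $\Sym_n$ acting on gates.
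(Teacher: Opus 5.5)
Your proof is correct. It is the standard argument for this lemma, which the paper does not prove itself but imports from Dawar, Pago and Seppelt: rigidity gives $\Stab(\pi(g)) = \pi\,\Stab(g)\,\pi^{-1}$, and $\StabP(\pi(S)) = \pi\,\StabP(S)\,\pi^{-1}$ transports supports, so uniqueness of the minimal support finishes the proof. One wording fix: what makes $S \mapsto \pi(S)$ preserve inclusion-minimality is that it is inclusion-preserving, not merely cardinality-preserving; it is both, so your argument works whichever reading of ``minimal'' is intended.
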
	 
Analogously, in every $\Aut(\Tt_n^{\boldsymbol{k}})$-symmetric circuit $C$, $\supp_{\pi(B)}(\pi(g)) = \pi(\supp_B(g))$ for every gate $g$, every block $B \in \Bb(\Tt_n^{\boldsymbol{k}})$, and every $\pi \in \Aut(\Tt_n^{\boldsymbol{k}})$.


\subsection{Periodic functions}
\label{sec:periodicity}
A function $f \colon \bbN \to \bbN$ is called \emph{periodic} with a period of length $\ell$ if $f(x) = f(x+ \ell)$ for every $x \in \bbN$. An analogous definition applies if $f$ is only defined on an initial segment of $\bbN$. 
A particular periodic function that will be of high importance for us is the binomial coefficient modulo $m$.
Its period length is known exactly:
\begin{theorem}[{\cite[Theorem 2.3]{laugier2015periodicsequencesmodulom}}]
	\label{thm:binomialPeriod}
	Let $m \in \bbN$ be fixed, and let $p_1, \dots, p_r$ be its prime divisors. Fix $x \in \bbN$. The function $a(n) \coloneqq \binom{n}{x} \mod m$ has a period of minimal length 
	\[
	\ell(m,x) = m \cdot \prod_{i \in [r]} p_i^{\lfloor \log_{p_i}(x) \rfloor}.
	\]
\end{theorem}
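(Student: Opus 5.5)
We reduce to prime powers via the Chinese Remainder Theorem and then settle each prime power separately, proving divisibility in both directions with Kummer's theorem. We assume $x \geq 1$; for $x=0$ the function is constantly $1$ and $\log_{p_i}(0)$ is undefined.

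\textbf{Reduction to prime powers.} Write $m = \prod_{i \in [r]} p_i^{a_i}$. By CRT, $P$ is a period of $a(n) = \binom{n}{x} \bmod m$ if and only if $P$ is a period of $\binom{n}{x} \bmod p_i^{a_i}$ for every $i$. For purely periodic sequences on $\bbN$, the periods are exactly the multiples of the minimal period: if $P,Q$ are periods, then so is $\gcd(P,Q)$, by comparing values far enough out and using B\'ezout. Hence the minimal period of $a$ is the lcm of the minimal periods modulo $p_i^{a_i}$. It therefore suffices to prove that, for a prime power $p^a$ and $e \coloneqq \lfloor \log_p x \rfloor$, the minimal period of $\binom{n}{x} \bmod p^a$ is exactly $p^{a+e}$. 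These are powers of distinct primes, so their lcm is their product $\prod_i p_i^{a_i+e_i} = m \cdot \prod_i p_i^{\lfloor \log_{p_i} x\rfloor}$, which is the claimed formula.

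\textbf{$p^{a+e}$ is a period.} Set $L = p^{a+e}$. By Vandermonde's identity,
\[
\binom{n+L}{x} = \binom{n}{x} + \sum_{j=1}^{x} \binom{L}{j}\binom{n}{x-j}.
\]
Kummer's theorem gives $v_p\bigl(\binom{p^{a+e}}{j}\bigr) = a+e - v_p(j)$ for $1 \le j \le p^{a+e}$. Every $j \le x < p^{e+1}$ satisfies $v_p(j) \le e$, so each summand is divisible by $p^a$. Hence $\binom{n+L}{x} \equiv \binom{n}{x} \pmod{p^a}$ for all $n$.

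\textbf{Every period is a multiple of $p^{a+e}$.} This is the step I expect to need the one real idea. Let $P$ be a period of $f_x(n) = \binom{n}{x} \bmod p^a$. Pascal's rule gives $f_{x-1}(n) = f_x(n+1) - f_x(n)$, so $P$ is also a period of $f_{x-1}$. Inductively, $P$ is a period of $f_y$ for every $y \le x$, in particular for $y = p^e \le x$. Then
\[
\binom{P}{p^e} \equiv \binom{0}{p^e} = 0 \pmod{p^a}.
\]
If $v_p(P) = t$, we may assume $t \ge e$: otherwise Kummer's theorem shows that $\binom{P}{p^e}$ is not divisible by $p$, since a carry first occurs only at digit $t<e$. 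With $t \geq e$, Kummer's theorem gives $v_p\bigl(\binom{P}{p^e}\bigr) = t - e$. Divisibility by $p^a$ then forces $t - e \ge a$, i.e.\ $p^{a+e} \mid P$.

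Combining the two directions, the minimal period modulo $p^a$ is exactly $p^{a+e}$, and the reduction above finishes the proof. The only delicate point is the valuation computation $v_p\bigl(\binom{P}{p^e}\bigr) = v_p(P) - e$, which I would check carefully via the carry count in base $p$.
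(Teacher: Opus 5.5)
Your reduction to prime powers (gcd-closure of periods, then the lcm over the CRT components) is correct, and so is your proof that $p^{a+e}$ is a period via Vandermonde and $v_p\binom{p^{a+e}}{j}=a+e-v_p(j)$. The paper does not prove this statement at all; it imports it from Laugier--Saikia, so your argument has to stand on its own.

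\textbf{The gap.} It is in the lower bound, at ``we may assume $t\ge e$''. Your justification is false. In the sum $p^e+(P-p^e)$, the summand $p^e$ has no nonzero digits below position $e$, so no carry can occur below position $e$, and in particular not at position $t$. By Lucas' theorem $\binom{P}{p^e}\equiv P_e \pmod p$, where $P_e$ is the $e$-th base-$p$ digit of $P$, and this digit can be $0$ when $v_p(P)<e$. Moreover, if $P<p^e$ then $\binom{P}{p^e}=0$ and Kummer's theorem does not apply. The single congruence $\binom{P}{p^e}\equiv 0 \pmod{p^a}$ cannot force $p^{a+e}\mid P$. Take $p=2$, $a=1$, $x=4$ (so $e=2$) and $P=10$: then $\binom{10}{4}=210$ is even, yet $8\nmid 10$. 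Your closing remark singles out $v_p\bigl(\binom{P}{p^e}\bigr)=v_p(P)-e$ as the delicate point. That formula is right once $t\ge e$; the real difficulty is reaching $t\ge e$.

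\textbf{The repair.} It uses something you already proved: $P$ is a period of $f_y$ for every $y\le x$. Hence $\binom{P}{p^j}\equiv\binom{0}{p^j}=0 \pmod{p^a}$ for every $0\le j\le e$. Bootstrap on $j$:
\begin{itemize}
\item For $j=0$ this says $p^a\mid P$, i.e.\ $t\ge a$.
\item Suppose $t\ge a+j-1$ for some $1\le j\le e$. Since $a\ge 1$, we get $t\ge j$, hence $P\ge p^j$. Now use $\binom{P}{p^j}=\frac{P}{p^j}\binom{P-1}{p^j-1}$. By Lucas, $\binom{P-1}{p^j-1}\equiv 1 \pmod p$, because the lowest $t\ge j$ digits of $P-1$ all equal $p-1$. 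So $v_p\binom{P}{p^j}=t-j$.
\item Divisibility by $p^a$ then gives $t\ge a+j$.
\end{itemize}
At $j=e$ you obtain $p^{a+e}\mid P$. With this replacement for the ``$t\ge e$'' step, the rest of your proof goes through and establishes the stated minimal period.
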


Let $f \colon \{0,1\}^n \to \{0,1\}$ be an $n$-ary Boolean function. If for every $\pi \in \Sym_n$, and every $\bar{x} \in \{0,1\}^n$, $f(x_1, \dots, x_n) = f(\pi(\bar{x})) \coloneqq f(x_{\pi^{-1}(1)}, \dots, x_{\pi^{-1}(n)})$, then we say that $f$ is \emph{$\Sym_n$-symmetric}. The value $f(\bar{x})$ of a  $\Sym_n$-symmetric $n$-ary function $f$ only depends on the number of $1$-entries in $\bar{x}$, denoted $|\bar{x}|_1$. Thus, we may view $f(\bar{x})$ as a unary function $f(|\bar{x}|_1)$ defined on $\{0, \dots, n\}$, and as such, we can speak of its period. 

\begin{lemma}
	\label{lem:ANDnotPeriodic}
	The $\Sym_n$-symmetric Boolean function $\AND_n$ does not have a period of any length $\leq n$.
\end{lemma}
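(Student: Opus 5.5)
Viewing $\AND_n$ as the unary function $f\colon\{0,\dots,n\}\to\{0,1\}$ of the number of ones, we have $f(n)=1$ and $f(j)=0$ for every $j<n$. The plan is to argue by contradiction, using the definition of a period restricted to the domain $\{0,\dots,n\}$. For a function defined only on an initial segment, having period $\ell$ means that $f(x)=f(x+\ell)$ whenever both $x$ and $x+\ell$ lie in the domain.

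Suppose $f$ had a period of length $\ell$ with $1\le \ell\le n$. Take $x \coloneqq n-\ell$. Then $0\le x < n$, so $x$ lies in the domain, and $x+\ell=n$ does too. The period condition gives $f(n-\ell)=f(n)$, that is, $0=1$, which is a contradiction. Hence no $\ell\in\{1,\dots,n\}$ is a period of $\AND_n$.

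There is essentially no obstacle here. The only point to be careful about is the convention for periods of functions on a finite initial segment: the shift must stay inside $\{0,\dots,n\}$, and this is exactly why the restriction $\ell\le n$ appears in the statement. It guarantees that the witness $x=n-\ell$ is nonnegative. The length-$0$ case is excluded, being meaningless as a period.
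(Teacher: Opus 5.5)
Your proof is correct and matches the paper's argument exactly: assume a period $0<\ell\le n$ and compare the values at $n-\ell$ and $n$ to get $0=1$. Spelling out the finite-domain convention for periods is a small clarification that the paper leaves implicit.
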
	 
\begin{proof}
	For a contradiction, assume that $0 < \ell \leq n$ was the period length of $\AND_n$. Then $\AND_n(\bar{x}) = 1$ also in case that $|\bar{x}|_1 = n-\ell$. But $0 \leq n-\ell < n$, so $\AND_n(\bar{x}) = 0$ in this case. Contradiction.
\end{proof}

\section{Size lower bound for fully symmetric circuits}
In this section, we prove the size lower bound from \cref{thm:main1} for $\Sym_n$-symmetric $\MOD_m$-circuits computing $\AND_n$. The technical core, from which the size lower bound follows, is a lower bound on the \emph{support size} $\maxSup(C)$ required to compute $\AND_n$:

\begin{theorem}
\label{thm:mainSupportLowerBound}
Fix a positive integer $m > 3$ and let $r$ be the number of distinct prime divisors of $m$.
Let $(C_n)_{n \in \bbN}$ be a family of $\Sym_n$-symmetric rigid $\MOD_m$-circuits. 
If $\maxSup(C_n) < (n/m)^{1/r}$ for all $n \in \bbN$, then $C_n$ does not compute $\AND_n$.
\end{theorem}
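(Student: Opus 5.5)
The plan is to show, by induction on the circuit, that every gate computes a function which is periodic in a suitable sense. Fix $n$ and a gate $g$ with support $S = \supp(g)$, where $|S| = s < (n/m)^{1/r}$. By \cref{lem:symmetricCircuitSemantics}, $\Stab(g) \geq \StabP(S)$, so $g(\bar x)$ is invariant under permutations of $[n]\setminus S$. Hence $g(\bar x)$ depends only on $\bar x_S$ and on the count $|\bar x_{[n]\setminus S}|_1$. The inductive claim (the analogue of \cref{lem:inductionPeriodLength}) is this: for every fixed $\bar x_S$, the map $t \mapsto g(\bar x_S, t)$ is periodic on $\{0,\dots,n-s\}$ with period length $L \coloneqq m\cdot\prod_{i\in[r]} p_i^{\lfloor \log_{p_i} M\rfloor}$, where $M \coloneqq \maxSup(C_n)$. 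Since $L \le m\cdot M^r < n$, applying this to the root (whose support is empty, as $\AND_n$'s circuit root is fixed by all of $\Sym_n$) yields a period $\le n$. That contradicts \cref{lem:ANDnotPeriodic}.

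For the inductive step, let $g$ be a $\MOD_m^R$ gate. Its value is determined by $\sum_{h \in gE(C)} h(\bar x) \bmod m$, so it suffices to show that this sum mod $m$ is periodic in $t$ with period $L$. The group $\StabP(S)$ permutes the children of $g$. I will partition the children into $\StabP(S)$-orbits. By \cref{lem:supportsMoveWithPermutation}, a child $h$ with support $T$ has orbit consisting of the gates $\pi(h)$ with supports $\pi(T)$, $\pi \in \StabP(S)$. Rigidity plus minimality of support should give that the orbit corresponds bijectively (up to a constant multiplicity $c$ coming from the stabiliser of $T$ setwise not fixing $h$) to the injective placements of $T\setminus S$ into $[n]\setminus S$. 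Concretely, the orbit sum becomes
\[
c\cdot\sum_{U \subseteq [n]\setminus S,\ |U| = j} F(\bar x_S, \bar x_U, t - |\bar x_U|_1),
\]
where $j = |T \setminus S| \le M$, and $F$ is the function of $h$ in terms of its support and the outside count. To be careful with ordered placements, I would sum over tuples instead and absorb the factor.

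Grouping by the pattern $\bar x_U$ with $a$ ones, the number of such $U$ is $\binom{t}{a}\binom{n-s-t}{j-a}$. By induction, $F(\cdot,\cdot,t-a)$ is $L$-periodic in $t$. By \cref{thm:binomialPeriod}, $\binom{t}{a} \bmod m$ has period dividing $L$ since $a \le M$. The factor $\binom{n-s-t}{j-a} \bmod m$ is likewise $L$-periodic in $t$, since its period in the upper argument is again $\ell(m, j-a)$, which divides $L$. Products and sums of $L$-periodic functions mod $m$ are $L$-periodic, so the whole sum mod $m$ is periodic.

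The main obstacle is making the orbit-counting step rigorous. It requires showing that the multiset of children in each $\StabP(S)$-orbit is exactly described by placements of $T\setminus S$, with uniform multiplicity and with wire multiplicities respected. It also requires handling the children whose support meets $S$, and boundary effects where $t-a$ or $n-s-t$ leave the valid range. Periodicity "on an initial segment" needs care there; I would state the claim as $f(t)=f(t+L)$ whenever both sides are defined. Input gates form the base case, with support $\{i\}$ and trivially periodic behaviour.
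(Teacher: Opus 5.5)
Your overall strategy is the paper's: induction over gates, splitting the children of $g$ into $\StabP(\supp(g))$-orbits, and counting the children in an orbit by the pattern their support sees, which yields products of binomial coefficients. You then apply \cref{thm:binomialPeriod} and use that $L$-periodic functions mod $m$ are closed under sums and products. Fixing the single period $L$ is a harmless simplification of the paper's bookkeeping, since every admissible $m\prod_i p_i^{c_i}$ divides $L$.

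\textbf{The gap: the orbit count.} This is the step you single out yourself, and it does not work as written. A child $h$ with support $T$ computes a function of the \emph{ordered} tuple $\bar x_T$. The $c$ gates of the orbit with support $(T\cap S)\cup U$ are the images $\pi(h)$ for different bijections $\pi|_{T\setminus S}\colon T\setminus S\to U$, so in general they compute different functions of $\bar x_U$ (think of $f(x_1,x_2,\dots)$ versus $f(x_2,x_1,\dots)$). Hence the orbit sum is not $c\cdot\sum_U F(\bar x_S,\bar x_U,t-|\bar x_U|_1)$. Knowing only the number $a$ of ones in $\bar x_U$ also does not determine the contribution of $U$.

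Summing over injective tuples fixes the ordering but overcounts. Each gate arises from exactly $|G|$ tuples, where $G$ is the group induced on $T\setminus S$ by $\Stab_{\StabP(S)}(h)$. So ``absorbing'' the factor is a division. But $L$-periodicity of $|G|\cdot O_\alpha \bmod m$ tells you nothing about $O_\alpha \bmod m$ when $\gcd(|G|,m)>1$, for instance when $m$ divides $|G|$ (e.g.\ $G=\Sym(T\setminus S)$ with $|T\setminus S|\geq m$).

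\textbf{The missing ingredient.} You need an exact integer count of the gates seeing each pattern; this is the content of \cref{claim:sizeOfGroupedSets}. The paper groups the gates by the ordered pattern $\gamma$ on their support. It then conjugates with permutations in $\StabP(S)$ that preserve $\beta$, and shows that every choice of the position sets $(B^0,B^1)$ carries the same number $\lambda$ of such gates. This gives $|O_{\alpha\beta,\gamma}|=\lambda\binom{t}{a}\binom{n-s-t}{j-a}$ with no division.

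In your tuple language you can get the same by dividing in $\bbZ$ separately for each $G$-class $[\gamma]$ of patterns. Fix $U$ with $a$ ones. $G$ acts freely on the $|[\gamma]|\,a!\,(j-a)!$ bijections $T\setminus S\to U$ whose pattern lies in $[\gamma]$. So the number of gates over $U$ with pattern class $[\gamma]$ is the integer $|[\gamma]|\,a!\,(j-a)!/|G|$, which is independent of $U$. Moreover, $F$ is constant on $[\gamma]$, because $G$ comes from permutations fixing $h$. With this count in place, the rest of your argument goes through as in the paper.
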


\begin{corollary}
\label{cor:sizeLowerBound}
In the setting of the theorem,
if $C_n$ computes $\AND_n$ for an $n > 8$, then
$|V(C_n)| \geq \binom{n}{(n/m)^{1/r}}$.
\end{corollary}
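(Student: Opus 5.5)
We argue by contraposition, combining \cref{thm:mainSupportLowerBound} with the first part of \cref{lem:supportSizes}. Suppose $C_n$ computes $\AND_n$ for some $n > 8$, and assume towards a contradiction that $|V(C_n)| < \binom{n}{k}$ with $k \coloneqq \lceil (n/m)^{1/r} \rceil$. We may take $C_n$ to be rigid: by \cref{lem:rigidification} every symmetric circuit can be rigidified without increasing the number of gates, and a rigid circuit computes the same function. So we need not worry about whether the rigidification procedure could blow up size.

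The main step is to transfer the size bound to a support bound. Since every orbit $\Orb_{\Sym_n}(g)$ is a subset of $V(C_n)$, we have $\maxorb_{\Sym_n}(C_n) \leq |V(C_n)| < \binom{n}{k}$. Next we check the side condition $1 \leq k \leq n/4$. Because $m \geq 6$ and $r \geq 1$, we have $(n/m)^{1/r} \leq n/6$, so $k \leq n/6 + 1 \leq n/4$ as soon as $n \geq 12$. For the finitely many $n$ with $8 < n < 12$ the claim can be checked directly, or absorbed as a constant; alternatively, one can use $k = \lfloor\cdot\rfloor$ and adjust. The first part of \cref{lem:supportSizes} then tells us that every gate's stabiliser has a support of size less than $k$. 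Since $k < n/2$, the support minimality result \cite[Lemma 26]{blass1999choiceless} makes all minimal supports well-defined, and $\maxSup(C_n) < k$.

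It remains to conclude $\maxSup(C_n) < (n/m)^{1/r}$ and then invoke \cref{thm:mainSupportLowerBound}. Here two technicalities arise, and this is where I expect the real care to be needed.

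The first is rounding. From $\maxSup(C_n) \leq k-1$ we only get $\maxSup(C_n) \leq \lceil (n/m)^{1/r}\rceil - 1 < (n/m)^{1/r}$. This is exactly the strict inequality required, because the support size is an integer. It is also why we set $k$ to be the ceiling: the binomial $\binom{n}{(n/m)^{1/r}}$ in the statement should be read with this integer rounding.

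The second is that \cref{thm:mainSupportLowerBound} is phrased for a family with the hypothesis holding for all $n$. We would apply it pointwise, noting that its proof, via the periodicity lemma, concerns a single $n$. Concretely, small support forces the function computed by $C_n$ to have a period of length at most $m\prod_i p_i^{\lfloor\log_{p_i}\maxSup\rfloor} \leq m \cdot \maxSup^r < n$, contradicting \cref{lem:ANDnotPeriodic}. If one insists on the family form, one can pad the family by replacing the other members with small-support circuits; these do not compute $\AND$, which does not affect the argument. Either way, $C_n$ would fail to compute $\AND_n$, which is the contradiction we wanted.
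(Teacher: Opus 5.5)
Your argument is correct and is essentially the paper's proof: bound $\maxorb_{\Sym_n}(C_n)$ by $|V(C_n)|$, apply \cref{lem:supportSizes}~(1) to get $\maxSup(C_n) < (n/m)^{1/r}$, and contradict \cref{thm:mainSupportLowerBound} applied pointwise in $n$; your handling of the ceiling and of the side condition $1 \le k \le n/4$ is more careful than the paper, which leaves both implicit. Two small caveats: the stated setting only gives $m > 3$, and $m \ge 6$ is automatic only when $r \ge 2$, so for $m \in \{4,5\}$ you need your floor variant; and your padding alternative cannot work, because input gates always have support size $1 > (n'/m)^{1/r}$ when $3 \le n' < m$, so the pointwise reading you give first is the one that actually carries the argument.
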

\begin{proof}
	If $C_n$ computes $\AND_n$, then by \cref{thm:mainSupportLowerBound}, $\maxSup(C_n) \geq (n/m)^{1/r}$. 
	Suppose for a contradiction that $|V(C_n)| < \binom{n}{(n/m)^{1/r}}$. Then in particular, $\maxorb(C_n) < \binom{n}{(n/m)^{1/r}}$. Because $n > 8$ and $m > 3$, the conditions of \cref{lem:supportSizes} (1) are fulfilled for $k = (n/m)^{1/r}$, so $\maxSup(C_n) < (n/m)^{1/r}$, which is a contradiction.
\end{proof}	

By replacing factorials with their Stirling approximations, we can compute that 
$\binom{n}{(n/m)^{1/r}} \geq (f_1(m,r) \cdot n^{f_2(m,r)})^{n^{1/r}}$, where $f_1, f_2$ are functions depending on $m,r$, so we can treat them as constants. Thus, \cref{cor:sizeLowerBound} indeed yields the asymptotic circuit size lower bound of $2^{\Omega(n^{1/r} \cdot \log n)}$ that is claimed in \cref{thm:main1}.

It remains to prove \cref{thm:mainSupportLowerBound}. This is done by showing that if $\maxSup(C_n) < (n/m)^{1/r}$, then the function computed by $C_n$ is periodic with a period of length $<n$.

The proof rests on the upper bound on the period length shown in \cref{cor:upperboundPeriodLength}. 
Before we can state and prove that corollary, we need to introduce a refined notion of period, for not fully symmetric functions: 

Let $S \subseteq [n]$, and let $f(x_1, \dots, x_n)$ be a $\StabP(S)$-symmetric function. Let $X_S \coloneqq \{x_i \mid i \in S\}$.
Then the value of $f$ is fully determined by the assignment $\alpha \colon X_S \to \{0,1\}$ and by the number of $1$-entries in the variables $\{x_i \mid i \in [n] \setminus S \}$.
Formally, let us write $f_\alpha \colon X_{[n] \setminus S} \to \{0,1\}$ for the function obtained from $f$ by fixing the variables in $X_S$ to the values given by $\alpha$. That is, \[
f_\alpha(\bar{x}_{[n] \setminus S}) \coloneqq f(\alpha(\bar{x}_S)\bar{x}_{[n] \setminus S}).
\]
When we say that a $\StabP(S)$-symmetric function $f$ has a period, we mean that for every $\alpha \colon X_S \to \{0,1\}$, the function $f_\alpha(\bar{x}_{[n] \setminus S})$, whose value only depends on the number of $1$-entries in its input, has a period. This period may be of different length for different assignments $\alpha \colon X_S \to \{0,1\}$, but when we say that $f$ has a period of length at most $\ell$, then we mean that for each $\alpha \colon X_S \to \{0,1\}$, the period length of $f_\alpha$ is at most $\ell$.

Now let $g \in V(C)$ be a gate. The function $g(\bar{x})$ computed by $g$ is always $\Stab(g)$-symmetric by \cref{lem:symmetricCircuitSemantics}. Because $\StabP(\supp(g)) \leq \Stab(g)$, it is also $\StabP(\supp(g))$-symmetric. So when we say that $g(\bar{x})$ has a period of length at most $\ell$, we mean that $g_\alpha(\bar{x})$ has such a period for every $\alpha \colon X_{\supp(g)} \to \{0,1\}$.
Now the technical lemma that we want to show reads as follows. 
\begin{restatable}{lemma}{mainInductionPeriodLength}
\label{lem:inductionPeriodLength}
Let $s \colon \bbN \to \bbN$ be a function in $o(n)$. Fix a number $m \in \bbN$.
Let $(C_n)_{n \in \bbN}$ be a family of $\Sym_n$-symmetric rigid $\MOD_m$-circuits in which all supports have size at most $s(n)$. Fix $n \in \bbN$.
Let $g$ be a gate in $C_n$. Then the $\StabP(\supp(g))$-symmetric function $g(\bar{x})$ has a period of length at most 
\[
q(m,s) \coloneqq m \cdot \prod_{i \in [r]} p_i^{\lfloor \log_{p_i}(s(n)) \rfloor},
\]
where the product ranges over the prime factors $p_1, \dots, p_r$ of $m$.
\end{restatable}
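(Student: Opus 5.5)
We argue by induction on the depth of the gate $g$ in $C_n$. If $g$ is an input gate labelled $x_i$, then $\supp(g) = \{i\}$, so for every $\alpha \colon X_{\supp(g)} \to \{0,1\}$ the function $g_\alpha$ is constant on the remaining variables. A constant function has period $1$, and $1 \leq q(m,s)$.

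For the inductive step, let $g$ be a $\MOD_m^R$-gate with support $S \coloneqq \supp(g)$. We must show that, for fixed $\alpha$ on $X_S$, the residue
\[
\Sigma(\bar{x}) \coloneqq \sum_{h \in gE(C)} \mathrm{mult}(h,g) \cdot h(\bar{x}) \pmod m
\]
depends on $t \coloneqq |\bar{x}_{[n]\setminus S}|_1$ only through $t \bmod q(m,s)$. By rigidity, $\StabP(S) \leq \Stab(g)$ acts on the children of $g$ and preserves wire multiplicities. We therefore split the children into $\StabP(S)$-orbits and treat each orbit separately. Take one orbit $O$, a representative $h \in O$, and write $T \coloneqq \supp(h)$. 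By \cref{lem:supportsMoveWithPermutation}, the supports of the members of $O$ are exactly the images $\pi(T)$ for $\pi \in \StabP(S)$. All these images agree on $T \cap S$ and have a moving part $T' \coloneqq T \setminus S$ of size $k \coloneqq |T'| \leq s(n)$. Since all supports have size $o(n) < n/2$, the map $h' \mapsto \supp(h')$ is well-behaved. The key structural fact I would prove is that the stabiliser of $h$ in $\StabP(S)$ lies between $\StabP_{\StabP(S)}(T')$ and $\Stab_{\StabP(S)}(T')$. Consequently, the orbit $O$ fibres over the set of $k$-subsets $U \subseteq [n]\setminus S$, with a fixed number $c$ of gates above each $U$; this $c$ counts the orbit of $h$ under the group of permutations of $T'$ that stabilise $h$.

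Now evaluate the sum over $O$ at an input with $\alpha$ fixed and $t$ ones outside $S$. Each $h' \in O$ with support $S \cup U$ (restricted to the relevant positions) computes a function of the assignment on $T \cap S$, which is fixed by $\alpha$, of the assignment on $U$, and of the number of ones outside $\supp(h')$. I would group the $k$-sets $U$ by their pattern $\beta \colon U \to \{0,1\}$ up to the relevant symmetry, that is, by the number $j$ of ones placed in $U$. For fixed $j$ there are $\binom{t}{j}\binom{n-|S|-t}{k-j}$ such sets $U$. The value of each $h'$ is then given by a function of the number of remaining ones, $t - j$, and by the induction hypothesis this function is periodic with period dividing $q(m,s)$. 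I would choose the hypothesis in the form "period divides $q$", and every such period can be taken to divide $q$ because all the relevant periods are divisors of the $\ell(m,x)$ of \cref{thm:binomialPeriod} with $x \leq s(n)$. The contribution of $O$ modulo $m$ is therefore a finite sum, over $j \leq k$ and over assignments on $T'$, of terms of the form
\[
c' \cdot \binom{t}{j}\binom{n-|S|-t}{k-j} \cdot (\text{$q$-periodic in } t).
\]
Both binomial factors are periodic modulo $m$ with period $\ell(m,j)$, respectively $\ell(m,k-j)$, which divides $q(m,s)$ by \cref{thm:binomialPeriod} since $j, k-j \leq s(n)$; for the second factor this uses reflection of the sequence. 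Hence the whole contribution is $q$-periodic modulo $m$. Summing over all orbits shows that $\Sigma$ is $q$-periodic modulo $m$, and therefore so is $g_\alpha = [\Sigma \in R]$.

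I expect the main obstacle to be making the orbit-counting rigorous. Two points need care. First, the children's supports may intersect $S$, and their functions are symmetric only relative to their own supports, so the precise counting by patterns on $T'$ must be justified. Second, the multiplicity $c$ and the pattern-dependent functions must be shown to be independent of $t$. I would first try to handle this by noting that the gates over a fixed $U$ are permuted by $\StabP(S \cup U)$, which acts on the rest symmetrically, so that their summed value depends only on $\alpha$, on $\bar{x}_U$, and on the remaining count, to which the induction hypothesis then applies.
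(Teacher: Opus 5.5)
Your route is the paper's: induction on depth, splitting $gE(C)$ into $\StabP(S)$-orbits, and counting the gates of an orbit by the number $j$ of ones on the moving part of their support. This extracts $\binom{t}{j}\binom{n-|S|-t}{k-j}$ times a constant, and you close with \cref{thm:binomialPeriod} and the fact that ``$q$ is a period'' survives products, sums and reduction mod $m$. Your sandwich $\StabP(S\cup T')\le\Stab_{\StabP(S)}(h)\le\Stab_{\StabP(S)}(T')$ gives a uniform fibre over every $k$-set $U\subseteq[n]\setminus S$. That is a cleaner replacement for the $A_{B^0,B^1}$/$\Delta(h)$ argument in \cref{claim:sizeOfGroupedSets}. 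You also track wire multiplicities, which the paper's $O_\alpha$ leaves implicit.

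The step you have not justified is the counterpart of \cref{claim:Ogamma}. The gates above a single $U$ differ by permutations of $U$ and can take different values. So to factor out $\binom{t}{j}\binom{n-|S|-t}{k-j}$ you must show that the multiplicity-weighted sum over the gates above $U$ is the same function of $t$ for every $U$ carrying $j$ ones. Your suggested tool does not give this. Every gate above $U$ has support inside $(T\cap S)\cup U$, so $\StabP(S\cup U)$ fixes each of them. That only tells you each individual value depends on $\alpha$, $\bar{x}_U$ and the remaining count.

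What you need is transport inside $\StabP(S)$. Take an input $\bar{x}$ and $k$-sets $U,U'\subseteq[n]\setminus S$, each containing exactly $j$ ones of $\bar{x}$. Choose $\pi\in\StabP(S)$ with $\pi(U)=U'$ that maps the ones of $\bar{x}$ onto themselves, so $\pi$ leaves the assignment unchanged. Then $\pi$ fixes $g$, and by \cref{lem:supportsMoveWithPermutation} it maps the gates above $U$ bijectively, preserving multiplicities, onto those above $U'$. \cref{lem:symmetricCircuitSemantics} then gives equality of the two fibre sums.

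Transport the same way, now moving the assignment too, to a fixed reference set $U_0$ carrying a fixed pattern with $j$ ones. This identifies the common value as $F_j(t)$, a fixed finite weighted sum of gate functions. Each of these is $q$-periodic in $t$ by the induction hypothesis, applied with its support assignment held fixed.

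With fibre sums you no longer need the constancy of $c$ at all. Incidentally, $c$ is the index of $\Stab_{\StabP(S)}(h)$ in $\Stab_{\StabP(S)}(T')$, i.e.\ the orbit of $h$ under permutations of $T'$. It is not an orbit of the group stabilising $h$, as you wrote.
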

Note that this bound on the period length does not depend on the depth of the gate -- this also explains why it is possible already for depth-$2$ circuits to achieve the optimal size. All that matters is the size of the supports.

\begin{corollary}
\label{cor:upperboundPeriodLength}
In the setting of Lemma \ref{lem:inductionPeriodLength}, the $\Sym_n$-symmetric function computed at the output gate of $C_n$ has a period of length at most $m \cdot s(n)^r$, where $r$ denotes the number of distinct prime divisors of $m$.
\end{corollary}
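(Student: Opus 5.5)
This corollary follows directly from \cref{lem:inductionPeriodLength} applied to the output gate, plus a simple estimate of the product $q(m,s)$.

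\textbf{Step 1: the output gate has empty support.} Let $g_{\text{out}}$ be the output gate of $C_n$. Since it is the unique root of $C_n$, every circuit automorphism fixes it. Hence $\Stab(g_{\text{out}}) = \Sym_n$. Then $\StabP(\emptyset) = \Sym_n \leq \Stab(g_{\text{out}})$, so $\emptyset$ is a support, and the unique minimal support is $\supp(g_{\text{out}}) = \emptyset$.

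Consequently, the only assignment to $X_{\supp(g_{\text{out}})}$ is the empty assignment $\alpha$. The function $g_{\text{out},\alpha}$ is just the fully $\Sym_n$-symmetric function computed by $C_n$, viewed as a function of $|\bar{x}|_1$. So the notion of period in \cref{lem:inductionPeriodLength} coincides here with the ordinary period of a symmetric function.

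\textbf{Step 2: apply the lemma.} \cref{lem:inductionPeriodLength} now says that this function has a period of length at most
\[
q(m,s) = m \cdot \prod_{i \in [r]} p_i^{\lfloor \log_{p_i}(s(n)) \rfloor}.
\]

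\textbf{Step 3: bound the product.} For each $i$ we have $p_i^{\lfloor \log_{p_i}(s(n)) \rfloor} \leq p_i^{\log_{p_i}(s(n))} = s(n)$, assuming $s(n) \geq 1$. The edge case $s(n) = 0$ (all supports empty) can be handled by replacing $s$ with $\max(s,1)$, which is still in $o(n)$ and is still an upper bound on all support sizes. Multiplying the $r$ factors gives
\[
q(m,s) \leq m \cdot s(n)^r,
\]
which is the claimed bound.

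There is no real obstacle here. The only points needing care are the observation that the root has empty support, and the trivial case $s(n) = 0$.
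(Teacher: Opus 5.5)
Your proof is correct and is exactly the argument the paper leaves implicit: the root is fixed by all of $\Sym_n$, so its support is empty and the lemma's period is an ordinary period, and $p_i^{\lfloor \log_{p_i} s(n)\rfloor} \leq s(n)$ gives $q(m,s) \leq m \cdot s(n)^r$. One small correction: the $\max(s,1)$ patch is unnecessary, and it would in fact give the bound $m$ rather than $m\cdot 0^r$. The case $s(n)=0$ cannot arise anyway, since for $n \geq 3$ the input gate labelled $x_i$ has support $\{i\}$, which forces $s(n) \geq 1$.
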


\begin{proof}[Proof of \cref{thm:mainSupportLowerBound}]
	If $\maxSup(C_n) < (n/m)^{1/r}$, then by \cref{cor:upperboundPeriodLength}, the $\Sym_n$-symmetric function computed by $C_n$ has a period of length at most $m \cdot ( (n/m)^{1/r})^r = n$. But then, this function cannot be $\AND_n$ by \cref{lem:ANDnotPeriodic}.
\end{proof}	

The remainder of the section is dedicated to the proof of \cref{lem:inductionPeriodLength}.
The lemma is proved by induction on the layer $d$ of the gate $g$ in the circuit $C \coloneqq C_n$. The induction hypothesis is slightly stronger than the claim of the lemma: We show that for every gate $g$ and assignment $\alpha$ to its support, the period length of $g_\alpha(\bar{x}_{[n] \setminus \supp(g)})$ is either $1$ or of the form 
\[
m \cdot \prod_{i \in [r]} p_i^{c_i},
\]
where each exponent $c_i$ satisfies $c_i \leq \lfloor \log_{p_i}(s(n)) \rfloor$.

In the case $d=0$, the gate $g$ is an input gate labelled with a variable $x_i$. The function it computes is clearly $\Stab(i)$-symmetric and has a period of length $1$: Any fixed assignment $\alpha$ to the variable $x_i$ completely determines the value of the gate $g$, and so, $g_\alpha(i)=g_\alpha(i+1)$ for every $i \in \{0, \dots, n-1\}$ (recall that because of the symmetry, we can view $g_\alpha$ as a unary function in $|\bar{x}_{[n] \setminus \{i\}}|_1$).\\ 

Now the inductive step requires more work. Let $g$ be a gate on layer $d+1$ and assume the induction hypothesis holds for all gates in layers $0, \dots, d$. We partition the set $gE(C)$ of children of $g$ into $\StabP(\supp(g))$-orbits, and treat these orbits separately. 
To see that $gE(C)$ indeed decomposes into such orbits, note that $\StabP(\supp(g)) \leq \Stab(g)$. Therefore, every permutation $\pi \in \StabP(\supp(g))$ fixes the gate $g$ and hence must map every child of $g$ to a child of $g$ (since $\pi$ acts on $C$ as a circuit automorphism and thus preserves wires). Thus, indeed, $\StabP(\supp(g))$ acts as a permutation group on $gE(C)$, and this permutation domain decomposes into orbits.

We aim to analyse, for every fixed $X_{\alpha} \colon \supp(g) \to \{0,1\}$, the function $g_\alpha$ and its period, by considering the contribution of each orbit of children separately. 
So fix some $\StabP(\supp(g))$-orbit $O \subseteq gE(C)$. For fixed $\alpha \colon X_{\supp(g)} \to \{0,1\}$, let $O_\alpha(\bar{x}_{[n] \setminus \supp(g)})$ denote the contribution of $O$ to $g_\alpha$:
\[
O_\alpha(\bar{x}_{[n] \setminus \supp(g)}) \coloneqq \sum_{h \in O} h(\alpha(\bar{x}_{\supp(g)})\bar{x}_{[n] \setminus\supp(g)}).
\]
This is simply the sum (before taking modulo $m$) over the values computed by all children $h$ of $g$ in the orbit $O$. 

The induction hypothesis gives us the period length for $h_{\alpha'}(\bar{x}_{[n] \setminus \supp(h)})$ for assignments $\alpha'$ to $X_{\supp(h)}$. In general, the assignment $\alpha|_{\supp(h)}$ has as its domain only a subset of $X_{\supp(h)}$, namely $X_{\supp(h)} \cap X_{\supp(g)}$, so in order to use the induction hypothesis, we will also need to consider an assignment $\beta$ that covers the variables indexed with $S(h) \coloneqq \supp(h) \setminus \supp(g)$, for each $h \in O$.

Our goal is to derive a formula for $O_\alpha(\beta(\bar{x}_{[n] \setminus \sup(g)}))$, for any given assignment $\beta \colon X_{[n] \setminus \sup(g)} \to \{0,1\}$, which will allow us to apply the induction hypothesis in a straightforward way. Note that $\alpha$ and $\beta$ taken together define an assignment to all variables, which we denote as 
\[
\alpha\beta \colon \{x_1, \dots, x_n\} \to \{0,1\}.
\]

For each $h \in O$, the value $h(\alpha\beta(\bar{x}))$
depends solely on $\alpha\beta(\bar{x}_{\supp(h)}) = \alpha(\bar{x}_{\supp(h) \cap \supp(g)})\beta(\bar{x}_{S(h)})$ (which is the fixed assignment to the support of the gate $h$), and on the number of $1$s assigned to the variables outside of the support. This is because $h$ computes a $\StabP(\supp(h))$-symmetric function, as explained earlier. 

We will now group the gates $h \in O$ according to the value of the assignment $\alpha\beta(\bar{x}_{\supp(h)})$, and show that the gates that are grouped together compute the same value under $\alpha\beta$. Moreover, we will show that each such collection of gates is of the same size. With this, we will arrive at a useful expression for $O_\alpha(\beta(\bar{x}_{[n] \setminus \supp(g)}))$.\\

To speak about this formally, we have to define a way to view supports of gates, which are per se unordered sets, as ordered tuples. 
Fix an arbitrary $h^* \in O$ as the representative of the orbit.  Let $s \coloneqq |\supp(h^*)|$. Fix an arbitrary ordering of $\supp(h^*)$ such that the elements of $\supp(g) \cap \supp(h^*)$ come before $\supp(h^*) \setminus \supp(g)$.
We write $\vec{\supp}(h^*)$ for the tuple that enumerates $\supp(h^*)$ in the chosen order. For each $h \in O$, there exists $\pi_{h} \in \StabP(\supp(g))$ (which we can choose) such that $h = \pi_{h}(h^*)$. Then define $\vec{\supp}(h) \coloneqq \pi_{h}(\vec{\supp}(h^*)) = ( \pi_{h}(\vec{\supp}(h^*)_1), \dots, \pi_{h}(\vec{\supp}(h^*)_s))$. Note that by \cref{lem:supportsMoveWithPermutation}, $\pi_{h}(\supp(h^*)) = \supp(h)$, so this definition makes sense. Since $\pi_{h}$ fixes $\supp(g)$, also in $\vec{\supp}(h)$, the elements of $\supp(g) \cap \supp(h)$ are the first entries in the tuple. In this way, we have fixed an ordering of the support of each gate in $O$.

For every $h \in O$, we write $\alpha\beta(\bar{x}_{\vec{\supp}(h)}) \in \{0,1\}^{s}$ for the ordered binary string that we obtain by replacing in the tuple $\bar{x}_{\vec{\supp}(h)}  = (x_{\vec{\supp}(h)_1}, \dots, x_{\vec{\supp}(h)_s})$ every entry with its image under $\alpha$ or $\beta$, whichever applies.

Now for every binary string $\gamma \in \{0,1\}^{s}$, let
\[
O_{\alpha\beta, \gamma} \coloneqq \{h \in O \mid \alpha\beta(\bar{x}_{\vec{\supp}(h)}) = \gamma\}.
\]
Let $I = [|\supp(g) \cap \supp(h^*)|]$. As explained above, for every $h \in O$, this is the set of indices of $\vec{\supp}(h)$ that are occupied by elements of $\supp(g) \cap \supp(h)$. Write $\gamma|_I$ for the substring of $\gamma$ at the indices in $I$, and write $\alpha|_I \in \{0,1\}^{|I|}$ for the binary string given by the assignment $\alpha$ restricted to $X_{\supp(g) \cap \supp(h)}$, where the elements are ordered as in $\vec{\supp}(h)$.
\setcounter{lemma}{3} 
\begin{claim}
	\label{claim:Ogamma}
	~\\
	\vspace*{-1em}
	\begin{itemize}
		\item If $\gamma|_I \neq \alpha|_I$, then $O_{\alpha\beta, \gamma} = \emptyset$.
		\item If $\gamma|_I = \alpha|_I$, then every gate $h \in O_{\alpha\beta, \gamma}$ computes the same value $b_{\alpha\beta, \gamma} \in \{0,1\}$ under the assignment $\alpha\beta$, and we have
		\begin{align*}
			b_{\alpha\beta, \gamma} = h_{\vec{\supp}(h) \mapsto \gamma}(|\alpha\beta(\bar{x})|_1 - |\gamma|_1) \text{ for any } h \in O_{\alpha\beta, \gamma}. 
		\end{align*}
	\end{itemize}
\end{claim}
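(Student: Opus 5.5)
The plan is to prove both items by unfolding the definitions of $\vec{\supp}(h)$ and $O_{\alpha\beta,\gamma}$, and then to use that each $h$ computes a $\StabP(\supp(h))$-symmetric function.

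\emph{First item.} Let $h \in O$ and write $h = \pi_h(h^*)$ with $\pi_h \in \StabP(\supp(g))$. By construction, the first $|I|$ entries of $\vec{\supp}(h^*)$ enumerate $\supp(g)\cap\supp(h^*)$. Since $\pi_h$ fixes $\supp(g)$ pointwise, the first $|I|$ entries of $\vec{\supp}(h) = \pi_h(\vec{\supp}(h^*))$ are exactly these same elements, in the same order. They lie in $\supp(g)$, so the assignment $\alpha\beta$ takes the values given by $\alpha$ on them. Hence $\alpha\beta(\bar{x}_{\vec{\supp}(h)})|_I = \alpha|_I$ for every $h \in O$. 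This string is therefore independent of $h$, which also makes $\alpha|_I$ well defined. So if $\gamma|_I \neq \alpha|_I$, no $h$ lies in $O_{\alpha\beta,\gamma}$, and the set is empty.

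\emph{Second item.} Let $h \in O_{\alpha\beta,\gamma}$. By \cref{lem:symmetricCircuitSemantics}, $h(\bar{x})$ is $\Stab(h)$-symmetric. Since $\StabP(\supp(h)) \leq \Stab(h)$, it is also $\StabP(\supp(h))$-symmetric. Consequently, once the variables of $\supp(h)$ are fixed to $\gamma$ in the order $\vec{\supp}(h)$, the value of $h$ depends only on the number of $1$s among the remaining variables. That number is $|\alpha\beta(\bar{x})|_1 - |\gamma|_1$, so
\[
h(\alpha\beta(\bar{x})) = h_{\vec{\supp}(h)\mapsto\gamma}(|\alpha\beta(\bar{x})|_1 - |\gamma|_1).
\]

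It remains to show that this value is the same for all $h \in O_{\alpha\beta,\gamma}$. Take $h, h' \in O_{\alpha\beta,\gamma}$ and set $\sigma := \pi_{h'}\pi_h^{-1}$, so that $\sigma(h) = h'$ and $\sigma$ maps $\vec{\supp}(h)$ entrywise to $\vec{\supp}(h')$. Define the assignment $\delta := \alpha\beta \circ \sigma^{-1}$, as in \cref{lem:symmetricCircuitSemantics}, i.e.\ $\delta(x_j) = \alpha\beta(x_{\sigma^{-1}(j)})$. That lemma gives $h(\alpha\beta) = h'(\delta)$. Now compare $\delta$ with $\alpha\beta$ as inputs to $h'$. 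On $\vec{\supp}(h')$ the assignment $\delta$ gives $\gamma$, since it copies the values of $\alpha\beta$ on $\vec{\supp}(h)$, and $\alpha\beta$ gives $\gamma$ there as well. Moreover, $\delta$ is a permutation of $\alpha\beta$, so it has the same total number of $1$s, and hence the same number of $1$s outside $\supp(h')$. By $\StabP(\supp(h'))$-symmetry of $h'$, we get $h'(\delta) = h'(\alpha\beta)$. Thus $h(\alpha\beta) = h'(\alpha\beta)$, and $b_{\alpha\beta,\gamma}$ is well defined.

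The main subtlety is the direction convention in \cref{lem:symmetricCircuitSemantics}: it must be checked that $\delta$ places $\gamma$ on $\vec{\supp}(h')$ and not on $\vec{\supp}(h)$. The other point needing care is that the symmetric function $h'_{\vec{\supp}(h')\mapsto\gamma}$ is determined by the count of $1$s alone. This uses that two assignments agreeing on $\supp(h')$ and having equal weight outside it differ by a permutation fixing $\supp(h')$ pointwise.
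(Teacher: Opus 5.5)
Your proof is correct and follows the paper's argument: the second item uses the same $\sigma=\pi_{h'}\pi_h^{-1}$, the same application of \cref{lem:symmetricCircuitSemantics} (you handle the direction convention correctly, so $\gamma$ lands on $\vec{\supp}(h')$), and the same reordering by a permutation in $\StabP(\supp(h'))$. Your treatment of the first item is more explicit than the paper's one-line assertion, since you note that $\pi_h$ fixes the first $|I|$ entries of $\vec{\supp}(h^*)$, which is also what makes $\alpha|_I$ independent of $h$.
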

\begin{claimproof}
	If $\gamma|_I \neq \alpha|_I$, then there is no $h \in O$ with $\alpha\beta(\vec{\supp}(h)) = \gamma$.
	Suppose now that $\gamma|_I = \alpha|_I$ and $O_{\alpha\beta, \gamma} \neq \emptyset$. 
	Let $h_1, h_2 \in O_{\alpha\beta, \gamma}$ be arbitrary and let $\sigma \coloneqq \pi_{h_2} \circ \pi_{h_1}^{-1}$. Then $\sigma(h_1) = h_2$, and by \cref{lem:supportsMoveWithPermutation}, $\sigma(\supp(h_1)) = \supp(h_2)$. Moreover, $\sigma$ preserves the ordering of the supports. That is, for every $i \in [s]$, the $i$-th entry in $\vec{\supp}(h_1)$ is mapped by $\sigma$ to the $i$-th entry of $\vec{\supp}(h_2)$ (by our definition of the orderings of the supports).  
	By \cref{lem:symmetricCircuitSemantics} 
	\[
	h_1(\alpha\beta(\bar{x})) = h_2(\alpha\beta(\sigma^{-1}(\bar{x}))).
	\]
	Now $\alpha\beta(\sigma^{-1}(\bar{x}))$ is an ordered binary string that can be reordered via a permutation $\sigma' \in \StabP(\supp(h_2))$ into precisely the string $\alpha\beta(\bar{x})$, that is, $\alpha\beta((\sigma' \circ\sigma^{-1})(\bar{x}))) = \alpha\beta(\bar{x})$. The existence of the desired $\sigma'$ can be seen as follows: Since $h_2 \in O_{\alpha\beta, \gamma}$, we know that $\alpha\beta(\bar{x}_{\vec{\supp}(h_2)}) = \gamma$. And also, $\alpha\beta(\sigma^{-1}(\bar{x})_{\vec{\supp}(h_2)}) = \gamma$ (this was ensured by the choice of $\sigma$). So $\alpha\beta(\bar{x})$ and $\alpha\beta(\sigma^{-1}(\bar{x}))$ agree on the substring indexed by $\supp(h_2)$. Thus, we can choose $\sigma' \in \StabP(\supp(h_2))$ so that it reorders the positions indexed with $[n] \setminus \supp(h_2)$ in such a way that $\alpha\beta((\sigma' \circ\sigma^{-1})(\bar{x}))) = \alpha\beta(\bar{x})$.
	
	Because $h_2$ computes a $\StabP(\supp(h_2))$-symmetric function, applying $\sigma' \in \StabP(\supp(h_2))$ does not change the computed value, so in total, we get
	\[
	h_1(\alpha\beta(\bar{x})) = h_2(\alpha\beta(\sigma^{-1}(\bar{x}))) = h_2(\alpha\beta((\sigma' \circ\sigma^{-1})(\bar{x}))) = h_2(\alpha\beta(\bar{x})).
	\]
	This shows that under the assignment $\alpha\beta$, every gate in $O_{\alpha\beta,\gamma}$ outputs the same value. 
	This value $b_{\alpha\beta, \gamma}$ only depends on the assignment to the variables indexed with the support of a gate $h \in O_{\alpha\beta, \gamma}$, and on the number of $1$s assigned to the variables in $X_{[n] \setminus \supp(h)}$. The assignment to the ordered support $\vec{\supp}(h)$ 
	is the same, i.e.\ $\vec{\supp}(h) \mapsto \gamma$, for each $h \in O_{\alpha\beta, \gamma}$. For every $h \in O_{\alpha\beta,\gamma}$, the number of $1$s assigned to the variables in $X_{[n] \setminus \supp(h)}$ is $|\alpha\beta(\bar{x})|_1 - |\gamma|_1$, which is the total number of $1$s in $\alpha\beta$ minus the ones that are assigned to $X_{\supp(h)}$.
\end{claimproof}

With this claim, we can now write:
\begin{align}
	O_\alpha(\beta(\bar{x}_{[n] \setminus \supp(g)})) = \sum_{\gamma \in \{0,1\}^{s}} |O_{\alpha\beta, \gamma}| \cdot b_{\alpha\beta, \gamma}. \label{eq:sumOverAllGamma}
\end{align}
It remains to determine $|O_{\alpha\beta, \gamma}|$. We write $\gamma_{\setminus I}$ for the substring of $\gamma$ on the positions not in $I$, that is, $\gamma$ without the first $|I|$ symbols. Also, we use $| \cdot |_0$ to denote the number of $0$s in a given binary string, analogously to $| \cdot |_1$ for the number of $1$s.

\begin{claim}
	\label{claim:sizeOfGroupedSets}	
	For every $\gamma \in \{0,1\}^{s}$ such that $\gamma|_I = \alpha|_I$, there exists a constant $\lambda \in \bbN$ (possibly $\lambda = 0$) such that
	\begin{align*}
		|O_{\alpha\beta, \gamma}| = 
		\binom{|\beta(\bar{x}_{[n] \setminus \supp(g)})|_1 }{|\gamma_{\setminus I}|_1} \cdot \binom{|\beta(\bar{x}_{[n] \setminus \supp(g)})|_0 }{|\gamma_{\setminus I}|_0} \cdot \lambda.
	\end{align*}
\end{claim}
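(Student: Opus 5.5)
We count $O_{\alpha\beta,\gamma}$ by a double-counting argument over ordered tuples. Write $R \coloneqq [n]\setminus\supp(g)$, $T_1 \coloneqq \{i \in R \mid \beta(x_i)=1\}$ and $T_0 \coloneqq R\setminus T_1$, so $|T_1| = |\beta(\bar{x}_{R})|_1$ and $|T_0| = |\beta(\bar{x}_{R})|_0$. For $h \in O$, the tail of $\vec{\supp}(h)$ (the positions outside $I$) is an injective tuple $\vec{S}(h)$ of elements of $R$. The map $h \mapsto \vec{S}(h)$ is well defined because the orderings were fixed via the $\pi_h$.

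\textbf{Step 1: the tail tuple is a $\StabP(\supp(g))$-equivariant map.} Let $\Delta \coloneqq \{\pi(\vec{S}(h^*)) \mid \pi \in \StabP(\supp(g))\}$. Since $\StabP(\supp(g))$ acts on $R$ as the full symmetric group $\Sym(R)$, $\Delta$ is the set of \emph{all} injective $(s-|I|)$-tuples over $R$. I first check that $\vec{S}$ is equivariant in the following sense. The ordering depends on the choice of $\pi_h$, but any two choices differ by an element of $\Stab(h)\cap\StabP(\supp(g))$. Such an element setwise fixes $\supp(h)$ by \cref{lem:supportsMoveWithPermutation}, but it may permute the tail nontrivially. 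To get equivariance, I would work with the multiset of fibres instead of one fixed ordering. Let $\Phi \coloneqq \{(h, \vec{t}) \mid h\in O,\ \vec{t} = \pi(\vec{S}(h^*)) \text{ for some } \pi\in\StabP(\supp(g)) \text{ with } \pi(h^*)=h\}$. The group $\StabP(\supp(g))$ acts transitively on $\Phi$ and on $\Delta$, and both projections are equivariant. Hence each $h \in O$ has the same number $\mu$ of admissible tails, and each $\vec{t}\in\Delta$ lies under the same number $\nu$ of gates.

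\textbf{Step 2: relate $O_{\alpha\beta,\gamma}$ to tuples.} By the proof of \cref{claim:Ogamma}, the condition $\alpha\beta(\bar{x}_{\vec{\supp}(h)})=\gamma$ depends only on the pattern of $\beta$ on the tail. I need this pattern to be the same for every admissible tail of $h$. Alternatively, I reorganise so that the fixed choice $\pi_h$ is used consistently and count pairs. The set of injective tuples $\vec{t}$ over $R$ with $\beta(\vec{t}) = \gamma_{\setminus I}$ is a product of an injective $|\gamma_{\setminus I}|_1$-tuple in $T_1$ and an injective $|\gamma_{\setminus I}|_0$-tuple in $T_0$. Its size is therefore $\binom{|T_1|}{|\gamma_{\setminus I}|_1}\binom{|T_0|}{|\gamma_{\setminus I}|_0}$ times $|\gamma_{\setminus I}|_1!\,|\gamma_{\setminus I}|_0!$. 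The positions of the ones within the tail are fixed by $\gamma$, so only the arrangement within each colour class varies.

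\textbf{Step 3: conclude.} Double counting in $\Phi$ gives $|O_{\alpha\beta,\gamma}|\cdot\mu' = (\text{number of matching tuples})\cdot\nu$, where $\mu'$ counts the admissible tails of a gate in $O_{\alpha\beta,\gamma}$ that match $\gamma$. Then set $\lambda \coloneqq \nu\, |\gamma_{\setminus I}|_1!\,|\gamma_{\setminus I}|_0!/\mu'$. This $\lambda$ depends only on $O$, $\gamma$ and the support structure, and not on $\beta$. The reason is that $\mu,\nu,\mu'$ are determined by the stabiliser $\Stab(h^*)\cap \StabP(\supp(g))$ and its action on the tail. Integrality follows because $|O_{\alpha\beta,\gamma}|$ is an integer and the binomial product is nonzero whenever the set is nonempty.

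\textbf{Main obstacle.} The delicate point is $\mu'$. I have to show that the number of admissible tails of $h$ matching $\gamma$ is independent of $h$ and of $\beta$. The cleanest route, which I would try first, avoids $\mu'$ altogether. Since the claim only uses the fixed orderings defined via the $\pi_h$, I can show directly that $h\mapsto\vec{S}(h)$ has fibres of constant size $\nu$ over its image. The image is $\StabP(\supp(g))$-invariant, hence by transitivity either empty or all of $\Delta$. This transitivity is the step that needs the freedom in choosing $\pi_h$ and must be verified carefully. It gives $\lambda = \nu\,|\gamma_{\setminus I}|_1!\,|\gamma_{\setminus I}|_0!$ or $\lambda = 0$.
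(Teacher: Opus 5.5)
\textbf{The gap.} Your Step~1 identifies the real difficulty: two admissible choices of $\pi_h$ differ by an element of $H\coloneqq\Stab(h^*)\cap\StabP(\supp(g))$. Such an element can reorder the tail through a nontrivial group $P$ of permutations of the tail positions. The proposal never resolves this, and the two steps meant to handle it both fail.
\begin{itemize}
\item In Step~3, counting pairs $(h,\vec t)\in\Phi$ with $\beta(\vec t)=\gamma_{\setminus I}$ from the gate side does not give $|O_{\alpha\beta,\gamma}|\cdot\mu'$. Any gate whose fixed tail has a pattern in the $P$-orbit of $\gamma_{\setminus I}$ also has admissible tails with pattern exactly $\gamma_{\setminus I}$, namely $\mu'=|\Stab_P(\gamma_{\setminus I})|$ of them. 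So the gate side equals $\mu'$ times the size of the union of the sets $O_{\alpha\beta,\gamma'}$ over that orbit, not $\mu'\cdot|O_{\alpha\beta,\gamma}|$.
\item The ``cleanest route'' cannot work. If $\pi\in\StabP(\supp(g))$ fixes $S(h^*)$ pointwise, then it fixes $\supp(h^*)\subseteq\supp(g)\cup S(h^*)$ pointwise, hence fixes $h^*$. So $\nu=1$ and $|O|\cdot\mu=|\Delta|$. For $\mu>1$ the fixed-ordering map $h\mapsto\vec S(h)$ is therefore not onto $\Delta$, and its image cannot be invariant under the transitive group $\StabP(\supp(g))$.
\end{itemize}

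\textbf{The statement itself is false when $P\neq 1$, and the paper's proof has the same hole.} Take the rigid $\Sym_n$-symmetric circuit whose root $g$ has $\supp(g)=\emptyset$ and children $h_{\{a,b\}}=\MOD_m^{\{1\}}(x_a,x_b)$ for all pairs. Then $I=\emptyset$, $\supp(h_{\{a,b\}})=\{a,b\}$, and $\Stab(h_{\{a,b\}})$ contains the transposition $(a\ b)$. Any choice of the $\pi_h$ just orients each pair, giving a tournament on $[n]$. For $\gamma=(1,0)$ and $\beta$ the indicator of a single point $c$, $|O_{\alpha\beta,\gamma}|$ is the out-degree of $c$. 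The claim requires a multiple of $\binom{1}{1}\binom{n-1}{1}=n-1$, i.e.\ $0$ or $n-1$. For $n\ge 3$ some vertex violates this, since at most one vertex beats all others and at most one loses to all. The paper's argument tacitly assumes $\vec S(\pi(h))=\pi(\vec S(h))$, in the inclusion $\Orb_{\Delta(h)}(h)\subseteq A_{B^0,B^1}$ and in the step $\pi(h)\in A_{C^0,C^1}$. This is exactly the equivariance you observed can fail.

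\textbf{The repair.} Merge each $\gamma$ with its $P$-orbit.
\begin{itemize}
\item Membership in the merged class does not depend on the choice of the $\pi_h$.
\item The argument of \cref{claim:Ogamma} still shows that all gates in the merged class take the same value, so ($\star$) and the periodicity argument survive.
\item For the merged class your Step~3 computation is correct. With $\nu=1$ it gives size $\binom{|T_1|}{c_1}\binom{|T_0|}{c_0}\cdot c_1!\,c_0!/|\Stab_P(\gamma_{\setminus I})|$, where $c_j=|\gamma_{\setminus I}|_j$.
\item The last factor is an integer independent of $\beta$ by Lagrange, because $\Stab_P(\gamma_{\setminus I})$ embeds in $\Sym_{c_1}\times\Sym_{c_0}$.
\end{itemize}
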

\begin{claimproof}
	For each $h \in O$, write $\vec{S}(h)$ for the tuple that lists the elements of $S(h) = \supp(h) \setminus \supp(g)$ in the natural order on $[n]$.
	Now assuming that $\gamma|_I = \alpha|_I$, a gate $h \in O$ is in $O_{\alpha\beta, \gamma}$ iff $\beta(\bar{x}_{\vec{S}(h)}) = \gamma_{\setminus I}$. We count for how many gates this is the case. That is, we have to determine the size of the set
	\[
	A \coloneqq \{ h \in O \mid \beta(\bar{x}_{\vec{S}(h)}) = \gamma_{\setminus I}   \}.
	\]
	For $j \in \{0,1\}$, let $X^j  \coloneqq \{ x_i \in X_{[n]\setminus \supp(g)} \mid \beta(x_i) = j\}$.
	For each pair of sets $(B^0, B^1)$ with $B^0 \subseteq X^0, B^1 \subseteq X^1$ and $|B^0| = |\gamma_{\setminus I}|_0, |B^1| = |\gamma_{\setminus I}|_1$, we define
	\[
	A_{B^0, B^1} \coloneqq \{ h \in A \mid S(h) \cap X^0 = B^0 \text{ and }  S(h) \cap X^1 = B^1    \}.
	\]
	Then the sets $A_{B^0,B^1}$ partition $A$, where $B^0$ ranges over all size-$|\gamma_{\setminus I}|_0$ subsets of $X^0$ and $B^1$ ranges over all size-$|\gamma_{\setminus I}|_1$ subsets of $X^1$. 
	The total number of such pairs $(B^0, B^1)$ is clearly 
	\[
	\binom{|\beta(\bar{x}_{[n] \setminus \supp(g)})|_1 }{|\gamma_{\setminus I}|_1} \cdot \binom{|\beta(\bar{x}_{[n] \setminus \supp(g)})|_0 }{|\gamma_{\setminus I}|_0}.
	\]
	It remains to argue that all parts of this partition have equal size, that is, there is a $\lambda \in \bbN$ such that for each choice of $(B^0, B^1)$, $|A_{B^0, B^1}| = \lambda$.
	
	To this end, let $(B^0, B^1)$ be an arbitrary pair such that $A_{B^0,B^1} \neq \emptyset$. 
	Fix an arbitrary $h \in A_{B^0,B^1}$. For $j \in \{0,1\}$, let $Y^j  \coloneqq \{ i \in S(h) \mid \beta(x_i) = j\}$.
	Let $\Delta(h) \leq \StabP([n] \setminus S(h))$ be the group consisting of all $\pi \in \StabP([n] \setminus S(h))$ that fix the sets $Y^0$ and $Y^1$ setwise. Let $\Stab_\Delta(h) \leq \Delta(h)$ be the subgroup of $\Delta(h)$ that fixes $h$. 
	All gates $h'$ in the $\Delta(h)$-orbit of $h$, denoted $\Orb_\Delta(h)$, satisfy $\beta(\bar{x}_{\vec{S}(h')}) =  \beta(\bar{x}_{\vec{S}(h)}) = \gamma_{\setminus I}$, and $\supp(h') = \supp(h)$, so $S(h') \cap X^j = B^j$ for each $j \in [2]$. Thus, $\Orb_\Delta(h) \subseteq A_{B^0,B^1}$. Conversely, we also have $A_{B^0,B^1} \subseteq \Orb_\Delta(h)$: Let $h' \in A_{B^0,B^1}$ be arbitrary. Then $S(h') = S(h) = B^0 \cup B^1$, for the $h \in  A_{B^0,B^1}$ we fixed before, and thus also $\supp(h) = \supp(h')$. We also know that $\beta(\bar{x}_{\vec{S}(h')}) = \gamma_{\setminus I}$. As $h,h' \in O$, there is a permutation $\pi \in \StabP(\supp(g))$ such that $h' = \pi(h)$. We can assume that $\pi$ fixes every point in $[n] \setminus \supp(h')$ (if not, compose it with a permutation that undoes the action of $\pi$ on $[n] \setminus \supp(h')$; this will fix $h' = \pi(h)$ by the definition of support). So in total, $\pi \in \StabP([n] \setminus S(h))$, and thus it fixes $S(h)$ setwise.
	We can also assume that $\pi(\vec{S}(h))= \vec{S}(h')$ (otherwise, if $\pi$ does not order $S(h')$ according to the order $\vec{S}(h')$ we chose via $\pi_{h'}(\vec{\supp}(h^*))$, then we can compose $\pi$ with a permutation that moves $h'$ to $h^*$ and then back to $h'$ via $\pi_{h'}$; the new permutation $\pi$ thus obtained does satisfy $\pi(\vec{S}(h))= \vec{S}(h')$).
	Hence, $\pi$ must fix the sets $Y^j$ for both $j \in [2]$ because else, we would not have $\beta(\bar{x}_{\vec{S}(h')}) = \beta(\bar{x}_{\vec{S}(h)}) = \gamma_{\setminus I}$.
	In total, this shows that $\pi \in \Delta(h)$, and so, $A_{B^0,B^1} = \Orb_\Delta(h)$.
	
	By the Orbit-Stabiliser Theorem, $|A_{B^0,B^1}| =  |\Orb_\Delta(h)| = \frac{|\Delta(h)|}{|\Stab_\Delta(h)|}$.  
	Now both these group sizes are independent of the choice of $(B^0, B^1)$ because for different gates $h \in A$, the respective groups $\Delta(h)$ and $\Stab_\Delta(h)$ are conjugate to one another.
	More precisely: Take any another pair $(C^0,C^1)$ where $C^0$ is a size-$|\gamma_{\setminus I}|_0$ subset of $X^0$ and $C^1$ a size-$|\gamma_{\setminus I}|_1$ subset of $X^1$. Clearly, there is a permutation $\pi \in \StabP(\supp(g))$ such that $\pi(B^0) = C^0$ and $\pi(B^1) = C^1$. Then, for our fixed gate $h \in A_{B^0,B^1}$, we have $h' \coloneqq \pi(h) \in A_{C^0,C^1}$ (by \cref{lem:supportsMoveWithPermutation}). Thus, we must also have $\Delta(h') = \pi \Delta(h) \pi^{-1}$ and $\Stab_\Delta(h') = \pi \Stab_\Delta(h) \pi^{-1}$. Therefore, $|A_{C^0,C^1}| = |\Orb_\Delta(h')| =  |\Orb_\Delta(h)| = |A_{B^0,B^1}|$, which is what we had to show.
	
	In total, we either have $\lambda = 0$, or, as long as there is at least one non-empty set $A_{B^0,B^1}$, they all have the same positive cardinality $\lambda > 0$.
\end{claimproof}

We finally put everything together:
\begin{align*}
	&O_\alpha(\beta(\bar{x}_{[n] \setminus \supp(g)})) = \sum_{\gamma \in \{0,1\}^{|\supp(h^*)|}} |O_{\alpha\beta, \gamma}| \cdot b_{\alpha\beta, \gamma}\\
	&=\sum_{\stackrel{\gamma \in \{0,1\}^{|\supp(h^*)|}}{\gamma|_I = \alpha|_I}} |O_{\alpha\beta, \gamma}| \cdot b_{\alpha\beta, \gamma}\\
	&= \sum_{\stackrel{\gamma \in \{0,1\}^{|\supp(h^*)|}}{\gamma|_I = \alpha|_I}} \Big( \binom{|\beta(\bar{x}_{[n] \setminus \supp(g)})|_1 }{|\gamma_{\setminus I}|_1} \cdot \binom{|\beta(\bar{x}_{[n] \setminus \supp(g)})|_0 }{|\gamma_{\setminus I}|_0} \cdot \lambda\\
	&\cdot h^{\gamma}_{\vec{\supp}(h) \mapsto \gamma}(|\alpha\beta(\bar{x})|_1 - |\gamma|_1) \Big), \tag{$\star$}
\end{align*}
where $h^{\gamma} \in O_{\alpha\beta,\gamma}$ denotes an arbitrarily chosen representative of that set. The first equality is \eqref{eq:sumOverAllGamma}, the second equality is due to the first part of \cref{claim:Ogamma}, and the third equality is given by \cref{claim:sizeOfGroupedSets}, together with the second part of \cref{claim:Ogamma}.

By the induction hypothesis, we know that the $\StabP(\supp(h^\gamma))$-symmetric function $h^{\gamma}_{\vec{\supp}(h) \mapsto \gamma}$ has a period of length 
$
m \cdot \prod_{i \in [r]} p_i^{c_i},
$
where each exponent satisfies $c_i \leq \lfloor \log_{p_i}(s(n)) \rfloor$.
Now it remains to determine from the above expression an upper bound on the period length for the function $O_\alpha(\bar{x}_{[n] \setminus \supp(g)}) \mod m$. 

\begin{claim}
	\label{claim:periodOfBinomialCoefficient}	
	For fixed $\gamma \in \{0,1\}^{|\supp(h)|}$ with $\gamma|_I = \alpha|_I$, the period length of 
	\begin{align*}
		&\binom{|\beta(\bar{x}_{[n] \setminus \supp(g)})|_1 }{|\gamma_{\setminus I}|_1} \cdot  \binom{|\beta(\bar{x}_{[n] \setminus \supp(g)})|_0 }{|\gamma_{\setminus I}|_0} \cdot \lambda \mod m,
	\end{align*}
	with respect to the value of $|\beta(\bar{x}_{[n] \setminus \supp(g)})|_1$, 
	is at most $
	m \cdot \prod_{i \in[r]} p_i^{c_i},
	$
	where each exponent satisfies $c_i \leq \lfloor \log_{p_i}(s(n)) \rfloor$.
\end{claim}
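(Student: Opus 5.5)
Write $N \coloneqq |\beta(\bar{x}_{[n] \setminus \supp(g)})|_1$, $M \coloneqq n - |\supp(g)|$ (the number of variables outside $\supp(g)$), $a \coloneqq |\gamma_{\setminus I}|_1$ and $b \coloneqq |\gamma_{\setminus I}|_0$. Then $|\beta(\bar{x}_{[n] \setminus \supp(g)})|_0 = M - N$, and the expression in question is the function
\[
F(N) \coloneqq \lambda \cdot \binom{N}{a}\binom{M-N}{b} \bmod m,
\]
viewed as a function of $N \in \{0,\dots,M\}$. Since $\lambda$ is a constant independent of $N$, it suffices to show that $G(N) \coloneqq \binom{N}{a}\binom{M-N}{b} \bmod m$ has a period of length $\ell \coloneqq m\cdot\prod_{i} p_i^{c_i}$ with $c_i \le \lfloor \log_{p_i}(s(n)) \rfloor$. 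Indeed, if $G(N+\ell) = G(N)$ in $\bbZ_m$, then multiplying by $\lambda$ preserves this equality modulo $m$.

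First I would handle the factor $\binom{N}{a}$ using \cref{thm:binomialPeriod}. This function of $N$ modulo $m$ has minimal period $\ell(m,a) = m\prod_i p_i^{\lfloor \log_{p_i} a\rfloor}$. Since $a \le |S(h)| \le |\supp(h)| \le s(n)$, each exponent satisfies $\lfloor\log_{p_i} a\rfloor \le \lfloor \log_{p_i} s(n)\rfloor$. The degenerate case $a = 0$ gives the constant function $1$, which has period length $1$. This is consistent with the strengthened induction hypothesis, under which a period length is either $1$ or of the form $m\prod p_i^{c_i}$.

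The second factor $\binom{M-N}{b}$ is a function of $M-N$, and $M$ is fixed. Periodicity of $x\mapsto\binom{x}{b} \bmod m$ with length $\ell(m,b)$ holds for all $x$, so it transfers to $N$: we get $\binom{M-(N+\ell)}{b} \equiv \binom{M-N}{b}$ whenever both arguments lie in the domain. Here I need $\binom{x}{b}$ to be periodic on the relevant range. Using the standard extension $\binom{x}{b}=x(x-1)\cdots(x-b+1)/b!$ to negative $x$ covers this, but we only need $0 \le N, N+\ell \le M$, so both arguments stay nonnegative. Again $b\le s(n)$, so the period has the form $m\prod_i p_i^{\lfloor\log_{p_i} b\rfloor}$ with exponents bounded by $\lfloor \log_{p_i}s(n)\rfloor$.

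Finally I would combine the two periods. The product of two periodic functions has a period equal to the least common multiple of their periods. Both periods have the form $m\prod_i p_i^{e_i}$ (or are $1$), and their lcm is $m\prod_i p_i^{\max(e_i,e_i')}$. The reason is that $m$ is divisible by all $p_i$: write $m=\prod p_i^{\mu_i}$, so the two numbers are $\prod p_i^{\mu_i+e_i}$ and $\prod p_i^{\mu_i+e_i'}$, and their lcm takes exponent-wise maxima. Setting $c_i \coloneqq \max(\lfloor\log_{p_i}a\rfloor,\lfloor\log_{p_i}b\rfloor)$ gives $c_i\le\lfloor\log_{p_i}s(n)\rfloor$, as required. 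I expect the main subtlety to be this lcm bookkeeping together with the reversed argument $M-N$ in the second factor, i.e.\ making sure that periodicity of $\binom{x}{b}$ in $x$ yields periodicity in $N$ on the finite domain. The rest is a direct application of \cref{thm:binomialPeriod}.
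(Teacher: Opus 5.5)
Your proposal is correct and follows essentially the same route as the paper. You drop the constant $\lambda$, rewrite the zero-count as $n-|\supp(g)|-N$, and apply \cref{thm:binomialPeriod} to each factor with $a,b\le s(n)$. You then note that the sign flip does not affect periodicity, and take the lcm via exponent-wise maxima, as the paper does. Your explicit treatment of the degenerate case $a=0$ or $b=0$, where the theorem's formula would involve $\log 0$, is a small extra point of care that the paper leaves implicit.
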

\begin{claimproof}
	We have:
	\begin{align*}
		&\binom{|\beta(\bar{x}_{[n] \setminus \supp(g)})|_1 }{|\gamma_{\setminus I}|_1} \cdot \binom{|\beta(\bar{x}_{[n] \setminus \supp(g)})|_0 }{|\gamma_{\setminus I}|_0} \cdot \lambda \mod m\\
		= & \binom{|\beta(\bar{x}_{[n] \setminus \supp(g)})|_1 }{|\gamma_{\setminus I}|_1} \cdot \binom{n-|\supp(g)| -|\beta(\bar{x}_{[n] \setminus \supp(g)})|_1 }{|\gamma_{\setminus I}|_0} \cdot \lambda \mod m
	\end{align*}
	For analysing the periodic behaviour, the constant factor $\lambda$ is irrelevant and can be dropped. The constants $|\gamma_{\setminus I}|_0$ and $|\gamma_{\setminus I}|_1$ are between $0$ and $|\supp(h) \setminus \supp(g)| \leq s(n)$ (recall that $s(n)$ is the upper bound on the support size we assume in \cref{lem:inductionPeriodLength}). 
	By \cref{thm:binomialPeriod}, each of the binomial coefficients has some period length modulo $m$ (with respect to $|\beta(\bar{x}_{[n] \setminus \supp(g)})|_1$), not necessarily the same one. But in both cases, the period length is of the form 
	$
	m \cdot \prod_{i \in [r]} p_i^{c_i},
	$
	where each exponent satisfies $c_i \leq \lfloor \log_{p_i}(s(n)) \rfloor$.
	
	Note that in the second binomial coefficient, $|\beta(\bar{x}_{[n] \setminus \supp(g)})|_1$ appears with a negative sign, but this does not change the periodicity of the binomial coefficient with respect to this value.
	Also note that $n-|\supp(g)| -|\beta(\bar{x}_{[n] \setminus \supp(g)})|_1$ is between $0$ and $n-|\supp(g)|$, just like $|\beta(\bar{x}_{[n] \setminus \supp(g)})|_1$.   
	Hence, the period length of the product of the two binomial coefficients is the least common multiple of their period lengths. This is obtained by taking for each prime factor $p_i$ of $m$ the greater of its two exponents appearing in the period lengths of the two binomial coefficients. Thus, the period length of the product is again of the form $m \cdot \prod_{i \in[r]} p_i^{c_i}$ with $c_i \leq \lfloor \log_{p_i}(s(n)) \rfloor$.
\end{claimproof}

Now we can combine this claim with the induction hypothesis to compute the period of $(\star)$, when taken modulo $m$.
Since $|\gamma|_1$ is a fixed constant, and $\alpha$ is fixed, the period length of $h^{\gamma}_{\vec{\supp}(h) \mapsto \gamma}(|\alpha\beta(\bar{x})|_1 - |\gamma|_1)$, when viewed as a function of $|\beta(\bar{x}_{[n] \setminus \supp(g)})|_1$, is indeed as given by the induction hypothesis, namely of the form
$
m \cdot \prod_{i \in [r]} p_i^{c_i}.
$
Like in the preceding claim, each exponent satisfies $c_i \leq \lfloor \log_{p_i}(s(n)) \rfloor$.
Thus, together with that claim, it follows that each summand in $(\star) \mod m$ also has a period length of this form: It is again the least common multiple of all the occurring period lengths, which is given by taking the respective greatest exponent $c_i$, for every $i \in [r]$, that appears. 
For the same reason, the period length of the whole sum, which is $O_\alpha(\beta(\bar{x}_{[n] \setminus \supp(g)})) \mod m$, is of this form.

This finishes the period estimation for one orbit $O$. The period length of $g_\alpha$ is the least common multiple of the period lengths of the $O_\alpha$, where $O$ ranges over all $\StabP(\supp(g))$-orbits that $gE(C)$ is partitioned into.
So in total, the period length of $g_\alpha$ is of the form $m \cdot \prod p_i^{c_i}$, where each $c_i$ is at most $\lfloor \log_{p_i}(s(n)) \rfloor$.
This finishes the inductive step in the proof of \cref{lem:inductionPeriodLength}.

\section{Size lower bound for nested block symmetry}

In this section, fix $h \in \bbN$ and a tuple $\boldsymbol{k} = (k_1(n), \dots, k_h(n))$ such that $\prod_{i \in [h]} k_i(n) = n$ for each $n \in \bbN$.
For every $n \in \bbN$, let 
\begin{align*}
k_{\text{min}}(n) \coloneqq \min_{i \in [h]} k_i(n).\\
k_{\text{max}}(n) \coloneqq \max_{i \in [h]} k_i(n).
\end{align*}
denote the smallest and largest block sizes in the tree $\Tt_n^{\boldsymbol{k}}$. 

We now show how to adapt the proof from the previous section to obtain the circuit size lower bound claimed in \cref{thm:main2} for $\Aut(\Tt_n^{\boldsymbol{k}})$-symmetric circuits. 
The main technical result from which the lower bound can be derived is the following variation of \cref{thm:mainSupportLowerBound}:

\begin{theorem}
	\label{thm:mainSupportLowerBoundNested}
	Fix a positive integer $m > 3$ and let $r$ be the number of distinct prime divisors of $m$.
	Let $(C_n)_{n \in \bbN}$ be a family of $\Aut(\Tt_n^{\boldsymbol{k}})$-symmetric $\MOD_m$-circuits. 
	Let $B \in \Bb(\Tt_n^{\boldsymbol{k}})$ be a block that has size $|B| = k_j(n)$, for some $j \in [h]$. \footnote{ Note that we can fix the block $B$ independently of $n$ since the structure of $\Tt_n^{\boldsymbol{k}}$ only depends on $\boldsymbol{k}$, and $n$ just controls the size of $B$.} If $\maxSup_{B}(C_n) < (k_j(n)/m)^{1/r}$, then $C_n$ does not compute $\AND_n$.
\end{theorem}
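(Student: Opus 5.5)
The plan is to reduce to the fully symmetric argument of the previous section by restricting attention to the single block $B$. Let $v$ be the common parent of the nodes in $B$, so $B$ is the set of children of $v$, and let $B = \{b_1, \dots, b_k\}$ with $k = k_j(n)$. Each $b_i$ is the root of a subtree with leaf set $L_0(\{b_i\})$, all of equal size. I will consider only a restricted family of inputs. Every leaf outside $L_0(\{v\})$ is set to $1$. Inside $L_0(\{v\})$, each subtree $L_0(\{b_i\})$ is assigned either all $1$s or all $0$s, according to a new Boolean variable $y_i$. Under this substitution, $\AND_n$ becomes $\AND_k(y_1, \dots, y_k)$. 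By \cref{lem:ANDnotPeriodic}, this has no period of length $\leq k$ as a function of $|\bar y|_1$. It therefore suffices to show that the circuit's output under this substitution is a periodic function of $|\bar y|_1$ with period length at most $m \cdot \maxSup_B(C_n)^r < k$.

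The central step is an analogue of \cref{lem:inductionPeriodLength}, proved by induction on the layer of the gate. The acting group is now the subgroup $\Gamma_B \leq \Aut(\Tt_n^{\boldsymbol k})$ of automorphisms that fix every node outside the subtree of $v$ and permute the children of $v$ rigidly, carrying subtrees isomorphically onto subtrees. Restricted to $B$, this group is a copy of $\Sym(B)$. It preserves the substitution pattern, so it acts on $\bar y$ exactly as $\Sym_k$ does. For each gate $g$, I use $\supp_B(g)$ in place of $\supp(g)$. The claim to prove is that for every assignment $\alpha$ to the $y_i$ with $b_i \in \supp_B(g)$, the restricted function $g_\alpha$ is $\StabP_{\Sym(B)}(\supp_B(g))$-symmetric in the remaining $y$'s. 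Its period should have the form $m\prod_i p_i^{c_i}$ with $c_i \leq \lfloor \log_{p_i} s \rfloor$, where $s = \maxSup_B(C_n)$.

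Symmetry of $g_\alpha$ comes from the definition of $B$-support: every $\pi \in \StabP_{\Sym(B)}(\supp_B(g))$ is realised by some $\sigma \in \Stab(g)$. One checks that $\sigma$ can be taken to preserve the substitution, or at least to map it to an input on which the gate agrees, and then applies \cref{lem:symmetricCircuitSemantics}. With that in hand, the inductive step repeats the orbit decomposition, \cref{claim:Ogamma}, \cref{claim:sizeOfGroupedSets} and \cref{claim:periodOfBinomialCoefficient} verbatim, with $\Sym_n$ replaced by $\Sym(B)$ acting through $\Gamma_B$. This needs the analogue of \cref{lem:supportsMoveWithPermutation}, which the paper states for blockwise supports, together with uniqueness of minimal $B$-supports. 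Uniqueness holds because $s < k/2$. Applying \cref{thm:binomialPeriod} then yields the period bound at the output gate, and comparing with \cref{lem:ANDnotPeriodic} gives the contradiction.

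The main obstacle is making the restriction interact correctly with the symmetry. A permutation $\sigma \in \Stab(g)$ witnessing a $B$-support condition may act nontrivially inside subtrees, or on blocks outside the subtree of $v$. I would handle this by noting that our substitution is invariant under every automorphism that fixes $v$ and permutes its children in any way. Such automorphisms move all-$1$ or all-$0$ subtrees to subtrees of the same kind, and leaves outside the subtree of $v$, all set to $1$, map to leaves that are also set to $1$. So the induced map on $\bar y$ depends only on $\sigma|_B$. The remaining technical care is in the counting claim, since orbits are now taken under $\StabP_{\Gamma_B}$ rather than a full symmetric group. I would first try to work directly with the group $\Stab(g) \cap \Stab(v)$ projected to $\Sym(B)$ and show that the counting still factors through binomial coefficients in $|\bar y|_1$.
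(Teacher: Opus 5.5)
Your proposal is correct and is essentially the paper's argument. Restricting to inputs that are $\bar 1$ outside $L_0(B)$ and constant ($=y_i$) on each $L_0(b_i)$ is exactly the paper's notion of a $B$-period, combined with \cref{lem:nestedBlockPeriodNotAnd}. Your $\supp_B$-induction, using the fact that every witness $\sigma \in \Stab(g)$ for the $B$-support fixes $B$ setwise and so acts on the restricted inputs only through $\sigma|_B$, is precisely how the paper adapts \cref{lem:inductionPeriodLength} to \cref{lem:inductionPeriodLengthNested}. Working inside $\Stab(g)\cap\Stab(v)$ is, if anything, slightly more careful than the paper's orbit group $\StabP_\Gamma(\supp_B(g))$, which need not stabilise $B$ when $\supp_B(g)=\emptyset$ (e.g.\ at the output gate).
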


\begin{corollary}
	\label{cor:sizeLowerBoundNested}
	In the setting of the theorem,
	if $C_n$ computes $\AND_n$ for an $n$ such that $k_{\text{min}}(n) > 8$, then
	$|V(C_n)| \geq \binom{k_{\text{max}}(n)}{(k_{\text{max}}(n)/m)^{1/r}}$.
\end{corollary}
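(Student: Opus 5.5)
The plan is to mirror the proof of \cref{cor:sizeLowerBound}, with the block of largest size playing the role of $[n]$. Fix $j \in [h]$ with $k_j(n) = k_{\text{max}}(n)$, and fix a block $B \in \Bb(\Tt_n^{\boldsymbol{k}})$ on the corresponding level, so that $|B| = k_{\text{max}}(n)$. As in the preliminaries, we may assume by \cref{lem:rigidification} that $C_n$ is rigid. Suppose $C_n$ computes $\AND_n$ and, for a contradiction, that $|V(C_n)| < \binom{k_{\text{max}}(n)}{(k_{\text{max}}(n)/m)^{1/r}}$. Then \cref{thm:mainSupportLowerBoundNested} gives $\maxSup_B(C_n) \geq (k_{\text{max}}(n)/m)^{1/r}$, and the task is to contradict this using \cref{lem:supportSizes}~(2).

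To apply \cref{lem:supportSizes}~(2) to this particular block $B$, I take $k_B \coloneqq (k_{\text{max}}(n)/m)^{1/r}$ (rounded as needed) and verify its hypotheses.
\begin{itemize}
\item The hypothesis $k_{\text{min}}(n) > 8$ holds by assumption.
\item The inequality $1 \leq k_B \leq |B|/4$ follows from $m > 3$ and $r \geq 1$, since then $(|B|/m)^{1/r} \leq |B|/m < |B|/4$. I will also need $k_B \geq 1$. If $k_B < 1$, then $\maxSup_B(C_n) \geq k_B$ is not a contradiction, but in that regime the claimed bound $\binom{|B|}{k_B}$ is at most $1$ (interpreting the binomial appropriately) and hence trivial.
\item The orbit condition $\maxorb_{\Stab(B)}(C_n) \leq \binom{|B|}{k_B}$ holds because every $\Stab(B)$-orbit of gates is a subset of $V(C_n)$. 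This gives $\maxorb_{\Stab(B)}(C_n) \leq |V(C_n)| < \binom{|B|}{k_B}$.
\end{itemize}
The lemma for the blocks other than $B$ is not needed, since the statement only quantifies per block and I can specialise to this one.

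\cref{lem:supportSizes}~(2) then provides, for every gate $g$, a $B$-support of $\Stab(g)$ of size less than $k_B$. Because $k_B < |B|/2$, the unique-minimal-support result \cite[Lemma 26]{blass1999choiceless}, applied to $\Stab(g)|_B \leq \Sym(B)$, shows that $\supp_B(g)$ is defined and has size less than $k_B$. Hence $\maxSup_B(C_n) < (k_{\text{max}}(n)/m)^{1/r}$, contradicting \cref{thm:mainSupportLowerBoundNested}.

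The main obstacle is bookkeeping rather than substance. First, \cref{thm:mainSupportLowerBoundNested} is phrased for a family $(C_n)$ with a hypothesis at each $n$. I will read it pointwise, as the proof of \cref{cor:sizeLowerBound} does, since the underlying periodicity argument works at a single $n$. Second, non-integral values of $k_B$ must be handled by rounding consistently in both the binomial coefficient and the support bound.
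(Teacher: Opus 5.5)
Your proposal is correct and follows essentially the same route as the paper. Both combine \cref{thm:mainSupportLowerBoundNested} with the contrapositive of \cref{lem:supportSizes}~(2) and the fact that every $\Stab(B)$-orbit lies inside $V(C_n)$, for a block of maximal size. The paper states this for all blocks and then picks the largest, whereas you specialise to the largest block from the start, which also conveniently sidesteps the $1 \leq k_B$ hypothesis of the lemma for small blocks.
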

\begin{proof}
	If $C \coloneqq C_n$ computes $\AND_n$, then by \cref{thm:mainSupportLowerBoundNested},
	for every block $B \in \Bb(\Tt_n^{\boldsymbol{k}})$, $\maxsup_B(C) \geq (|B|/m)^{1/r}$.
	Then for every $B \in \Bb(\Tt_n^{\boldsymbol{k}})$, $\maxorb_{\Stab(B)}(C) \geq \binom{|B|}{ (|B|/m)^{1/r} }$ by \cref{lem:supportSizes} (2). 
	Since $|\maxorb_{\Stab(B)}(C)| \leq |V(C)|$, for every $B \in  \Bb(\Tt_n^{\boldsymbol{k}})$, we can pick a block of maximal size and obtain $|V(C)| \geq \binom{k_{\text{max}}(n)}{(k_{\text{max}}(n)/m)^{1/r}}$.
\end{proof}	
Just like in the last section, the size lower bound from this corollary translates into
\[
|V(C_n)| \geq 2^{\Omega(k_{\text{max}}(n)^{1/r} \cdot \log(k_{\text{max}}(n)) )},
\]
which is what is stated in \cref{thm:main2}.
Similarly as in the previous section, \cref{thm:mainSupportLowerBoundNested} is proved by showing that if the supports are not big enough, then the functions computed by the gates have a certain periodic behaviour. However, the notion of period is different now because it has to match the different notion of symmetry. 

Recall from Section \ref{sec:permGroups} that for a tree $\Tt^{\boldsymbol{k}}_n$, and a set $W \subseteq V(\Tt^{\boldsymbol{k}}_n)$ of nodes, $L_0(W)$ denotes the set of leaves in subtrees rooted at nodes in $W$.

\begin{definition}[Block-periodic functions]
	\label{def:nestedBlockPeriod}
	Let $B \in \Bb(\Tt_n^{\boldsymbol{k}})$. Let $\Gamma \leq \Aut(\Tt_n^{\boldsymbol{k}})$ be a group such that for every $\pi \in \Sym(B)$, there is a $\sigma \in \Gamma$ that fixes $B$ setwise and satisfies $\sigma|_B = \pi$.
	Let $f(x_1, \dots, x_n)$ be a $\Gamma$-symmetric function.  
	Let $\beta \colon \{x_1, \dots, x_n\} \to \{0,1\}$ be an assignment such that for each $v \in B$, $\beta(\bar{x}_{L_0(v)}) \in \{\bar{0}, \bar{1}\}$, i.e., $\beta$ is constant on each set $X_{L_0(v)}$, for each $v \in B$. 
	 Then by $\Gamma$-symmetry of $f$, the value of $f(\beta(\bar{x}))$ depends only on $\beta(\bar{x}_{[n] \setminus L_0(B)})$ and on the number 
	\[
	|\beta(\bar{x})|_{1}^B \coloneqq   |\{v \in B \mid  \beta(\bar{x}_{L_0(v)}) = \bar{1}  \}|.
	\]
	We say that $f$ has a \emph{$B$-period} of length $\ell$ if 
	\[
	f(\beta(\bar{x})) = f(\beta'(\bar{x})),
	\]
	for any two assignments $\beta, \beta'$ that are constant on $X_{L_0(v)}$, for each $v \in B$, and satisfy  $|\beta'(\bar{x})|_{1}^B = |\beta(\bar{x})|_{1}^B+\ell$, and $\beta(\bar{x}_{[n] \setminus L_0(B)}) = \beta'(\bar{x}_{[n] \setminus L_0(B)}) \in \{\bar{0}, \bar{1}\}$.  
\end{definition}

\begin{lemma}
	\label{lem:nestedBlockPeriodNotAnd}
	Let $B, \Gamma$ and $f$ be as in \cref{def:nestedBlockPeriod}. 
	If $f(x_1, \dots, x_n)$ has a $B$-period of length $1 \leq \ell \leq |B|$, then $f \neq \AND_n$.
\end{lemma}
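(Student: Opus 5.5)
We argue by contradiction, in the same way as \cref{lem:ANDnotPeriodic}, choosing two assignments that differ in exactly $\ell$ children of $B$. Suppose $f = \AND_n$ and $f$ has a $B$-period of length $\ell$ with $1 \leq \ell \leq |B|$.

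First we define $\beta'$ to be the all-ones assignment. Then $\beta'$ is constant ($\bar{1}$) on $X_{L_0(v)}$ for every $v \in B$, and $\beta'(\bar{x}_{[n] \setminus L_0(B)}) = \bar{1} \in \{\bar 0, \bar 1\}$. Moreover $|\beta'(\bar{x})|_1^B = |B|$ and $f(\beta'(\bar{x})) = \AND_n(\bar{1}) = 1$.

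Next we pick a set $W \subseteq B$ of exactly $\ell$ nodes, which is possible since $\ell \leq |B|$. We define $\beta$ to agree with $\beta'$ everywhere except on the leaves in $L_0(W)$, where it assigns $0$.

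We then check that $\beta$ and $\beta'$ form an admissible pair for \cref{def:nestedBlockPeriod}:
\begin{itemize}
\item $\beta$ is constant on each $X_{L_0(v)}$ for $v \in B$: it is $\bar 0$ if $v \in W$ and $\bar 1$ otherwise.
\item Since $L_0(W) \subseteq L_0(B)$, $\beta$ agrees with $\beta'$ outside $L_0(B)$, namely $\bar 1$ there.
\item $|\beta(\bar{x})|_1^B = |B| - \ell$, so $|\beta'(\bar{x})|_1^B = |\beta(\bar{x})|_1^B + \ell$.
\end{itemize}
The $B$-period therefore gives $f(\beta(\bar{x})) = f(\beta'(\bar{x})) = 1$. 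But $\ell \geq 1$ and each $L_0(v)$ is nonempty, so $\beta$ assigns $0$ to at least one variable, and hence $\AND_n(\beta(\bar{x})) = 0$. This is a contradiction.

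The only point needing care is that both assignments meet the side condition of the definition (blockwise constant on $B$, and equal and constant outside $L_0(B)$). Choosing the all-ones background settles this, so there is no real obstacle.
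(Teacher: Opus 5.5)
Your proof is correct and is the paper's argument: compare the all-ones assignment with the one that is $\bar 0$ on $L_0(v)$ for exactly $\ell$ nodes $v \in B$ and $1$ elsewhere, so the $B$-period forces the value $1$ where $\AND_n$ gives $0$. You also explicitly check the side conditions of the definition (blockwise constancy, agreement and constancy outside $L_0(B)$, and the direction of the shift by $\ell$), which the paper leaves implicit.
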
	
\begin{proof}
	Suppose for a contradiction that $f = \AND_n$. Then $f(\bar{1}) = 1$. Consider an assignment $\beta \colon \{x_1, \dots, x_n\} \to \{0,1\}$, which is $1$ everywhere except that precisely for $\ell$ nodes $v \in B$, $\beta(\bar{x}_{L_0(v)}) = \bar{0}$. 
	Then $f(\beta(\bar{x})) = 1$ because $f$ has a $B$-period of length $\ell$. But this is a contradiction because $\AND_n(\beta(\bar{x})) \neq 1$. 
\end{proof}	

\begin{proof}[Proof of \cref{thm:mainSupportLowerBoundNested}]
	Assume the setting of \cref{thm:mainSupportLowerBound}, so in particular, $\maxSup_{B}(C_n) < (|B|/m)^{1/r}$ for some block $B \in \Bb(\Tt_n^{\boldsymbol{k}})$. 
	By \cref{cor:upperboundPeriodLengthNested} below, the $\Aut(\Tt_n^{\boldsymbol{k}})$-symmetric function computed by $C_n$ has a $B$-period of length at most $m \cdot ((|B|/m)^{1/r})^r = |B|$. But then, this function cannot be $\AND_n$ by \cref{lem:nestedBlockPeriodNotAnd}.
\end{proof}	

Again, it remains to prove the key technical ingredient, \cref{cor:upperboundPeriodLengthNested}, and again, this requires to adjust the notion of periodicity to stabiliser groups of gates that fix the support pointwise.

Let $S \subseteq B$, let $\Gamma \leq \Aut(\Tt_n^{\boldsymbol{k}})$ be a group such that for every $\pi \in \StabP_{\Sym(B)}(S)$, there is a $\sigma \in \Aut(\Tt_n^{\boldsymbol{k}})$ that fixes $B$ setwise and satisfies $\sigma|_B = \pi$. We now consider $\StabP_\Gamma(S) \leq \Gamma$, the pointwise stabiliser of $S$ in $\Gamma$. 
Let $f(x_1, \dots, x_n)$ be a $\StabP_\Gamma(S)$-symmetric function.
Fix an assignment $\alpha \colon X_{[n] \setminus L_0(B \setminus S)} \to \{0,1\}$ which is constant on each set $X_{L_0(v)}$ for each $v \in S$, and constant on $X_{[n] \setminus L_0(B)}$.
Now when $\alpha$ is regarded as fixed, then for assignments $\beta \colon X_{L_0(B \setminus S)} \to \{0,1\}$ that are constant on each set $X_{L_0(v)}$ for each $v \in B \setminus S$, the value of $f(\alpha\beta(\bar{x}))$ only depends on $|\beta(\bar{x})|_{1}^B$.    

Analogously to the previous section, we write $f_\alpha \colon X_{L_0(B \setminus S)} \to \{0,1\}$ for the function obtained from $f$ by fixing the variables in $X_{[n] \setminus L_0(B \setminus S)}$ to the values given by $\alpha$. That is, \[
f_\alpha(\bar{x}_{L_0(B \setminus S)}) \coloneqq f(\alpha(\bar{x}_{[n] \setminus L_0(B \setminus S)})\bar{x}_{L_0(B \setminus S)}).
\]
When we say that a $\StabP_\Gamma(S)$-symmetric function $f$ has a $B$-period, we mean that for every $\alpha \colon  X_{[n] \setminus L_0(B \setminus S)}  \to \{0,1\}$ that is constant on $X_{L_0(v)}$ for each $v \in S$, and constant on $X_{[n] \setminus L_0(B)}$, the function $f_\alpha(\bar{x}_{L_0(B \setminus S)})$ has a $B$-period. 

Now let $g \in V(C)$ be a gate and let $\Gamma \coloneqq \Stab_{\Aut(\Tt_n^{\boldsymbol{k}})}(g)$. 
By the definition of $\supp_{B}(g)$, we know that for every $\pi \in \StabP_{\Sym(B)}(S)$, there is a $\sigma \in \Stab_{\Aut(\Tt_n^{\boldsymbol{k}})}(g)$ that fixes the set $B$ setwise and satisfies $\sigma|_B = \pi$. Therefore, $\Gamma$ has the properties we are assuming in the above paragraph.
The function $g(\bar{x})$ computed by $g$ is always $\Gamma$-symmetric by \cref{lem:symmetricCircuitSemantics}. Because $\StabP_\Gamma(\supp_{B}(g)) \leq \Gamma$, it is also $\StabP_\Gamma(\supp_{B}(g))$-symmetric. So when we say that $g(\bar{x})$ has a $B$-period of length at most $\ell$, we mean that $g_\alpha(\bar{x}_{L_0(B \setminus S)})$ has such a period for every $\alpha \colon X_{[n] \setminus L_0(B \setminus \supp_{B}(g))} \to \{0,1\}$ that is constant on each $X_{L_0(v)}$ for each $v \in \supp_{B}(g)$, and constant on $X_{[n] \setminus L_0(B)}$.
Now the technical lemma that we want to show reads as follows.

\begin{restatable}{lemma}{inductionPeriodLengthNested}
	\label{lem:inductionPeriodLengthNested}
	Let $B \in \Bb(\Tt_n^{\boldsymbol{k}})$ be a block of size $k_j(n)$, for some $j\in [h]$. Let $s \colon \bbN \to \bbN$ be a function in $o(n)$. Fix a number $m \in \bbN$.
	Let $(C_n)_{n \in \bbN}$ be a family of $\Aut(\Tt_n^{\boldsymbol{k}})$-symmetric rigid $\MOD_m$-circuits such that $\maxsup_{B}(C_n) < s(k_j(n))$ for all $n \in \bbN$.
	Let $g$ be a gate in $C_n$ and let $\Gamma \coloneqq  \Stab_{\Aut(\Tt_n^{\boldsymbol{k}})}(g)$. Then the $\StabP_\Gamma(\supp_{B}(g))$-symmetric function $g(\bar{x})$ has a $B$-period of length at most 
	\[
	q(m,s) \coloneqq m \cdot \prod_{i \in [r]} p_i^{\lfloor \log_{p_i}(s(k_j(n))) \rfloor},
	\]
	where the product ranges over the prime factors $p_1, \dots, p_r$ of $m$.
\end{restatable}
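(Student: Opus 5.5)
We will mirror the proof of \cref{lem:inductionPeriodLength}, inducting on the layer $d$ of the gate $g$ in $C \coloneqq C_n$. The hypothesis is strengthened as before: for every admissible $\alpha$ (constant on each $X_{L_0(v)}$ with $v \in \supp_B(g)$, and constant on $X_{[n]\setminus L_0(B)}$), the $B$-period of $g_\alpha$ is either $1$ or of the form $m \cdot \prod_{i} p_i^{c_i}$ with $c_i \leq \lfloor \log_{p_i}(s(k_j(n)))\rfloor$.

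\emph{Base case.} An input gate $x_t$ has $\Gamma$ fixing the leaf $t$. If $t \in L_0(v)$ for some $v \in B$, then $\Gamma$ fixes the ancestor $v$, so $v \in \supp_B(g)$ and $\alpha$ determines the value. If $t \notin L_0(B)$, then $\alpha$ again determines it. In both cases the period is $1$.

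\emph{Inductive step.} Let $\Delta \coloneqq \StabP_\Gamma(\supp_B(g))$. We restrict further to $\Delta' \leq \Delta$ consisting of those elements that fix every node outside $L_0(B)$-subtrees and that act on the subtrees below $B\setminus\supp_B(g)$ by moving whole subtrees rigidly according to a permutation of $B\setminus\supp_B(g)$. By the definition of $B$-support, every permutation of $B\setminus\supp_B(g)$ is realised by some element of $\Gamma$ fixing $B$ setwise. I expect the first real difficulty here: these realising elements may act nontrivially inside subtrees or outside $L_0(B)$. I would resolve this by noting that the assignments we consider are constant on each $X_{L_0(v)}$ and on $X_{[n]\setminus L_0(B)}$. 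Hence any realising $\sigma$ acts on such assignments exactly as the induced permutation of $B$ does, so only the induced action on $B$ matters.

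Partition $gE(C)$ into $\Delta$-orbits $O$. For a fixed orbit, fix a representative $h^*$ and order $\vec{\supp}_B(h)$ via transporting elements $\pi_h\in\Delta$, as in the $\Sym_n$ case, using the analogue of \cref{lem:supportsMoveWithPermutation}. Write $\alpha\beta$ for the combined assignment, with $\beta$ constant on each $X_{L_0(v)}$ for $v\in B\setminus\supp_B(g)$. Group the children $h\in O$ by the pattern $\gamma$ of $\alpha\beta$ on $\vec{\supp}_B(h)$. There is a subtlety: the children's support $\supp_B(h)$ controls only the $B$-level behaviour, and $h$ may have different symmetry at other levels. This is harmless, because $h\in O$ and $\alpha\beta$ is blockwise constant, so $\alpha\beta$ is admissible for $h$.

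Then the analogues of \cref{claim:Ogamma} and \cref{claim:sizeOfGroupedSets} go through with $[n]\setminus\supp(g)$ replaced by $B\setminus\supp_B(g)$, and with counts of ones replaced by $|\cdot|_1^B$:
\[
|O_{\alpha\beta,\gamma}| = \binom{|\beta|_1^B}{|\gamma_{\setminus I}|_1}\binom{|B|-|\supp_B(g)|-|\beta|_1^B}{|\gamma_{\setminus I}|_0}\lambda .
\]
The equal-size-of-parts argument uses conjugation by elements of $\Gamma$ realising permutations of $B\setminus\supp_B(g)$ that send $(B^0,B^1)$ to $(C^0,C^1)$. Orbit–stabiliser is then applied inside the group of elements of $\Delta$ stabilising $S(h)\cap B$ pointwise-by-colour.

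Finally, \cref{thm:binomialPeriod} and the induction hypothesis for $h^\gamma$ give periods of the required form in $|\beta|_1^B$, since $|\gamma_{\setminus I}|\leq s(k_j(n))$. Taking least common multiples over the summands and over the orbits finishes the step.

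The main obstacle is the orbit-counting claim. There we must verify that $\{h\in O : \supp_B(h)\setminus\supp_B(g) = B^0\cup B^1\}$ with a fixed pattern forms a single orbit under a suitable subgroup. The trouble is that elements of $\Gamma$ mapping $h$ to $h'$ need not fix $B\setminus\supp_B(h)$ pointwise, and no longer trivially act only on $B$. I would first try to replace such an element by composing it with an element of $\StabP_{\Stab(h')}(\supp_B(h'))$ that corrects its action on $B\setminus \supp_B(h')$, which exists by the $B$-support property of $h'$.
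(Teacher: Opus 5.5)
Your plan follows the paper's own argument closely. It uses the same strengthened induction and partitions the children into $\StabP_\Gamma(\supp_B(g))$-orbits. It transports orderings of $B$-supports, groups children by patterns $\gamma$, and reorders via realisers whose action inside the subtrees $L_0(v)$ and outside $L_0(B)$ is invisible to blockwise-constant assignments. It ends with the same binomial count in $|\beta|_1^B$ and lcm's of periods. Your auxiliary $\Delta'$ is never used and can be dropped.

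\textbf{The step that fails: your correction in the last paragraph.} The element $\tau\in\StabP_{\Stab(h')}(\supp_B(h'))$ fixes $h'$, but nothing forces it to fix $g$. So $\tau\pi$ may leave $\Gamma$, and orbits of a group not contained in $\Stab(g)$ need not consist of children of $g$.

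In the $\Sym_n$ case this issue did not arise, because the correcting permutation fixes $\supp(g)\cup\supp(h')$ pointwise and hence lies in $\Stab(g)\cap\Stab(h')$. With blockwise supports you only get realisers inside each stabiliser separately, and $(\Stab(g)\cap\Stab(h'))|_B$ can be strictly smaller than $\Stab(g)|_B\cap\Stab(h')|_B$. For example, in $\Sym_\ell\wr\Sym_k$ take two subgroups:
\begin{itemize}
\item the rigid permutations of the $k$ subtrees;
\item the same permutations composed with a fixed transposition inside every subtree exactly when the permutation is odd.
\end{itemize}
Both project onto $\Sym(B)$, but their intersection projects only onto the alternating group on $B$. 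The paper's sketch passes over this same point with ``analogous''.

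\textbf{You need neither the correction nor the single-orbit claim.} With the group you named earlier, which does not fix $B\setminus S(h)$ pointwise, $\pi_{h'}\pi_h^{-1}$ already maps $h$ to $h'$ inside it. Simpler still, avoid orbit--stabiliser altogether. Pick one $\rho\in\Gamma$ that fixes $B$ setwise, fixes $\supp_B(g)$ pointwise, stabilises $X^0$ and $X^1$, and maps $B^j$ onto $C^j$. Such a $\rho$ exists by the $B$-support property of $g$. Since $\alpha\beta\circ\rho^{-1}=\alpha\beta$ (both are constant on each $L_0(v)$ and on $[n]\setminus L_0(B)$), the map $h\mapsto\rho(h)$ is a bijection $A_{B^0,B^1}\to A_{C^0,C^1}$, which is all the binomial formula needs.

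One caveat: the orderings $\vec\supp_B(h)$ were fixed via arbitrary $\pi_h$, so $\rho$ preserves patterns only up to reorderings of $\supp_B(h^*)$ induced by $\Stab(h^*)$. Group gates by these pattern classes instead of ordered patterns; the class still determines $b_{\alpha\beta,\gamma}$ and the counts $|\gamma_{\setminus I}|_0$, $|\gamma_{\setminus I}|_1$.

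Finally, take orbits of $\StabP_\Gamma(\supp_B(g))\cap\Stab(B)$ rather than $\StabP_\Gamma(\supp_B(g))$. When $\supp_B(g)=\emptyset$, e.g.\ at the output gate, the latter is all of $\Gamma$ and can move $B$ to another block. Then $\pi_h(\vec\supp_B(h^*))$ is not an ordering of $\supp_B(h)$.
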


\begin{corollary}
	\label{cor:upperboundPeriodLengthNested}
	In the setting of \cref{lem:inductionPeriodLengthNested}, the $\Aut(\Tt_n^{\boldsymbol{k}})$-symmetric function computed at the output gate of $C_n$ has a $B$-period of length at most $m \cdot s(k_j(n))^r$, where $r$ denotes the number of distinct prime divisors of $m$.
\end{corollary}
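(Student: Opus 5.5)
The plan is to derive the corollary from \cref{lem:inductionPeriodLengthNested}, applied to the output gate $g_{\text{out}}$ of $C_n$, in the same way as \cref{cor:upperboundPeriodLength} follows from \cref{lem:inductionPeriodLength}. It needs two steps: first pass from the $\StabP_\Gamma(\supp_B(g_{\text{out}}))$-symmetric notion of $B$-period to the plain $B$-period of \cref{def:nestedBlockPeriod}, and then bound the length $q(m,s)$ by $m \cdot s(k_j(n))^r$.

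\textbf{Step 1: the support of the output gate is empty.} Every $\pi \in \Aut(\Tt_n^{\boldsymbol{k}})$ fixes the root, so $\Gamma = \Stab_{\Aut(\Tt_n^{\boldsymbol{k}})}(g_{\text{out}}) = \Aut(\Tt_n^{\boldsymbol{k}})$. Then $\Gamma|_B = \Sym(B)$, because any permutation of the children of a fixed node extends to a tree automorphism. Hence $\emptyset$ is a $B$-support, and $\supp_B(g_{\text{out}}) = \emptyset$.

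With empty support, the admissible assignments $\alpha$ in the lemma are exactly the assignments to $X_{[n] \setminus L_0(B)}$ that are constant on that set. The $\beta$'s range over assignments to $X_{L_0(B)}$ that are constant on each $X_{L_0(v)}$. Combining $\alpha$ and $\beta$ gives precisely the pairs of assignments compared in \cref{def:nestedBlockPeriod}: they agree on $X_{[n]\setminus L_0(B)}$ with value in $\{\bar 0,\bar 1\}$, and their counts $|\cdot|_1^B$ differ by $\ell$. So the lemma's conclusion for $g_{\text{out}}$ is literally that the computed function has a $B$-period of length at most $q(m,s)$.

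\textbf{Step 2: bound the period length.} For each $i$ we have $p_i^{\lfloor \log_{p_i}(s(k_j(n)))\rfloor} \le s(k_j(n))$. Taking the product over the $r$ primes gives
\[
q(m,s) \le m \cdot s(k_j(n))^r .
\]
A function with a period of length $q$ has a period of length at most $m \cdot s(k_j(n))^r$, as the corollary claims.

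The only real obstacle is Step 1, the bookkeeping that identifies the lemma's relative notion of $B$-period with that of \cref{def:nestedBlockPeriod}. Once the empty support is established, the rest is immediate. If the rigid circuit has the root stabilised only up to its automorphism action, the same argument still applies, since the output gate is the unique sink and is therefore fixed by every circuit automorphism.
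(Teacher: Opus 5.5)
Your proposal is correct and follows the route the paper intends. The paper gives no separate proof and treats the corollary as the specialisation of \cref{lem:inductionPeriodLengthNested} to the output gate, in parallel to \cref{cor:upperboundPeriodLength}: there $\supp_B$ is empty, so the relative $B$-period is the one of \cref{def:nestedBlockPeriod}, and $p_i^{\lfloor \log_{p_i} s(k_j(n)) \rfloor} \le s(k_j(n))$ gives $q(m,s) \le m \cdot s(k_j(n))^r$. Your explicit checks that $\Gamma|_B = \Sym(B)$ and that the output gate, as the unique sink, is fixed by every circuit automorphism supply exactly the bookkeeping the paper leaves implicit.
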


With this, we can state the proof of the lower bound theorem.
\begin{proof}[Proof of \cref{thm:mainSupportLowerBoundNested}]
	Assume the setting of \cref{thm:mainSupportLowerBoundNested}, so in particular, $\maxSup_{B}(C_n) < (|B|/m)^{1/r}$ for some block $B \in \Bb(\Tt_n^{\boldsymbol{k}})$. 
	By \cref{cor:upperboundPeriodLengthNested}, the $\Aut(\Tt_n^{\boldsymbol{k}})$-symmetric function computed by $C_n$ has a $B$-period of length at most $m \cdot ((|B|/m)^{1/r})^r = |B|$. But then, this function cannot be $\AND_n$ by \cref{lem:nestedBlockPeriodNotAnd}.
\end{proof}	

It remains to prove the technical core, \cref{lem:inductionPeriodLengthNested}.
This is done analogously to the proof of \cref{lem:inductionPeriodLength}, where we essentially ``zoom in'' on the group $\Sym(B)$ instead of performing the calculation for the symmetry group $\Sym_n$. 
We refrain from reiterating the proof of \cref{lem:inductionPeriodLength} with all technicalities. Instead, we only highlight the differences between the two settings.

The goal is again to prove for every gate $g \in V(C)$: For every assignment 
\[
\alpha \colon X_{[n] \setminus L_0(B \setminus \supp_{B}(g))} \to \{0,1\}
\]
that is constant on each $X_{L_0(v)}$ for each $v \in \supp_{B}(g)$, and constant on $X_{[n] \setminus L_0(B)}$, the function $g_\alpha(\bar{x}_{L_0(B \setminus S)})$ has a $B$-period of length  $1$ or of the form 
\[
m \cdot \prod_{i \in [r]} p_i^{c_i},
\]
where each exponent $c_i$ satisfies $c_i \leq \lfloor \log_{p_i}(s(n)) \rfloor$. In the inductive step of the proof, we consider a fixed gate $g$ and assume this statement holds for each of its children. Let $\Gamma \coloneqq  \Stab_{\Aut(\Tt_n^{\boldsymbol{k}})}(g)$.
Now we partition the children into $\StabP_\Gamma(\supp_{B}(g))$-orbits. For such an orbit $O \subseteq gE(C)$,we define $O_\alpha(\bar{x}_{L_0(B \setminus S)})$ analogously as before, as the sum over the evaluations of all $h \in O$, with this fixed partial assignment $\alpha$. 
We then consider assignments $\beta \colon X_{L_0(B \setminus S)} \to \{0,1\}$ but only those that are constant on $X_{L_0(v)}$, for each $v \in B \setminus S$. In the same way as in the proof of \cref{lem:inductionPeriodLength}, we define an ordered support tuple $\vec{\supp}_{B}(h)$ of the same length $s$ for each $h \in O$. Then for every binary string $\gamma \in \{0,1\}^s$, we let
\[
O_{\alpha\beta,\gamma} \coloneqq \{h \in O \mid \alpha\beta(\bar{x}_{L_0(\vec{\supp}_{B}(h))}) = \boldsymbol{\gamma}\},
\]
where $\boldsymbol{\gamma} \in \{0,1\}^{s \cdot |L_0(v)|}$, for an arbitrary $v \in B$, denotes the ``inflated'' string which arises from $\gamma$ by replacing every symbol $b \in \{0,1\}$ in $\gamma$ with the string $bb \dots b$ of length $|L_0(v)|$.
Similarly as before, let $I = [|\supp_{B}(g) \cap \supp_{B}(h^*)|]$, where $h^* \in O$ is the gate that was chosen to fix the orderings of the supports. This is the set of indices of each $\vec{\supp}_{B}(h)$ that are occupied by elements of $\supp_{B}(g) \cap \supp_{B}(h)$. Write $\gamma|_I$ for the substring of $\gamma$ at the indices in $I$, and write $\alpha|_I \in \{0,1\}^{|I|}$ for the binary string given by the assignment $\alpha$ restricted to $X_{L_0(\supp_{B}(g) \cap \supp_{B}(h))}$, where the elements are ordered as in $\vec{\supp}_{B}(h)$.

The analogue of \cref{claim:Ogamma} is proved in the same way as in the last section, where instead of choosing $\sigma' \in \StabP(\supp(h_2))$, here we have to choose a $\sigma' \in \StabP_{\Stab(h_2)}(\supp_{B}(h_2))$ to reorder the respective string. This is possible because we know by the definition of $\supp_{B}(h_2)$ that every permutation $\pi \in \StabP_{\Sym(B)}(\supp_{B}(h_2))$ is realised by some permutation in $\Aut(\Tt_n^{\boldsymbol{k}})$ that fixes the gate $h_2$; we do not have control over how $\sigma'$ permutes the indices in $L_0(v)$, for each $v \in B$, and in $L_0 \setminus L_0(B)$, but this is irrelevant because $\alpha$ and $\beta$ are constant on these.

Thus, the equation \eqref{eq:sumOverAllGamma} that expresses $O_{\alpha}(\beta(\bar{x}_{L_0(B \setminus S)}))$ in terms of the sizes of the sets $O_{\alpha\beta,\gamma}$ also holds in the setting of nested block symmetry we consider here.

Analogously to \cref{claim:sizeOfGroupedSets}, we now have to show that 
\begin{align*}
	|O_{\alpha\beta, \gamma}| = 
	\binom{|\beta(\bar{x}_{[n] \setminus L_0(\supp_{B}(g))})|_1^B }{|\gamma_{\setminus I}|_1} \cdot \binom{|\beta(\bar{x}_{[n] \setminus L_0(\supp_{B}(g))})|_0^B }{|\gamma_{\setminus I}|_0} \cdot \lambda,
\end{align*}
for some constant $\lambda \in \bbN$. This is shown in the same way as in the proof of \cref{claim:sizeOfGroupedSets}, with the following modifications:
For $j \in \{0,1\}$, we now let $X^j \coloneqq \{  v \in B \setminus \supp_B(g) \mid \beta(\bar{x}_{L_0(v)}) = \bar{j}   \}$. Then the sets $A_{B^0,B^1}$ for pairs $(B^0, B^1)$ with $B^0 \subseteq X^0, B^1 \subseteq X^1$ and $|B^0| = |\gamma_{\setminus I}|_0, |B^1| = |\gamma_{\setminus I}|_1$, are defined as before, where $S(h) \subseteq B$ now denotes the set $\supp_B(h) \setminus \supp_B(g)$.
Then again, we can show that for each choice of $(B^0,B^1)$, the set $A_{B^0,B^1}$ has the same size. This is done as in the proof of \cref{claim:sizeOfGroupedSets}, where the group $\Delta(h)$ is now defined as the subgroup of $\Stab_{\Aut(\Tt^{\boldsymbol{k}}_n)}(g)$ that fixes every $v \in B \setminus (Y^0 \cup Y^1)$, and stabilises each $Y^j$ setwise, for $j \in [2]$. Here, $Y^j$ is defined analogously as before, as the set of all $v \in S(h) \subseteq B$ such that $\beta(\bar{x}_{L_0(v)})$ is constantly $j$. The rest of the reasoning is analogous as in the proof of \cref{claim:sizeOfGroupedSets}.

Finally, the proof of \cref{claim:periodOfBinomialCoefficient}, which shows the periodicity of the above expression for $|O_{\alpha\beta, \gamma}|$, only depends on properties of binomial coefficients. This works completely analogously as in the previous section. Note that here, $|\gamma_{\setminus I}|_1$ and $|\gamma_{\setminus I}|_0$ are both upper-bounded by $\maxSup_B(C) \leq s(|B|) = s(k_j(n))$.
This finishes the proof (sketch) of \cref{lem:inductionPeriodLengthNested}.

\section{Symmetric circuit upper bounds}
\label{sec:upperbound}

In this section, we present the upper bound constructions matching the lower bounds in \cref{thm:main1} and \cref{thm:main2}. The fully symmetric depth-2 construction for \cref{thm:main1} is entirely due to \cite{IdziakKK22LICS} and we include a summary of their proof for completeness in Section \ref{sec:upperbound1}. 
The construction for \cref{thm:main2} in Section \ref{sec:upperbound2} does not appear elsewhere in the literature but it is simply a recursive application of the depth-2 construction. In \cite{IdziakKK22LICS}, a similar but more sophisticated recursive construction is presented, leading to a smaller depth at the cost of greater asymptotic size. Our construction here is a more naive variant of this, which is possibly not depth-optimal, but matches the size lower bound from \cref{thm:main2}.

\subsection{Fully symmetric depth-2 construction}
\label{sec:upperbound1}
\begin{theorem}[{\cite[Proposition 3.1]{IdziakKK22LICS}}]
\label{thm:upperbound1}
   Fix $m\in \mathbb{N}$ with at least $r \geq 2$ distinct prime divisors of $m$. For every $n \in \bbN$, there is a $\Sym_n$-symmetric depth-2 $\MOD_m$-circuit with $2^{\Oo(n^{1/r} \cdot \log n)}$ gates which computes $\AND_n$.
\end{theorem}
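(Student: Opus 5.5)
We build the circuit directly, following the idea of \cite{BarringtonBR94} as refined in \cite{IdziakKK22LICS}. Write $m$'s distinct prime divisors as $p_1, \dots, p_r$ and put $d \coloneqq \lceil n^{1/r}\rceil$. We use the elementary symmetric polynomials $e_k(\bar{x}) = \sum_{|T|=k}\prod_{i\in T}x_i = \binom{|\bar{x}|_1}{k}$. By Lucas' theorem, $\binom{|\bar{x}|_1}{p^j} \bmod p$ equals the $j$-th base-$p$ digit of $|\bar{x}|_1$.

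\textbf{Choosing the test conditions.} For each $i \in [r]$ fix an exponent $a_i$ with $p_i^{a_i} \approx d$; then $\prod_i p_i^{a_i} \geq n$ up to a constant factor, which we absorb by enlarging $d$ by a constant. By the Chinese remainder theorem, $|\bar{x}|_1 = n$ holds iff $|\bar{x}|_1 \equiv n \pmod{p_i^{a_i}}$ for every $i$, because $0 \le |\bar{x}|_1 \le n < \prod_i p_i^{a_i}$. The residue of $|\bar{x}|_1$ modulo $p_i^{a_i}$ is determined by its base-$p_i$ digits $0,\dots,a_i-1$, which are the values $\binom{|\bar{x}|_1}{p_i^j} \bmod p_i$ for $j<a_i$. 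Equivalently, $|\bar{x}|_1 \bmod p_i^{a_i}$ is determined by the values $\binom{|\bar{x}|_1}{k} \bmod p_i$ for $k < p_i^{a_i} \le d$.

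\textbf{Building the two layers.} For every subset $T\subseteq[n]$ with $|T| \le d$, create a bottom gate computing $\AND_T$. This is not a $\MOD_m$ gate, so instead we use a single bottom layer of $\MOD_m$ gates, one for each set $T$, computing $\MOD_m^{\{|T|\}}(\bar{x}_T)$, which equals $\prod_{i\in T}x_i$. We then need one top gate that accepts exactly when all $r$ conditions hold simultaneously. The standard trick from \cite[Proposition 3.1]{IdziakKK22LICS} handles this.

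\begin{itemize}
\item First express the indicator of ``$|\bar{x}|_1 \equiv n \pmod{p_i^{a_i}}$'' modulo $p_i$ as a polynomial $Q_i$ in the $e_k$'s with $k \le d$. This uses Fermat's little theorem: form $1-\prod_j(1-(\text{digit difference})^{p_i-1})$.
\item This naively blows up the degree. Instead, use the symmetric-polynomial method of Barrington--Beigel--Rudich: there is a symmetric polynomial $P_i$ over $\bbZ_{p_i}$ of degree $< p_i^{a_i} \le d$ with $P_i(\bar{x}) = [\,|\bar{x}|_1 \equiv n \bmod p_i^{a_i}]$. This holds because functions of $|\bar{x}|_1$ with period $p_i^{a_i}$ are $\bbZ_{p_i}$-combinations of $\binom{|\bar{x}|_1}{k}$, $k<p_i^{a_i}$.
\item Combine the $P_i$ via CRT into a single polynomial $P$ over $\bbZ_m$, with $P \equiv (m/p_i)^{\ast}P_i$ components.
\end{itemize}

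\textbf{The main obstacle.} Combining the components is where care is needed. A single $\MOD_m$ gate sees $P \bmod m$, and we want the accepting set to express the conjunction ``every $P_i \equiv 1$''. By CRT, $\bbZ_m \cong \prod \bbZ_{p_i^{e_i}}$, so the value of $P$ determines each $P_i \bmod p_i$. Hence the accepting set $R \coloneqq \{ z : z \bmod p_i \equiv 1 \ \forall i\}$ works. Realise $P$ as a top $\MOD_m^R$ gate whose wires run to the bottom monomial gates with multiplicities equal to the coefficients of $P$, taken in $\{0,\dots,m-1\}$.

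\textbf{Symmetry and size.} Since each $P_i$ is a combination of the $e_k$, all monomials of the same degree receive the same coefficient. Any $\pi\in\Sym_n$ therefore extends to the circuit automorphism $T\mapsto\pi(T)$. The size is $\sum_{k\le d}\binom nk \cdot m = 2^{O(n^{1/r}\log n)}$.
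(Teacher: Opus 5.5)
There is a genuine gap in your bottom layer. The claim that $\MOD_m^{\{|T|\}}(\bar{x}_T)=\prod_{i\in T}x_i$ is false as soon as $|T|\ge m$. The gate only sees $\sum_{i\in T}x_i \bmod m$, so it also accepts when exactly $m$ (or $2m,\dots$) of the variables in $T$ are $0$. No accepting set and no wire multiplicities can fix this. Among any $m$ weights in $\bbZ_m$, some nonempty subset sums to $0$ (pigeonhole on prefix sums), and zeroing exactly those inputs leaves the gate's value unchanged. So a single $\MOD_m$ gate never computes the $\AND$ of $m$ or more variables. You need monomials of degree up to $d\approx n^{1/r}\gg m$, so the construction collapses.

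Two sanity checks point to the same problem. If your claim were true, the single gate $\MOD_m^{\{n\}}(x_1,\dots,x_n)$ would already compute $\AND_n$. Also, your argument never genuinely uses $r\ge 2$, so for prime-power $m$ it would contradict the known impossibility result \cite[Proposition 2.1]{IdziakKK22LICS}.

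The missing idea is the $\bbZ_{pq}$-expression trick that the paper relies on (\cref{lem:symmetricCircuitsForZpqExpressions}, from \cite[Lemma 3.5]{IdziakKK22LICS}). Your $P_i$, which is $1-t_{p_i^{\nu_i}}$ in the paper's notation, is not built from monomial gates. Instead it is written as a $\bbZ_{p_i}$-linear combination of terms $b(\sum_j\beta_jx_j+c \bmod p)$, where $p\neq p_i$ is a \emph{second} prime dividing $m$ and $b$ is the indicator of being nonzero in $\bbZ_p$. Each such term is a single $\MOD_m$ gate, with input multiplicities $\beta_j$ and accepting set $\{z\in\bbZ_m : z+c\not\equiv 0 \bmod p\}$; this is well defined because $p\mid m$. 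The linear combination is then realised by wire multiplicities into the top gate, and the whole thing can be arranged $\Sym_n$-symmetrically with $2^{O(p_i^{a_i}\log n)}$ terms. This is exactly where $r\ge 2$ is essential.

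The rest of your proposal is fine and parallels the paper's $T(\bar x)=\sum_j\frac{m}{p_j}t_{p_j^{\nu_j}}$ with a $\MOD_m^{\{0\}}$ output gate. That includes the Lucas/binomial-basis description of period-$p_i^{a_i}$ functions, the CRT accepting set at the top gate, and the symmetry and size count. It becomes a proof only once the monomial layer is replaced by this $\bbZ_{pq}$ representation.
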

This symmetric construction was first provided for depth-$3$ circuits by Barrington, Beigel and Rudrich \cite{BarringtonBR94}, and improved to depth $2$ by Idziak, Kawałek, Krzaczkowski \cite{IdziakKK22LICS}, and independently by Chapman and Williams \cite{ChapmanW22}. 
We outline the proof from \cite{IdziakKK22LICS}.
The main building block is what the authors call \emph{$\bbZ_{pq}$-expressions}. 

\begin{definition}[$\bbZ_{pq}$-expressions]
	Let $p,q$ be two distinct primes. Let $n \in \bbN$. 
	Let $b \colon \bbZ_p \to \bbZ_q$ be the function that maps $0$ to $0$ and every $x \in \bbZ_p \setminus \{0\}$ to $1 \in \bbZ_q$. 
	An $n$-ary \emph{$\bbZ_{pq}$-expression} is of the form
	\[
	\sum_{\beta \in \bbZ_p^n, c \in \bbZ} \alpha_{\beta,c} \cdot b\Big(\sum_{i =1}^n \beta_ix_i +c   \mod p \Big) \mod q,
	\]
	where the coefficients $\alpha_{\beta, c}$ are in $\bbZ_q$, the outer sum and the multiplications with the $\alpha_{\beta, c}$ are evaluated in $\bbZ_q$, while the expression inside $b$ is evaluated in $\bbZ_p$. 
\end{definition}

It is straightforward to see that $\bbZ_{pq}$-expressions can be computed by modular counting circuits of depth $2$, in a certain sense:
Since the output of any $\MOD_m^R$-gate is always Boolean, and the result of a $\bbZ_{pq}$-expression is in $\bbZ_q$, the only way in which we can realise such expressions as $\MOD_m$-circuits is to have multiple output wires. The semantics is that the sum over the output wires modulo $q$ is equal to the result of the $\bbZ_{pq}$-expression. Such a $\MOD_m$-circuit with a set of designated output wires that are to be interpreted as a sum in $\bbZ_q$ is called a $\MOD_m$-circuit of \emph{output type $q$} henceforth.
The depth of a circuit of output type $q$ refers to the maximum number of wires along any path, so the layer consisting of the output wires counts towards the depth.
With this definition it is straightforward to write $\bbZ_{pq}$-expressions as modular circuits:
\begin{lemma}
	\label{lem:fromExpressionsToCircuits}
	Let $p,q,m \in \bbN$ be integers such that $m$ has $p$ and $q$ as prime factors. Every $\bbZ_{pq}$-expression can be realised by a depth-$2$ $\MOD_m$-circuit of output type $q$.
\end{lemma}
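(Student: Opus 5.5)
The plan is to build the circuit in two layers, one for the inner $\MOD_p$-part and one for the outer sum, and then check that the number of output wires matches the coefficients modulo $q$. For each pair $(\beta, c)$ with $\alpha_{\beta,c} \neq 0$ we create a single first-layer gate $g_{\beta,c}$. Its job is to compute the Boolean value $b(\sum_i \beta_i x_i + c \bmod p)$, which equals $1$ exactly when $\sum_i \beta_i x_i + c \not\equiv 0 \pmod p$. We then attach $\alpha_{\beta,c}$ output wires to $g_{\beta,c}$, where $\alpha_{\beta,c}$ is read as its representative in $\{0,\dots,q-1\}$. Summing all output wires modulo $q$ then gives exactly $\sum_{\beta,c} \alpha_{\beta,c}\, b(\cdot) \bmod q$, because each gate contributes its Boolean value with multiplicity $\alpha_{\beta,c}$. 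The first layer of gates and the layer of output wires together give depth $2$, in the paper's counting convention.

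The main step is realising $b(\sum_i \beta_i x_i + c \bmod p)$ by a single $\MOD_m$-gate with multiedges.
\begin{itemize}
\item Take $\beta_i \in \{0,\dots,p-1\}$ and wire $x_i$ into the gate with multiplicity $\frac{m}{p}\beta_i$.
\item The gate then sees the integer sum $\frac{m}{p}\sum_i \beta_i x_i$. Modulo $m$, this equals $\frac{m}{p}\cdot t$, where $t = \sum_i \beta_i x_i \bmod p$. This holds because the map $y \mapsto \frac{m}{p}y$ sends $\bbZ_p$ injectively into $\bbZ_m$ and is well defined modulo $p$.
\item Choose the accepting set $R = \{ \frac{m}{p}\cdot t \mid t \in \bbZ_p,\ t + c \not\equiv 0 \pmod p\}$.
\end{itemize}
With this $R$, the gate outputs $1$ exactly when $t + c \not\equiv 0$. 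So the additive constant $c$ is absorbed into the choice of $R$, and no constant inputs are needed.

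It remains to handle edge cases. If the expression has no nonzero coefficients, the empty set of output wires represents $0$. Fan-in $0$ is also harmless: when all $\beta_i = 0$, the gate sums to $0$ and $R$ still decides correctly based on $c$.

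The only step needing any care is justifying that $\frac{m}{p}\sum_i \beta_i x_i \bmod m$ depends only on $\sum_i \beta_i x_i \bmod p$. This is elementary, since $p \mid m$. I would also note that a gate's output is Boolean, so its contribution to the $q$-sum is exactly $\alpha_{\beta,c}\cdot b(\cdot)$. Everything else is bookkeeping.
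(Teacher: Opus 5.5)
Your construction is correct, and it is the natural realisation the paper has in mind. The paper gives no proof and only calls this straightforward: one $\MOD_m$-gate per term, with input multiplicities $\frac{m}{p}\beta_i$, an accepting set that absorbs $c$, and $\alpha_{\beta,c}$ output wires. One small fix: under the paper's convention a node with no incoming edges is an input gate labelled by a variable, so for terms with $\beta=0$ you should not use a fan-in-$0$ gate but instead wire in some variable with multiplicity $m$ (which contributes $0 \bmod m$).
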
	
\noindent
For a Boolean assignment $\beta$ to variables $x_1, \dots, x_n$, we write $\beta(\bar{x})$ for the tuple $(\beta(x_1), \dots, \beta(x_n))$. By $|\beta(\bar{x})|_0$, we denote the number of $0$s in this tuple. 
A particular $\bbZ_{pq}$-expression that is central for the proof of \cref{thm:upperbound1} is the following.
\begin{lemma}
	\label{lem:symmetricCircuitsForZpqExpressions}
	Let $\nu \in \bbN$ and let $q$ be a prime.
	The function $t_{q^\nu}(x_1, \dots, x_n)$ which satisfies for all $\beta \colon \{x_1, \dots, x_n\} \to \{0,1\}$:
	\[
	t_{q^\nu}(\beta(\bar{x})) \coloneqq \begin{cases}
												0 & \text{ if } |\beta(\bar{x})|_0  \text{ is divisible by } q^\nu\\
												1 & \text{ else}
		\end{cases}
	\]
	is expressible as a $\bbZ_{pq}$-expression, for every prime $p \neq q$.
	Moreover, for every $m \in \bbN$ that has $p$ and $q$ as prime factors, this $\bbZ_{pq}$-expression can be realised as a depth-$2$ $\Sym_n$-symmetric circuit of output type $q$ and of size at most $2^{\Oo(q^\nu \cdot \log n)}$.
\end{lemma}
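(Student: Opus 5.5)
We will realise $t_{q^\nu}$ via Lucas' theorem: the number $z \coloneqq |\beta(\bar{x})|_0$ is divisible by $q^\nu$ iff its $\nu$ lowest base-$q$ digits vanish, and the $j$-th digit of $z$ equals $\binom{z}{q^j} \bmod q$. Equivalently, $z \equiv 0 \pmod{q^\nu}$ iff $\binom{z}{k} \equiv 0 \pmod q$ for all $1 \leq k < q^\nu$. Thus the plan is to first express each $\binom{z}{k} \bmod q$, $k < q^\nu$, as a $\bbZ_{pq}$-expression, and then combine them.

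\textbf{Step 1: elementary symmetric counting.} For a set $T \subseteq [n]$ with $|T| = k$, the monomial $\prod_{i \in T}(1-x_i)$ is $1$ iff all variables in $T$ are $0$. We write it with $b$ as $1 - b\big(\sum_{i \in T} (1-x_i) \bmod p\big)$? That fails once $k \geq p$. Instead use an OR-type term: $b\big(\sum_{i \in T} x_i \bmod p\big)$ detects a $1$ in $T$ only when $k < p$. To avoid this, we use the standard trick for AND of $k$ bits with $b$ and coefficient vectors: $\prod_{i\in T}(1-x_i) = \sum_{\beta \in \bbZ_p^T}$-type Fourier sum. Concretely, for a fixed prime $p$, $1 - b(y)$ for $y \in \bbZ_p$ equals $[y=0]$, and $[x_T = \bar 0] = \prod_{i \in T}[x_i = 0]$; expanding $[x_T=\bar 0]$ as a $\bbZ_q$-linear combination of indicators $[\sum_{i\in T}\beta_i x_i = 0 \bmod p]$ over all $\beta \in \bbZ_p^T$ works because the number of $\beta$ with $\sum \beta_i x_i \equiv 0$ is $p^{k-1}$ unless $x_T = \bar 0$ (then $p^k$), and $p$ is invertible mod $q$. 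Hence $[x_T=\bar 0] = (p^k - p^{k-1})^{-1}\big(\sum_{\beta}[\beta\cdot x_T = 0] - p^{k-1}\big)$ in $\bbZ_q$, a $\bbZ_{pq}$-expression with $p^k$ terms, each $1-b(\cdot)$. Summing over all $T$ of size $k$ gives $\binom{z}{k} \bmod q$ as a $\bbZ_{pq}$-expression that is invariant under $\Sym_n$, with $\binom{n}{k}p^k$ terms.

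\textbf{Step 2: combining.} Given the values $y_k \coloneqq \binom{z}{k} \bmod q$ for $k<q^\nu$, we need $t = 1 - \prod_{j<\nu}[d_j = 0]$, a polynomial in the digits. Since $\bbZ_q$-products of $\bbZ_{pq}$-expressions are not directly $\bbZ_{pq}$-expressions, we instead use that $t_{q^\nu}(z)$ depends only on $z \bmod q^\nu$, and $z \mapsto \binom{z}{k} \bmod q$ for $k < q^\nu$ spans all functions of $z \bmod q^\nu$ over $\bbZ_q$ (the binomial basis $\binom{z}{0},\dots,\binom{z}{q^\nu-1}$ is a basis of functions $\bbZ_{q^\nu}\to\bbZ_q$, since it is triangular on $z=0,\dots,q^\nu-1$ and each is $q^\nu$-periodic mod $q$ by Lucas). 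So $t_{q^\nu}(z) = \sum_{k<q^\nu} a_k \binom{z}{k}$ in $\bbZ_q$, a $\bbZ_q$-linear combination of Step 1 expressions, hence itself a $\bbZ_{pq}$-expression. This basis/periodicity claim is the step I expect to be the crux; Lucas' theorem handles periodicity and triangularity gives invertibility.

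\textbf{Step 3: circuit.} Apply \cref{lem:fromExpressionsToCircuits}: each term $b(\beta\cdot x_T + c)$ becomes a bottom $\MOD_m$ gate reading $x_i$ with multiplicity $\beta_i$ (multiplied by $m/p$ to emulate mod $p$) with accepting set for nonzero residues, and coefficients $a_k$ become output wire multiplicities; constants are handled by a constant gate. Symmetry: the gate set indexed by pairs $(T,\beta)$ is closed under $\Sym_n$ acting on $T$, and coefficients depend only on $|T|$, so every permutation extends to an automorphism. Size is $\sum_{k<q^\nu}\binom{n}{k}p^k m \le q^\nu (pn)^{q^\nu} m = 2^{\Oo(q^\nu \log n)}$.
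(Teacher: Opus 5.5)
You follow the same route as the paper, which defers to \cite{IdziakKK22LICS}: write $t_{q^\nu}$ as a $\bbZ_q$-linear combination of the elementary symmetric polynomials $\binom{z}{k}=\sum_{|T|=k}\prod_{i\in T}(1-x_i)$ for $k<q^\nu$, express each monomial as a $\bbZ_{pq}$-expression, and note that symmetry is inherited. Your Step 2 is correct (Lucas periodicity plus unitriangularity of $(\binom{z}{k})_{z,k<q^\nu}$), and so is Step 3.

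\textbf{The gap is in Step 1.} The normalising factor $p^k-p^{k-1}=p^{k-1}(p-1)$ need not be invertible in $\bbZ_q$. Invertibility of $p$ modulo $q$ only covers $p^{k-1}$, not $p-1$. Whenever $q\mid p-1$, and in particular for $q=2$ and every odd prime $p$, you get
\[
\sum_{\beta\in\bbZ_p^T}[\beta\cdot x_T\equiv 0]=p^{k-1}+p^{k-1}(p-1)[x_T=\bar 0]\equiv p^{k-1}\pmod q,
\]
which carries no information about $x$. So for $m=6$ your construction gives nothing for $t_{2^\nu}$ (with $p=3$), and that is one of the two functions needed in \cref{thm:upperbound1}.

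\textbf{The fix} is to count solutions of an inhomogeneous equation instead. Let $k=|T|\ge 1$. If $x_T\neq\bar 0$, the linear form $\beta\mapsto\beta\cdot x_T$ is surjective, so the number of $\beta\in\bbZ_p^T$ with $\beta\cdot x_T\equiv 1\pmod p$ is $p^{k-1}$. If $x_T=\bar 0$, this number is $0$. Hence, in $\bbZ_q$,
\[
\prod_{i\in T}(1-x_i)=1-p^{-(k-1)}\sum_{\beta\in\bbZ_p^T}\bigl(1-b(\beta\cdot x_T-1)\bigr).
\]
This uses only that $p$ is invertible modulo $q$; it is the standard $\MOD_q\circ\MOD_p$ construction of OR. It is still invariant under $\Sym(T)$ and has $p^k$ terms, so your symmetry argument and your size bound $2^{\Oo(q^\nu\log n)}$ go through unchanged.
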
	
\begin{proof}
	This follows from \cite[Lemma 3.5]{IdziakKK22LICS}. The fact that the circuit can be realised in a $\Sym_n$-symmetric way is not stated explicitly there, but can be seen by inspection of the proof.
	To be precise, the proof of \cite[Fact 3.4]{IdziakKK22LICS} shows that $t_{q^\nu}$ is effectively expressed as a linear combination of elementary symmetric polynomials. Each of these polynomials is by definition $\Sym_n$-symmetric, and this carries over to the $\bbZ_{pq}$-expression representing them.
\end{proof}

To prove the upper bound result, we summarise the proof of \cite[Proposition 3.1]{IdziakKK22LICS}:
\begin{proof}[Proof of \cref{thm:upperbound1}]
	Let $p_1, \dots, p_r$ be the prime factors of $m$. Fix integers $\nu_1, \dots, \nu_r$ such that for each $j \in [r]$, we have $p_j^{\nu_j-1} \leq n^{1/r} < p_j^{\nu_j}$. Let
	\[
	T(\bar{x}) \coloneqq \sum_{j =1}^r \frac{m}{p_j} \cdot t_{p_j^{\nu_j}}(\bar{x}) \mod m.
	\]
	One can show that $T(\beta(\bar{x})) = 0$ if and only if $\beta(x_i) = 1$ for every $i \in [n]$:
	If all $\beta(x_i)$ are equal to $1$, then $|\beta(\bar{x})|_0 = 0$ is divisible by every prime power, so each $t_{p_j^{\nu_j}}$ will evaluate to $0$, and hence $T(\beta(\bar{x})) = 0$. Conversely, assume that $T(\beta(\bar{x})) = 0$. This can only be the case if for all $j \in [r]$,  $t_{p_j^{\nu_j}}(\beta(\bar{x})) = 0$. Then $|\beta(\bar{x})|_0$ is divisible by $\prod_{j \in [r]} p_j^{\nu_j} > n$. Since $|\beta(\bar{x})|_0 \leq n$, it follows that $|\beta(\bar{x})|_0 = 0$, which is what we had to show.
	
	By \cref{lem:symmetricCircuitsForZpqExpressions}, each $t_{p_j^{\nu_j}}$ can be expressed as a depth-2 $\Sym_n$-symmetric $\MOD_m$-circuit $C_j$ of output type $p_j$ and of size at most $2^{\Oo(n^{1/r} \cdot \log n)}$.
	Thus, to compute $\AND_n$, we connect the outgoing wires of the depth-2 symmetric circuits $C_1, \dots, C_r$ to an output gate $\MOD_m^{\{0\}}$ that sums up the values $t_{p_j^{\nu_j}}$ modulo $m$, with the respective coefficients $\frac{m}{p_j}$ realised by appropriate wire multiplicities, and outputs $1$ if and only if $T(\beta(\bar{x})) = 0$. Let this circuit be $C$.
	
	Recall that we defined the depth of a modular circuit of output type $q$ in such a way that it includes its outgoing wires. 
	Thus, the outgoing wires of the $C_j$ are already accounted for in their depth, and adding one more output gate on top does not increase the depth of the resulting circuit. Hence, $C$ also has depth $2$.
\end{proof}

\subsection{Nested block-symmetric construction}
\label{sec:upperbound2}

Now we present the construction that achieves the upper bound in \cref{thm:main2}. It simply applies \cref{thm:upperbound1} to recursively compute the AND over each block defined by the tree $\Tt_n^{\boldsymbol{k}}$. 
\begin{theorem}
	\label{thm:upperbound2}
	Let $m \in \bbN$ be a number with $r \geq 2$ distinct prime factors.
	Fix an $h \in \bbN$ and an $h$-tuple $\boldsymbol{k} = (k_1(n), \dots, k_h(n))$ such that $\prod_{i \in [h]} k_i(n) = n$ for all $n \in \bbN$. 
	Let $k_{\text{max}}(n) \coloneqq \max_{i \in [h]} k_i(n)$.
	For every $n \in \bbN$ there is an $\Aut(\Tt_n^{\boldsymbol{k}})$-symmetric $\MOD_m$-circuit $C_n$ of size $2^{\Oo(k_{\text{max}}(n)^{1/r} \cdot \log k_{\text{max}}(n))}$ and depth $2h$ that computes $\AND_n$.
\end{theorem}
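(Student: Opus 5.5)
We build $C_n$ by induction on the levels of $\Tt_n^{\boldsymbol{k}}$, computing the AND of the leaves below every node $v$ by a subcircuit $D_v$, and letting $C_n \coloneqq D_{\text{root}}$. For a leaf $v$ (level $0$), $D_v$ is just the input gate $x_v$. For a node $v$ on level $i \geq 1$ with children $v_1, \dots, v_{k_i(n)}$, we take the depth-2 $\Sym_{k_i(n)}$-symmetric circuit $E_i$ of \cref{thm:upperbound1} for $\AND_{k_i(n)}$, of size $2^{\Oo(k_i(n)^{1/r}\log k_i(n))}$. Its input gates $y_1,\dots,y_{k_i(n)}$ are replaced by the output gates of $D_{v_1}, \dots, D_{v_{k_i(n)}}$. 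Each $D_{v_j}$ outputs a single Boolean value, the AND of its leaves, so correctness follows immediately by induction: $D_v$ outputs $\AND$ of all leaves below $v$. Substitution adds $2$ to the depth per level, so $D_{\text{root}}$ has depth $2h$.

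\textbf{Symmetry.} We argue by induction that $D_v$ is $\Aut(T_v)$-symmetric, where $T_v$ is the subtree rooted at $v$. An element of $\Aut(T_v) \cong \Aut(T') \wr \Sym_{k_i(n)}$ is given by a permutation $\tau \in \Sym_{k_i(n)}$ of the children together with automorphisms $\rho_j \colon T_{v_j} \to T_{v_{\tau(j)}}$. Since all children's subcircuits are built identically, we fix canonical isomorphisms $D_{v_j} \cong D_{v_{j'}}$ that correspond to the canonical tree isomorphisms. By the induction hypothesis, each $\rho_j$ (composed with the canonical identification) extends to a circuit isomorphism $D_{v_j} \to D_{v_{\tau(j)}}$. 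The permutation $\tau$ extends to an automorphism of $E_i$ by $\Sym_{k_i(n)}$-symmetry, mapping input $y_j$ to $y_{\tau(j)}$. These maps glue together along the identified gates, because the output gate of $D_{v_j}$ is sent to the output gate of $D_{v_{\tau(j)}}$, which is exactly where $\tau$ sends $y_j$. The glued map is an automorphism of $D_v$ inducing the given tree automorphism on the leaves.

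\textbf{Size, and the main obstacle.} The total size is
\[
\sum_{i=1}^h \Big(\prod_{i' > i} k_{i'}(n)\Big) \cdot |E_i| \leq h \cdot n \cdot 2^{\Oo(k_{\text{max}}(n)^{1/r}\log k_{\text{max}}(n))},
\]
where $n = \prod_i k_i(n) \leq k_{\text{max}}(n)^h$. Since $h$ is constant, the factor $hn$ is only $2^{\Oo(\log k_{\text{max}}(n))}$, which is absorbed. The one subtle point is the substitution itself. \cref{thm:upperbound1} is phrased via circuits of output type $q$ whose first layer reads input wires. Replacing an input gate by a $\MOD_m$ output gate is harmless, since every gate reading $y_j$ now reads a Boolean gate instead. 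The top $\MOD_m^{\{0\}}$ gate of $E_i$ is an honest Boolean gate, so its output can be fed upward. The depth count therefore stays at exactly $2$ per level.
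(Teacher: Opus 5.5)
Your proposal is correct and follows the same route as the paper: one copy of the depth-$2$ symmetric $\AND$-circuit from \cref{thm:upperbound1} for each internal node of $\Tt_n^{\boldsymbol{k}}$ (the paper indexes these copies by the blocks of children, which amounts to the same thing). This gives depth $2h$ and total size at most $n$ times the per-node bound, with $\log n \leq h \log k_{\text{max}}(n)$ absorbed into the exponent. Your wreath-product gluing argument for $\Aut(\Tt_n^{\boldsymbol{k}})$-symmetry is a faithful expansion of the step the paper only asserts is ``not difficult to check''.
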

\begin{proof}
	The inductive circuit construction follows the structure of $\Tt_n^{\boldsymbol{k}}$. Recall that the tree $\Tt_n^{\boldsymbol{k}}$ defines a set $\Bb(\Tt_n^{\boldsymbol{k}})$ of blocks of siblings in the tree. The AND over each such block can be computed via the circuit from \cref{thm:upperbound1}. 
	Below is a schematic visualisation of the top two levels of $\Tt_n^{\boldsymbol{k}}$ where $k_i(n) = n^{1/h}$ for each $i \in [h]$.
	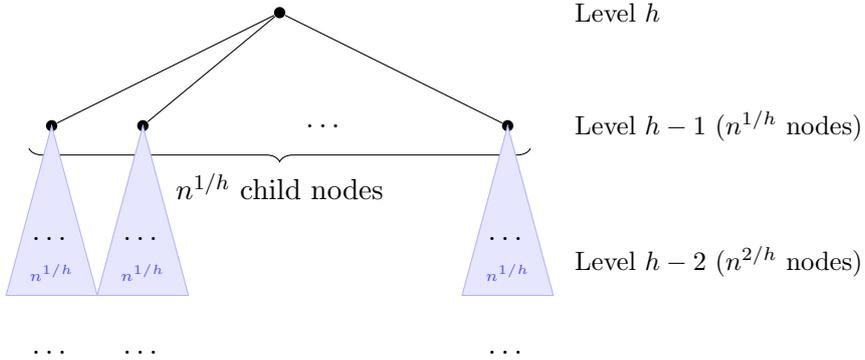
\begin{figure}[h]
		\begin{center}
			\begin{tikzpicture}[
				scale=1.5,
				node/.style={circle, fill=black, inner sep=1.5pt},
				label style/.style={font=\small, color=black}
				]
				
				\node[node] (root) at (0,0) {};
				\node[above=2pt] at (root) {};
				
				\node[node] (L1_1) at (-2,-1) {};
				\node[node] (L1_2) at (-1.2,-1) {};
				\node[node] (L1_10) at (2,-1) {};
				
				\node at (0.4,-1) {$\dots$};
				
				\draw (root) -- (L1_1);
				\draw (root) -- (L1_2);
				\draw (root) -- (L1_10);
				
				\draw [decorate, decoration={brace, amplitude=5pt, mirror}] 
				(-2.2,-1.2) -- (2.2,-1.2) node [black, midway, yshift=-0.5cm] {$n^{1/h}$ child nodes};
				
				\foreach \x in {-2, -1.2, 2} {
					\fill[blue!10, draw=blue!30] (\x,-1) -- (\x-0.4,-2.5) -- (\x+0.4,-2.5) -- cycle;
					\node at (\x, -2) {$\dots$};
					\node[font=\tiny, blue!70] at (\x, -2.3) {$n^{1/h}$};
					\node at (\x, -3) {$\dots$};
				}
				
				\node[anchor=west, label style] at (2.5, 0) {Level $h$};
				\node[anchor=west, label style] at (2.5, -1) {Level $h-1$ ($n^{1/h}$ nodes)};
				\node[anchor=west, label style] at (2.5, -2.2) {Level $h-2$ ($n^{2/h}$ nodes)};
				
			\end{tikzpicture}
			\end{center}
			\caption*{An $n^{1/h}$-ary tree of depth $h$}
	\end{figure}	
	Each blue cone in this picture corresponds to a block $B \in \Bb(\Tt_n^{\boldsymbol{k}})$. On the level of leaves, such a block is a subset of the input variables of size $n^{1/h}$. A block $B$ on a higher level bundles $n^{1/h}$ blocks from the level below. 
	In this example, our circuit will contain one instance of the $\AND_{n^{1/h}}$-circuit from \cref{thm:upperbound1} for each $B \in \Bb(\Tt_n^{\boldsymbol{k}})$. The output of such a circuit will be the AND over $L_0(B)$, that is, the set of all input variables that sit below the block $B$.
	
	Formally, we construct our $\AND_n$-circuit by induction from level $0$ to $h$ of $\Tt_n^{\boldsymbol{k}}$: On level $0$, every block $B \in \Bb(\Tt_n^{\boldsymbol{k}})$ is a subset of leaves. For each such $B$, we invoke \cref{thm:upperbound1} to obtain a $\Sym(B)$-symmetric circuit $C_B$ that computes the $\AND$ over all variables $x_i$ with $i \in B$. 
	
	Next, we consider an arbitrary level $i > 0$ and assume by induction that for all blocks $B \in \Bb(\Tt_n^{\boldsymbol{k}})$ with $B \subseteq L_{i-1}$, a circuit $C_B$ with the following properties has been constructed:\\
	
	\begin{enumerate}
		\item $C_B$ computes the AND over all variables $x_i$ such that $i \in L_0(B)$.
		\item $C_B$ is symmetric under the subgroup of $\Aut(\Tt_n^{\boldsymbol{k}})$ that stabilises $B$ setwise.
	\end{enumerate}	
	~\\
	Now let $B \in \Bb(\Tt_n^{\boldsymbol{k}})$ be a block with $B \subseteq L_i$. To obtain the circuit $C_B$ for this block, we invoke \cref{thm:upperbound1} on the outputs of the circuits $C_{B'}$ for all blocks $B' = B(v)$ for every node $v$ that is a child of some node in $B$. That is, $C_B$ simply computes the AND over the results of all the blocks on level $i-1$ that are bundled in $B$. 
	It is not difficult to check that this circuit $C_B$ again satisfies the two properties above (using the fact that the construction from \cref{thm:upperbound1} is symmetric). 
	 
	For the unique block $B$ on level $h-1$, the circuit $C_B$ is the desired $\Aut(\Tt_n^{\boldsymbol{k}})$-symmetric circuit that computes the $\AND$ over all $n$ input variables.
	The total depth of the construction is $2h$ because the circuit from \cref{thm:upperbound1} has depth $2$, and we use this on $h$ levels. 
	For each block, the subcircuit that \cref{thm:upperbound1} gives us has size at most $2^{\Oo(k_{\text{max}}(n)^{1/r} \cdot \log k_{\text{max}}(n))}$.
	The number of blocks is $|\Bb(\Tt_n^{\boldsymbol{k}})| \leq n$, so the total size of the constructed circuit is at most $n \cdot 2^{\Oo(k_{\text{max}}(n)^{1/r} \cdot \log k_{\text{max}}(n))} = 2^{\Oo(k_{\text{max}}(n)^{1/r} \cdot \log k_{\text{max}}(n) + \log n)}$. The additive $\log n$ term vanishes in the $\Oo$ because $k_{\text{max}}(n) \geq n^{1/h}$.
\end{proof}

\section{Concluding remarks}
Using a clean group-theoretic framework, we have determined the exact size complexity of $\AND_n$ for fully symmetric and nested block-symmetric $\CC^0$-circuits. 
For fully symmetric circuits, it turns out that the depth-$2$ construction from \cite{IdziakKK22LICS} is already optimal. For nested block-symmetric circuits, the optimal size is achieved by recursively nesting that construction. This approach is of course somewhat naive, and we know from \cite[Proposition 4.3]{IdziakKK22LICS} that one can in fact compress its depth down to $h+1$. This is done via a trick that lets the authors chain consecutive $\bbZ_{pq}$-expressions together without explicitly having to compute the $\AND$ over each block. Strangely, the implementation of this trick in \cite[Proposition 4.3]{IdziakKK22LICS} achieves only $1/(r-1)$ in the exponent of the circuit size, rather than $(1/r)$, which we have shown to be optimal for symmetric circuits. After a thorough examination of the depth reduction trick, it seems that this increase in size is perhaps inherent and cannot be avoided if one wishes to achieve a lower depth than $2h$. Thus, our results motivate further efforts to improve the size of the depth-$(h+1)$ construction, or to show that this is impossible (at least under symmetry assumptions). 

Beyond that, we hope that our techniques will provide a basis for further progress towards settling the 30 year old problem $\CC^0$ versus $\ACC^0$. Concretely, we suggest to study the question whether every $\CC^0$-circuit for $\AND_n$ can be efficiently symmetrised. If this is the case, then our symmetric lower bound applies to all of $\CC^0$, and it is separated from $\ACC^0$. If it turns out to be false, then in proving this, we would find a new upper bound construction that achieves a smaller size by breaking symmetries, making progress towards showing $\CC^0 = \ACC^0$.

\printbibliography

@article{brakensiek2022,
  title={Constraint Satisfaction Problems with Global Modular Constraints: Algorithms and Hardness via Polynomial Representations},
  author={Brakensiek, Joshua and Gopi, Sivakanth and Guruswami, Venkatesan},
  journal={SIAM Journal on Computing},
  volume={51},
  number={3},
  pages={577--626},
  year={2022},
  publisher={SIAM}
}

@article{straubing2006note,
  title={A note on {MOD}$_p$-{MOD}$_m$ circuits},
  author={Straubing, Howard and Th{\'e}rien, Denis},
  journal={Theory of Computing Systems},
  volume={39},
  number={5},
  pages={699--706},
  year={2006},
  publisher={Springer}
}

@misc{arxivVersionSymAlgebraicCircuits,
      title={Symmetric Algebraic Circuits and Homomorphism Polynomials}, 
      author={Anuj Dawar and Benedikt Pago and Tim Seppelt},
      year={2025},
      eprint={2502.06740},
      archivePrefix={arXiv},
      primaryClass={cs.CC},
      url={https://arxiv.org/abs/2502.06740}
}

@article{grolmusz2000,
  title={Superpolynomial size set-systems with restricted intersections mod 6 and explicit {R}amsey graphs},
  author={Grolmusz, Vince},
  journal={Combinatorica},
  volume={20},
  number={1},
  pages={71--86},
  year={2000},
  publisher={Springer}
}

@article{zev11,
author = {Dvir, Zeev and Gopalan, Parikshit and Yekhanin, Sergey},
title = {Matching Vector Codes},
journal = {SIAM Journal on Computing},
volume = {40},
number = {4},
pages = {1154-1178},
year = {2011},
}

@inproceedings{zeev15,
author = {Dvir, Zeev and Gopi, Sivakanth},
title = {2-Server PIR with Sub-Polynomial Communication},
year = {2015},
booktitle = {Proceedings of the Forty-Seventh Annual ACM Symposium on Theory of Computing},
pages = {577–584},
numpages = {8}
}

@phdthesis{hastadphd,
  title={Computational limitations for small depth circuits},
  author={H{\aa}stad, Johan},
  year={1986},
  school={Massachusetts Institute of Technology}
}

@inproceedings{ChapmanW22,
  author       = {Brynmor Chapman and
                  R. Ryan Williams},
  title        = {Smaller {ACC}$^0$ Circuits for Symmetric Functions},
  booktitle    = {13th Innovations in Theoretical Computer Science Conference, {ITCS} 2022},
  series       = {LIPIcs},
  volume       = {215},
  pages        = {38:1--38:19},
  publisher    = {Schloss Dagstuhl -- Leibniz-Zentrum f{\"{u}}r Informatik},
  year         = {2022},
  url          = {https://doi.org/10.4230/LIPIcs.ITCS.2022.38},
  doi          = {10.4230/LIPICS.ITCS.2022.38},
  bibsource    = {dblp computer science bibliography, https://dblp.org}
}

@inproceedings{DawarW20,
  author       = {Anuj Dawar and
                  Gregory Wilsenach},
  title        = {Symmetric Arithmetic Circuits},
  booktitle    = {47th International Colloquium on Automata, Languages, and Programming,
                  {ICALP} 2020},
  series       = {LIPIcs},
  volume       = {168},
  pages        = {36:1--36:18},
  publisher    = {Schloss Dagstuhl -- Leibniz-Zentrum f{\"{u}}r Informatik},
  year         = {2020},
  url          = {https://doi.org/10.4230/LIPIcs.ICALP.2020.36},
  doi          = {10.4230/LIPIcs.ICALP.2020.36},
  bibsource    = {dblp computer science bibliography, https://dblp.org}
}

@inproceedings{IdziakKK20,
author = {Idziak, Pawe\l{} M. and Kawa\l{}ek, Piotr and Krzaczkowski, Jacek},
title = {Intermediate Problems in Modular Circuits Satisfiability},
isbn = {9781450371049},
url = {https://doi.org/10.1145/3373718.3394780},
doi = {10.1145/3373718.3394780},
booktitle = {Proceedings of LICS'20: 35th Annual {ACM/IEEE} Symposium on Logic in Computer Science},
pages = {578–590},
numpages = {13},
location = {Saarbr\"{u}cken, Germany},
year = {2020}
}

@article{BarringtonBR94,
	author = {David A. Mix Barrington and Richard Beigel and Steven Rudich},
	bibsource = {dblp computer science bibliography, https://dblp.org},
	doi = {10.1007/BF01263424},
	journal = {Computational Complexity},
	pages = {367--382},
	title = {Representing {B}oolean Functions as Polynomials Modulo Composite Numbers},
	url = {https://doi.org/10.1007/BF01263424},
	volume = {4},
	year = {1994}
}

@InProceedings{IdziakKKW22-icalp,
  author =	{Idziak, Pawe{\l} M. and Kawa{\l}ek, Piotr and Krzaczkowski, Jacek and Wei{\ss}, Armin},
  title =	{{Satisfiability Problems for Finite Groups}},
  booktitle =	{49th International Colloquium on Automata, Languages, and Programming (ICALP 2022)},
  pages =	{127:1--127:20},
  series =	{LIPIcs},
  ISBN =	{978-3-95977-235-8},
  ISSN =	{1868-8969},
  year =	{2022},
  volume =	{229},
  publisher =	{Schloss Dagstuhl -- Leibniz-Zentrum f{\"u}r Informatik},
  URL =		{https://drops.dagstuhl.de/opus/volltexte/2022/16468},
  URN =		{urn:nbn:de:0030-drops-164685},
  doi =		{10.4230/LIPIcs.ICALP.2022.127}
}

@article{BarringtonST90,
	author = {David A. Mix Barrington and Howard Straubing and Denis Th{\'e}rien},
	bibsource = {dblp computer science bibliography, https://dblp.org},
	doi = {10.1016/0890-5401(90)90007-5},
	journal = {Information and Computation},
	number = {2},
	pages = {109--132},
	title = {Non-Uniform Automata Over Groups},
	url = {https://doi.org/10.1016/0890-5401(90)90007-5},
	volume = {89},
	year = {1990}
}

@inproceedings{Smo87,
author = {Roman Smolensky}, 
title = {Algebraic methods in the theory of lower bounds for Boolean circuit complexity}, 
 booktitle = { Proceedings of the 19th Annual {ACM} Symposium on Theory of Computing, 1987 },
publisher = {ACM},
 pages = {77--82},
 year = {1987},
}

@article{Grolmusz01,
  author    = {Vince Grolmusz},
  title     = {A Degree-Decreasing Lemma for ({MOD}\({}_{p}\)-{MOD}\({}_{m}\)) Circuits},
  journal   = {Discrete Mathematics and Theoretical Computer Science},
  volume    = {4},
  number    = {2},
  pages     = {247--254},
  year      = {2001},
  doi       = {10.46298/dmtcs.289},
  bibsource = {dblp computer science bibliography, https://dblp.org}
}

@article{GrolmuszT00,
  author    = {Vince Grolmusz and
               G{\'{a}}bor Tardos},
  title     = {Lower Bounds for ({MOD}\({}_{p}\)-{MOD}\({}_{m}\)) Circuits},
  journal   = {SIAM Journal on Computing},
  volume    = {29},
  number    = {4},
  pages     = {1209--1222},
  year      = {2000},
  url       = {https://doi.org/10.1137/S0097539798340850},
  doi       = {10.1137/S0097539798340850},
  bibsource = {dblp computer science bibliography, https://dblp.org}
}

@INPROCEEDINGS{chat-lowerbounds,
  author={Chattopadhyay, Arkadev and Goyal, Navin and Pudlak, Pavel and Therien, Denis},
  booktitle={47th Annual IEEE Symposium on Foundations of Computer Science (FOCS'06)}, 
  title={Lower bounds for circuits with {MOD}$_m$ gates}, 
  year={2006},
  volume={},
  number={},
  pages={709-718},
  doi={10.1109/FOCS.2006.46}}

@inproceedings{HeR23,
  author       = {William He and
                  Benjamin Rossman},
  title        = {Symmetric Formulas for Products of Permutations},
  booktitle    = {14th Innovations in Theoretical Computer Science Conference, {ITCS}},
  series       = {LIPIcs},
  volume       = {251},
  pages        = {68:1--68:23},
  publisher    = {Schloss Dagstuhl -- Leibniz-Zentrum f{\"{u}}r Informatik},
  year         = {2023},
  url          = {https://doi.org/10.4230/LIPIcs.ITCS.2023.68},
  doi          = {10.4230/LIPICS.ITCS.2023.68},
  bibsource    = {dblp computer science bibliography, https://dblp.org}
}

@misc{laugier2015periodicsequencesmodulom,
      title={Periodic Sequences modulo $m$}, 
      author={Alexandre Laugier and Manjil Saikia},
      year={2015},
      eprint={1209.2371},
      archivePrefix={arXiv},
      primaryClass={math.NT},
      url={https://arxiv.org/abs/1209.2371}, 
}

@article{blass1999choiceless,
  title={Choiceless polynomial time},
  author={Blass, Andreas and Gurevich, Yuri and Shelah, Saharon},
  journal={Annals of Pure and Applied Logic},
  volume={100},
  number={1-3},
  pages={141--187},
  year={1999},
  publisher={Elsevier},
  doi={10.1016/S0168-0072(99)00005-6}
}

@article{anderson_symmetric_2017,
	title = {On Symmetric Circuits and Fixed-Point Logics},
	volume = {60},
	issn = {1432-4350, 1433-0490},
	url = {http://link.springer.com/10.1007/s00224-016-9692-2},
	doi = {10.1007/s00224-016-9692-2},
	pages = {521--551},
	number = {3},
	journaltitle = {Theory of Computing Systems},
	shortjournal = {Theory Comput Syst},
	author = {Anderson, Matthew and Dawar, Anuj},
	urldate = {2024-04-16},
	date = {2017-04},
}

@misc{dawar_symmetric_2024,
	title = {Symmetric Arithmetic Circuits},
	url = {http://arxiv.org/abs/2002.06451},
	number = {{arXiv}:2002.06451},
	publisher = {{arXiv}},
	author = {Dawar, Anuj and Wilsenach, Gregory},
	urldate = {2024-04-05},
	date = {2024-01-19},
	eprinttype = {arxiv},
	eprint = {2002.06451 [cs]},
}

@inproceedings{dawar2021lower,
  author       = {Anuj Dawar and
                  Gregory Wilsenach},
  editor       = {Mark Braverman},
  title        = {Lower Bounds for Symmetric Circuits for the Determinant},
  booktitle    = {13th Innovations in Theoretical Computer Science Conference, {ITCS}
                  2022, January 31 - February 3, 2022, Berkeley, CA, {USA}},
  series       = {LIPIcs},
  volume       = {215},
  pages        = {52:1--52:22},
  publisher    = {Schloss Dagstuhl - Leibniz-Zentrum f{\"{u}}r Informatik},
  year         = {2022},
  url          = {https://doi.org/10.4230/LIPIcs.ITCS.2022.52},
  doi          = {10.4230/LIPICS.ITCS.2022.52},
  bibsource    = {dblp computer science bibliography, https://dblp.org}
}

@InProceedings{IdziakKK25,
  author =	{Idziak, Pawe{\l} M. and Kawa{\l}ek, Piotr and Krzaczkowski, Jacek},
  title =	{{Nonuniform Deterministic Finite Automata over Finite Algebraic Structures}},
  booktitle =	{52nd International Colloquium on Automata, Languages, and Programming (ICALP 2025)},
  pages =	{161:1--161:14},
  series =	{Leibniz International Proceedings in Informatics (LIPIcs)},
  ISBN =	{978-3-95977-372-0},
  ISSN =	{1868-8969},
  year =	{2025},
  volume =	{334},
  publisher =	{Schloss Dagstuhl -- Leibniz-Zentrum f{\"u}r Informatik},
  address =	{Dagstuhl, Germany},
  URN =		{urn:nbn:de:0030-drops-235386},
  doi =		{10.4230/LIPIcs.ICALP.2025.161}
}

@inproceedings{IdziakKK22LICS,
  author       = {Idziak, Pawe{\l} M. and
                   Kawa{\l}ek, Piotr and
                   Krzaczkowski, Jacek},
  title        = {Complexity of Modular Circuits},
  booktitle    = {Proceedings of {LICS} '22: 37th Annual {ACM/IEEE} Symposium on Logic in Computer Science},
  pages        = {32:1--32:11},
  publisher    = {{ACM}},
  year         = {2022},
  url          = {https://doi.org/10.1145/3531130.3533350},
  doi          = {10.1145/3531130.3533350},
  bibsource    = {dblp computer science bibliography, https://dblp.org}
}

@InProceedings{SymHomPolynomials,
  author =	{ Dawar, Anuj and
                   Pago, Benedikt and
                   Seppelt, Tim},
  title =	{{Symmetric Algebraic Circuits and Homomorphism Polynomials}},
  booktitle = {17th Innovations in Theoretical Computer Science Conference (ITCS 2026)},
  year =	{2026},
  abbr = {ITCS26},
  doi = {10.4230/LIPIcs.ITCS.2026.46},
  arxiv={2502.06740}
}

@misc{STOCpaper,
      title={{Lower Bounds in Algebraic Complexity via Symmetry and Homomorphism Polynomials}}, 
      author={ Dwivedi, Prateek and  Pago, Benedikt and Seppelt, Tim},
      year={2026},
     eprinttype  = {arXiv},
      doi={10.48550/arXiv.2601.09343},
        arxiv={2601.09343},
}

@article{Efremenko12,
  author       = { Efremenko, Klim},
  title        = {3-Query Locally Decodable Codes of Subexponential Length},
  journal      = {{SIAM} J. Comput.},
  volume       = {41},
  number       = {6},
  pages        = {1694--1703},
  year         = {2012},
  url          = {https://doi.org/10.1137/090772721},
  doi          = {10.1137/090772721},
  bibsource    = {dblp computer science bibliography, https://dblp.org}
}

@article{Gopalan14,
  author       = {Gopalan, Parikshit},
  title        = {Constructing Ramsey graphs from Boolean function representations},
  journal      = {Comb.},
  volume       = {34},
  number       = {2},
  pages        = {173--206},
  year         = {2014},
  url          = {https://doi.org/10.1007/s00493-014-2367-1},
  doi          = {10.1007/S00493-014-2367-1},
  bibsource    = {dblp computer science bibliography, https://dblp.org}
}

@InProceedings{kawalekweiss25,
  author =	{Kawa{\l}ek, Piotr and Wei{\ss}, Armin},
  title =	{{Violating Constant Degree Hypothesis Requires Breaking Symmetry}},
  booktitle =	{42nd International Symposium on Theoretical Aspects of Computer Science (STACS 2025)},
  pages =	{58:1--58:21},
  series =	{Leibniz International Proceedings in Informatics (LIPIcs)},
  ISBN =	{978-3-95977-365-2},
  ISSN =	{1868-8969},
  year =	{2025},
  volume =	{327},
  editor =	{Beyersdorff, Olaf and Pilipczuk, Micha{\l} and Pimentel, Elaine and Thắng, Nguy\~{ê}n Kim},
  publisher =	{Schloss Dagstuhl -- Leibniz-Zentrum f{\"u}r Informatik},
  address =	{Dagstuhl, Germany},
  URL =		{https://drops.dagstuhl.de/entities/document/10.4230/LIPIcs.STACS.2025.58},
  URN =		{urn:nbn:de:0030-drops-228837},
  doi =		{10.4230/LIPIcs.STACS.2025.58}
}

\newpage

\appendix

\section{Details on symmetry groups and circuits}
\label{sec:appendixGroups}

We first give a detailed proof of the claim that symmetric $\MOD_m$-circuits can always be assumed to be rigid.

\begin{lemma}[Rigidification]
	\label{lem:rigidification}
	Let $\Gamma \leq \Sym_n$.
	If $C$ is a $\Gamma$-symmetric circuit $\MOD_m$-circuit, then there exists a rigid $\Gamma$-symmetric $\MOD_m$-circuit $C'$ that computes the same function as $C$ and satisfies $|C'| \leq |C|$.
\end{lemma}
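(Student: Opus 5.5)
The plan is to build $C'$ from $C$ by collapsing gates that are indistinguishable by an automorphism fixing all inputs. We work in the group $A$ of all automorphisms of $C$ that fix every input gate. Since a circuit automorphism preserves labels and wire multiplicities, it preserves the function computed at each gate, by induction on depth. Therefore all gates in one $A$-orbit compute the same function.

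First, we define an equivalence relation on $V(C)$ that is coarser than the $A$-orbits. We let $g \equiv g'$ if and only if $g$ and $g'$ carry the same label and, layer by layer, have the same multiset of children classes. Concretely, we define this bottom-up by induction on depth, in the spirit of colour refinement on the DAG. Input gates are equivalent only if they carry the same variable, and each variable has exactly one input gate. Internal gates $g,g'$ are equivalent iff $\ell(g)=\ell(g')$ and, for every class $K$, the number of wires (with multiplicity) from $K$ into $g$ equals that into $g'$.

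By induction, equivalent gates compute the same function. For a $\MOD_m^R$ gate, the input sum is $\sum_K (\text{wires from }K)\cdot(\text{value of }K)$, and this sum depends only on these counts.

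Second, we construct $C'$ on the vertex set $V(C)/{\equiv}$. Each class receives the common label, and the wire multiplicity from class $K$ into class $L$ is the common number of wires from $K$ into any member of $L$. The root class is the output. Because this multiplicity is computed at a single representative of $L$, we have $|V(C')|\le|V(C)|$. Each wire of $C'$ also corresponds injectively to wires into one representative, so $|E(C')|\le|E(C)|$. By the induction above, $C'$ computes the same function. Each variable still has exactly one input gate.

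Third, we check symmetry and rigidity. Let $\pi\in\Gamma$ be extended by an automorphism $\sigma$ of $C$. The relation $\equiv$ is defined canonically from the labelled structure, with inputs distinguished by the permuted variables. Consequently, $\sigma$ maps classes to classes, up to the variable relabelling $\pi$, and so induces an automorphism of $C'$ extending $\pi$. Rigidity of $C'$ is the main obstacle. Suppose $\tau$ is an automorphism of $C'$ fixing all input gates. We show by induction on depth that $\tau$ fixes every class. Assume $\tau$ fixes all classes of lower depth, and take a class $L$. Then $\tau(L)$ has the same label and the same wire counts from each already-fixed class $K$.

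For this induction to go through, the depth stratification must be respected. We therefore include the depth of a gate (or, more robustly, the class of each child, as above) as part of the refinement data, so that all children of $L$ lie in classes that are already fixed. Hence $L$ and $\tau(L)$ have identical label and identical child-class counts. In $C'$, two distinct classes never share this data, since otherwise they would have been merged. To guarantee this, we take $\equiv$ to be the fixpoint of the refinement applied to $C'$ itself, or equivalently observe that the quotient is already stable. Therefore $\tau(L)=L$, and $\tau$ is the identity.
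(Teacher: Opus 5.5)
Your proof is correct, but it takes a different route from the paper. The paper quotients $C$ by the orbits of $\Stab_{\text{in}}$, the group of input-fixing automorphisms. It does this layer by layer and has to iterate, because the quotient can acquire new input-fixing automorphisms. Termination comes from the fact that every round with a non-singleton orbit strictly decreases the number of gates.

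You instead quotient once by the structural (hash-consing) equivalence: gates are equivalent iff they have the same recursively defined type, namely their label together with the multiset of child types counted with wire multiplicity.
\begin{itemize}
\item This relation is coarser than the $\Stab_{\text{in}}$-orbits.
\item It is invariant under every automorphism of $C$ that extends some $\pi\in\Gamma$, which gives $\Gamma$-symmetry of $C'$.
\item Rigidity comes for free. The label and child-class multiplicities of a class $L$ in $C'$ are exactly the data of a representative of $L$ in $C$, so two distinct classes can never agree on them. Hence an input-fixing automorphism of $C'$ fixes every class, by induction on height.
\end{itemize}

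What your route buys is a canonical one-shot construction whose output can be strictly smaller than the paper's. For instance, a circuit that is already rigid may still contain two distinct gates of equal type whose parents carry different labels. You merge them, whereas the paper's procedure leaves such a circuit unchanged. The paper's route, in turn, keeps a direct group-theoretic correspondence between gates of $C'$ and orbits in $C$, in line with the symmetric-circuit literature.

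One remark: the hedging in your last paragraph is unnecessary. You do not need to add depth to the refinement data or take a further fixpoint over $C'$. With the type-based definition, equivalent gates automatically have equal height, and children of a class have smaller height in $C'$. The quotient is already stable for the reason you give yourself: two classes with identical data in $C'$ would have had equivalent representatives in $C$.
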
	
\begin{proof}
	We construct $C'$ by merging equivalent gates in $C$. It may be necessary to repeat the following procedure more than once to accomplish rigidity. 
	Formally, we define $C'$ and a surjective map $\delta \colon V(C) \to V(C')$ inductively from the input gates of $C$ to the root. The map $\delta$ keeps track of which gates of $C$ have been merged into which gates of $C'$ and just simplifies the presentation. 
	Let $\Stab_{\text{in}} \leq \Sym(V(C))$ denote the group consisting of all circuit automorphisms of $C$ that fix every input gate of $C$.	
	As long as $C$ is not rigid, there is at least one $\Stab_{\text{in}}$-orbit of gates that is not a singleton set.
	
	By our convention, for every variable, there is a unique input gate with that label in $C$, so input gates never violate rigidity.
	Thus, we let the input gates of $C'$ be the same as in $C$, and define $\delta$ as the identity map on them. 
	Now assume by induction that we have constructed $C'$ and $\delta$ up to layer $d$. We describe the construction on layer $d+1$. Let $V_{d+1} \subseteq V(C)$ be the set of gates in layer $d+1$. For each $\Stab_{\text{in}}$-orbit $O \subseteq V_{d+1}$, we introduce a new gate $g_O$ in layer $d+1$ of $C'$, and we let $\delta(g) \coloneqq g_O$ for every $g \in O$. The operation type of $g_O$ is the same as that of each gate in $O$.
	
	To define the connections between layer $d+1$ and layer $d$ in $C'$,   
	we first note:
	\begin{claim}
		Let $g,g' \in V(C)$ be in the same $\Stab_{\text{in}}$-orbit. Then there is a bijection $\gamma \colon gE(C) \to g'E(C)$ such that for each $h \in gE(C)$, $h$ and $\gamma(h)$ are in the same $\Stab_{\text{in}}$-orbit.
	\end{claim}
	\begin{claimproof}
		Since $g,g'$ are in the same orbit, there exists $\pi \in \Stab_{\text{in}}$ such that $\pi(g) = g'$, and the action of $\pi$ on $gE(C)$ defines a bijection $\gamma \colon gE(C) \to g'E(C)$ with the claimed property.
	\end{claimproof}
	By the claim, for any two gates $g,g' \in O$, it holds that $\{ \delta(h) \mid h \in gE(C) \} = \{ \delta(h) \mid h \in g'E(C)  \}$. Therefore we can pick an arbitrary $g \in O$ and define the set of children of $g_O$ in $C'$ as 
	\[
	g_OE(C') \coloneqq \{ \delta(h) \mid h \in gE(C) \}.
	\]
	For each child $\delta(h)$ of $g_O$ in $C'$, we let the multiplicity of the edge
	between $g_O$ and $\delta(h)$ be defined as follows. Let $m(g,h)$ denote the multiplicity of the edge between $g$ and $h$ in $C$.
	Then in $C'$, the multiplicity of the edge $(g_O, \delta(h))$ is
	\[
	\sum_{h' \in \delta^{-1}(\delta(h)) \cap gE} m(g,h').
	\]
	This finishes the construction of $C'$. Note that by construction, two gates $g_1,g_2 \in V(C)$ are in the same $\Stab_{\text{in}}$-orbit if and only if $\delta(g_1) = \delta(g_2)$. Clearly, $|C'| \leq |C|$. 
	
	\begin{claim}
		$C'$ computes the same function as $C$.
	\end{claim}
	\begin{claimproof}
		We show by induction that for every gate $g \in V(C)$, $\delta(g)$ computes the same function as $g$. For the input gates this is clear. Now consider the inductive step for layer $d+1$. Let $g \in V(C)$ be a gate on layer $d+1$, labelled with the operation $\MOD_m^R$, and let $h_1, \dots, h_k$ be its children in $C$. Then it computes 
		\[
		g(\bar{x}) = \begin{cases}
			1 & \text{ if } (\sum_{i \in [k]} m(g,h_i) \cdot h_i(\bar{x}) \mod m) \in R\\
			0 & \text{ otherwise}
		\end{cases}
		\]
		Now the children of $\delta(g)$ in $C'$ are defined as the $\delta$-images of the children of some gate $g^*$ in the same orbit of $g$ that was used in the construction. Hence, there exists a $\pi^* \in \Stab_{\text{in}}$ that maps $g$ to $g^*$ and the children of $g$ to the children of $g^*$ (preserving edge multiplicities). It is generally true for every $\pi \in \Stab_{\text{in}}$ and any $h \in V(C)$ that $\pi(h)$ and $h$ compute the same function. Thus, we have
		\begin{align*}
			\sum_{i \in [k]} m(g,h_i) \cdot h_i(\bar{x}) &= \sum_{i \in [k]} m(g,h_i) \cdot \pi^*(h_i)(\bar{x}) \\
			&= \sum_{i \in [k]} m(g,h_i) \cdot \delta(\pi^*(h_i))(\bar{x}),
		\end{align*}
		where the last equality holds by induction hypothesis. In the construction of $C'$, the edge multiplicities between $\delta(g^*) = \delta(g)$ and its children $\{\delta(\pi^*(h_i)) \mid i \in [k]\}$ are chosen such that $\delta(g)$ indeed computes the above sum modulo $m$, and checks for membership in $R$. This finishes the inductive step.
	\end{claimproof}

	\begin{claim}
		Every $\pi \in \Gamma$ extends to a circuit automorphism of $C'$, that is, $C'$ is $\Gamma$-symmetric.
	\end{claim}
	\begin{claimproof}
		Let $\pi \in \Gamma$. Since $C$ is $\Gamma$-symmetric, there is an automorphism $\sigma$ of $C$ that $\pi$ extends to. The $\sigma$-image of every $\Stab_{\text{in}}$-orbit $O \subseteq V(C)$ is again a $\Stab_{\text{in}}$-orbit. Thus, we can define a bijection $\sigma' \colon V(C') \to V(C')$ by setting $\sigma'(g_O) \coloneqq g_{\sigma(O)}$ for every $\Stab_{\text{in}}$-orbit $O \subseteq V(C)$. By construction of $C'$ and because $\sigma$ is an automorphism of $C$, $\sigma'$ is an automorphism of $C'$. 
	\end{claimproof}
	The construction is iterated until the resulting circuit $C'$ is rigid, which has to happen at some point, because as long as there is a non-singleton $\Stab_{\text{in}}$-orbit, the construction strictly reduces the number of gates.
\end{proof}

\symmetricCircuitSemantics*
\begin{proof}
	By induction on the circuit structure. Let $g$ be an input gate labelled with a variable $x_i$. Let $j \coloneqq \pi(i)$. Then $g' \coloneqq \pi(g)$ is an input gate labelled with $x_j$. It holds $g(\delta(x_1), \dots, \delta(x_n)) = \delta(x_i)$ and $g'(\delta(x_1), \dots, \delta(x_n)) = \delta(x_j)$. In other words, $g(x_1, \dots, x_n)$ is the $i$-th projection, and $g'(x_1, \dots, x_n)$ is the $j$-th projection.
	So, as desired, we have
	\[
	g(\delta(x_1), \dots, \delta(x_n)) = \delta(x_i) = g'(\delta(\pi^{-1}(x_1)), \dots, \delta(\pi^{-1}(x_n))). 
	\]
	For the inductive step, let $g$ be an internal gate of the circuit and assume that the statement holds for all children of $g$. 
	Let $h_1, \dots, h_k$ denote the children of $g$. Let $f$ denote the operation computed by $g$, which is the same as the operation of $\pi(g)$. The proof works for every fully symmetric operation $f$, in particular for $f = \MOD_m^R$. Then 
	\begin{align*}
			g(\delta(x_1), \dots, \delta(x_n)) &= f(h_1(\delta(x_1), \dots, \delta(x_n)), \dots, h_k(\delta(x_1), \dots, \delta(x_n)))\\
		&= f(\pi(h_1)(\delta(\pi^{-1}(x_1)), \dots, \delta(\pi^{-1}(x_n))), \dots, \pi(h_k)(\delta(\pi^{-1}(x_1)), \dots, \delta(\pi^{-1}(x_n))))\\
		&= \pi(g)(\delta(\pi^{-1}(x_1)), \dots, \delta(\pi^{-1}(x_n))).
	\end{align*}	
	The second equality is the induction hypothesis for $h_1, \dots, h_k$. The third equality is true because the gate $\pi(g)$ computes $f$ applied to the outputs of the gates $\pi(h_1), \dots, \pi(h_k)$, and the order of the arguments of $f$ is irrelevant by symmetry. 
\end{proof}

\supportsizes*
\begin{proof}
	The first result is stated in \cite[Theorem 14]{dawar_symmetric_2024} for Boolean circuits symmetric under the action of $\Sym_n$ on variables $\Xx \coloneqq \{x_{ij} \mid i,j \in [n]\}$. The symmetric circuits we consider here can be viewed as circuits in the variables $\{x_{ii} \mid i \in [n]\} \subseteq \Xx$, and thus, \cite[Theorem 14]{dawar_symmetric_2024} also applies here (the particular operation type of the gates is irrelevant for this).
	
	With some more work, (2) follows from (1). Fix a gate $g \in V(C)$. We can assume that $g$ is an internal gate, as for input gates, there always exists a $B$-support of size $0$ or $1$, depending whether the index of the variable that $g$ is labelled with is in $B$ or not.  
	Let $B \in \Bb(\Tt_n^{\boldsymbol{k}})$. Define a new circuit $C_B$ from $C$ as follows.	 
	Remove all input variables $x_i$ such that $i \notin L_0(B)$. Then, for every $v \in B$, identify all input gates labelled with variables $x_i$, for every $i \in L_0(v)$. This leaves us with a circuit with one input variable for every $v \in B$. 
	By \cref{lem:rigidification}, we may again assume that it is rigid.  
	This is the circuit $C_B$. The original circuit $C$ is in particular symmetric under $\Stab(B)$, the setwise stabiliser of $B$ in $\Aut(\Tt_n^{\boldsymbol{k}})$; hence $C_B$ is also $\Stab(B)$-symmetric. The action of $\Stab(B)$ on $B$ is that of $\Sym(B)$, so $C_B$ can really be seen as a $\Sym(B)$-symmetric circuit.
	By the assumption on orbit size in (2), $\maxorb_{\Sym(B)}(C_B) \leq \binom{|B|}{k_B}$, and we have assumed that $1 \leq k_B \leq \frac{|B|}{4}$. Moreover, we are assuming that $|B| > 8$. Hence, (1) can be applied to $C_B$, where we just rename $B$ to $[n]$. This means that in $C_B$, the gate $g$ has a support $S \subseteq B$ of size at most $k_B$. We now show that this is also a $B$-support of $g$ in $C$.
	We have to show that \[
	\StabP_{\Sym(B)}(S) \leq \Stab_{\Aut(\Tt_n^{\boldsymbol{k}})}(g)|_B. 
	\]
	So let $\pi \in\StabP_{\Sym(B)}(S)$ and choose an arbitrary $\sigma \in \Aut(\Tt_n^{\boldsymbol{k}})$ that fixes $B$ setwise and permutes the elements of $B$ according to $\pi$. Since $C$ is $\Aut(\Tt_n^{\boldsymbol{k}})$-symmetric, $\sigma$ extends to a circuit automorphism $\theta_C$ of $C$. This also induces a circuit automorphism $\theta_{C_B} \in \Sym(V(C_B))$ of $C_B$, which behaves like $\theta_C$ on the internal gates (noting that except for the input gates and the connections to them, $C$ and $C_B$ are identical circuits). 
	Now by definition of support, $\theta_{C_B}$ fixes $g$: Indeed, $\theta_{C_B}$ acts on the inputs of $C_B$ as $\pi$, and $\pi$ fixes the support $S$ of $g$ pointwise. But since $\theta_{C_B}$ and $\theta_C$ agree on $g$, also $\theta_C(g) = g$. Hence, $\sigma \in \Stab_{\Aut(\Tt_n^{\boldsymbol{k}})}(g)$, and its restriction to $B$ is $\pi$, so $\pi \in \Stab_{\Aut(\Tt_n^{\boldsymbol{k}})}(g)|_B$, which is what we had to show.
\end{proof}

\end{document}